\DeclareMathOperator*{\argmin}{arg\,min}
\def\KL{\text{KL}}
\def\KSD{\text{KSD}}
\def\S{\mathcal{S}}
    \def\P{\mathbb{P}}
\def\B{\Theta}
\def\E{\mathbb{E}}
\def\Q{\mathbb{Q}}
\def\F{\mathcal{F}}
\def\U{\mathcal{H}}
\def\H{\mathcal{H}}
\def\N{\mathbb{N}}
\def\R{\mathbb{R}}
\def\X{\mathcal{X}}
\newtheorem{assumption}{Assumption}
\newtheorem{theorem}{Theorem}
\newtheorem{lemma}{Lemma}
\newtheorem{proposition}{Proposition}
\newtheorem{definition}{Definition}
\newtheorem{remark}{Remark}
\crefname{assumption}{assumption}{assumptions}
\newenvironment{talign*}
 {\csname align*\endcsname}
 {\endalign}
\title{Robust Generalised Bayesian Inference \\ for Intractable Likelihoods}
\author[1,4]{Takuo Matsubara}
\author[2,4]{Jeremias Knoblauch}
\author[3,4]{Fran\c{c}ois-Xavier Briol}
\author[1,4]{Chris. J. Oates}
\affil{Newcastle University, $^2$University of Warwick,}
\affil[3]{University College London, $^4$The Alan Turing Institute}
\begin{document}

\maketitle

\begin{abstract}

Generalised Bayesian inference updates prior beliefs using a loss function, rather than a likelihood, and can therefore be used to confer robustness against possible mis-specification of the likelihood. 
Here we consider generalised Bayesian inference with a Stein discrepancy as a loss function, motivated by applications in which the likelihood contains an intractable normalisation constant.
In this context, the Stein discrepancy circumvents evaluation of the normalisation constant and produces generalised posteriors that are either closed form or accessible using standard Markov chain Monte Carlo. 
On a theoretical level, we show consistency, asymptotic normality, and bias-robustness of the generalised posterior, highlighting how these properties are impacted by the choice of Stein discrepancy. 
Then, we provide numerical experiments on a range of intractable distributions, including applications to kernel-based exponential family models and non-Gaussian graphical models.

\vspace{5pt}
\noindent \textit{Keywords:} kernel methods, intractable likelihood, robust statistics, Stein's method
\end{abstract}


\section{Introduction} \label{sec:introduction}

A considerable proportion of statistical modelling deviates from the idealised approach of fine-tuned, expertly-crafted descriptions of real-world phenomena, in favour of default models fitted to a large dataset.
If the default model is a good approximation to the data-generating mechanism this strategy can be successful, but things can quickly go awry if the default model is misspecified. 
Generalised Bayesian updating \citep{Bissiri2016}, and in particular using divergence-based loss functions \citep{Jewson2018}, has been shown to mitigate some of the risks involved when working with a model that is misspecified. 
Unlike other robust modelling strategies, these methods do \textit{not} change the  statistical model. 
Instead, they change how the model's parameters are scored, affecting how ``good'' parameter values are discerned from ``bad'' ones.
This is a key practical advantage, as it implies that such strategies do not require precise knowledge about how the model is misspecified.
\textcolor{black}{This paper considers generalised Bayesian inference in the context of intractable likelihood.}
An \textit{intractable likelihood}, in this paper, takes the form $p_\theta(x) = q(x, \theta) / Z(\theta)$, where $q(x, \theta)$ is an analytically tractable function and $Z(\theta)$ is an \textit{intractable} normalising constant, each depending on the value of the unknown parameter $\theta$ of interest. 
Classical Bayesian posteriors resulting from intractable likelihood models are sometimes called \emph{doubly intractable}, due to the computational difficulties they entail \citep{Murray2006}.
For example, standard Markov chain Monte Carlo (MCMC) methods cannot be used in this setting, since they typically require explicit evaluation of the likelihood.
Doubly intractable posteriors  appear in many important statistical applications, including spatial models \citep{Besag1974, Besag1986, Diggle1990}, exponential random graph models \citep{Park2018}, \textcolor{black}{models for gene expression \citep{jiang2021bayesian}, and hidden Potts models for satellite data \citep{moores2020scalable}}.

This paper proposes the first generalised Bayesian approach to inference for models that involve an intractable likelihood.
To achieve this, we propose to employ a loss function based on a \textit{Stein discrepancy} \citep{Gorham2015}. 
As such, this research can be thought of as a Bayesian alternative to the minimum Stein discrepancy estimators of \citet{Barp2019}.
The methodology is developed for a particular Stein discrepancy called \textit{kernel} Stein discrepancy (\KSD), and we call the resulting generalised Bayesian approach \emph{KSD-Bayes}. 
It is shown in this paper that KSD-Bayes (1) provides robustness to misspecified likelihoods; (2) produces a generalised posterior that is tractable for standard MCMC, or even closed-form when an appropriate conjugate prior (which we identify) is used together with an exponential family likelihood; (3) satisfies several desirable theoretical properties, including a Bernstein--von Mises result which holds irrespective of whether the likelihood is correctly specified.
These results appear to represent a compelling case for the use of KSD-Bayes as an alternative to standard Bayesian inference with intractable likelihood.
However, KSD-Bayes is no panacea and caution must be taken to avoid certain pathologies of KSD-Bayes, which we highlight in \Cref{subsec: pathologies}.

The paper is structured as follows:
\Cref{sec:background} contains necessary background on generalised Bayesian inference, Stein discrepancy, and robustness in the Bayesian context.
\Cref{sec:methodology} presents the KSD-Bayes methodology, including conjugacy of the generalised posterior under an exponential family likelihood.
\Cref{sec:theory} elucidates the robustness and asymptotic properties of KSD-Bayes.
Guidance for practical application of KSD-Bayes is contained in \Cref{subsec: choice of k}.
The experimental results and empirical assessments are outlined in \Cref{sec:experiment}, and we draw our conclusions in \Cref{sec:conclusion}.
Code to reproduce all results in this paper can be downloaded from: \url{https://github.com/takuomatsubara/KSD-Bayes}.


\section{Background} \label{sec:background}

First we provide a short summary of generalised Bayesian inference and Stein discrepancies, putting in place a standing assumption on the domains in which data and parameters are contained:

\vspace{5pt}
\noindent \textbf{Standing Assumptions 1:}
The topological space $\X$, in which the data are contained, is locally compact and Hausdorff.
The set $\Theta \subseteq \R^p$, in which parameters are contained, is Borel.

\subsection{Notation} \label{sec:notation}

\textit{Measure theoretic notation:}
For a locally compact Hausdorff space such as $\X$, we let $\mathcal{P}(\X)$ denote the set of all Borel probability measures on $\X$. 
A point mass at $x$ is denoted $\delta_{x} \in \mathcal{P}(\X)$.
If $\mathcal{X}$ is equipped with a reference measure, then we abuse notation by writing $p \in \mathcal{P}(\X)$ to indicate that the distribution with p.d.f. $p$ is an element of $\mathcal{P}(\X)$.
For $\P \in \mathcal{P}(\X)$, we occasionally overload notation by denoting by $L^q(\X, \P)$ both the set of functions $f: \X \to \R$ for which $\| f \|_{L^q(\X, \P)} := ( \int_{\X} | f |^q \mathrm{d} \P )^{1/q} < \infty$ and the normed space in which two elements $f,g \in L^q(\X, \P)$ are identified if they are $\P$-almost everywhere equal. 
If $\P$ is a Lebesgue measure, we simply write $L^q(\X)$ instead of $L^q(\X, \P)$.
Let $\mathcal{P}_{\text{S}}(\R^d)$ be the set of all Borel probability measures $\P$ supported on $\R^d$, admitting an everywhere positive p.d.f. $p$ and continuous partial derivatives $x \mapsto (\partial/\partial x_{(i)}) p(x)$.

\textit{Real analytic notation:}
The Euclidean norm on $\R^d$ is denoted $\| \cdot \|_2$.
The set of continuous functions $f: \X \to \R$ is denoted $C(\X)$.
We denote by $C_b^{1}(\R^d)$ the set of functions $f: \R^d \to \R$ such that both $f$ and the partial derivatives $x \mapsto (\partial/\partial x_{(i)}) f(x)$ are bounded and continuous on $\R^d$.
We also denote by $C_b^{1,1}(\R^d \times \R^d)$ the set of bivariate functions $f: \R^d \times \R^d \to \R$ such that both $f$ and the partial derivatives $(x, x') \mapsto (\partial/\partial x_{(i)}) (\partial/\partial x'_{(j)}) f(x,x')$ are bounded and continuous on $\R^d \times \R^d$.
For an arbitrary set $\S(\X)$ of functions $f: \X \to \R$, denote by $\S(\X; \R^k)$ the set of $\R^k$-valued functions whose components belong to $\S(\X)$.
Let $\nabla$ and $\nabla \cdot$ be the gradient and the divergence operators \textcolor{black}{in} $\R^d$.
For functions with multiple arguments, we sometimes use subscripts to indicate the argument \textcolor{black}{to} which the operator is applied (e.g. $\nabla_x f(x,y)$).
For $f$ an $\R^d$-valued function, $[ \nabla f(x) ]_{(i,j)} := ( \partial / \partial x_{(i)} ) f_{(j)}(x)$ and $\nabla \cdot f(x) := \sum_{i=1}^{d} ( \partial / \partial x_{(i)} ) f_{(i)}(x)$.
For $f$ an $\R^{d \times d}$-valued function,  $[\nabla f(x)]_{(i,j,k)} := ( \partial / \partial x_{(i)} ) f_{(j,k)}(x)$ and $[ \nabla \cdot f(x) ]_{(i)} := \sum_{j=1}^{d} ( \partial / \partial x_{(j)} ) f_{(i,j)}(x)$.

\subsection{Generalised Bayesian Inference}
\label{sec:gen_bayes}

Consider a dataset consisting of independent random variables $\{x_i\}_{i=1}^n$ generated from $\P \in \mathcal{P}(\X)$, together with a statistical model $\P_\theta \in \mathcal{P}(\X)$ for the data, with p.d.f. $p_\theta$, indexed by a parameter of interest $\theta \in \Theta$.
\textcolor{black}{The Bayesian statistician elicits a prior $\pi \in \mathcal{P}(\Theta)$, which may reflect \textit{a priori} belief about the parameter $\theta \in \Theta$, and determines their \textit{a posteriori} belief according to }
\begin{align}
    \pi_n(\theta) \propto \pi(\theta) \prod_{i=1}^n p_\theta(x_i) = \pi(\theta) \exp\left\{ \sum_{i=1}^n \log p_\theta(x_i) \right\}. \label{eq:CBP}
\end{align}
In the \textit{M-closed} setting there exists $\theta_0 \in \Theta$ for which $\P = \P_{\theta_0}$, and the Bayesian update is optimal from an information-theoretic perspective \citep[see][]{Williams, Zellner1988}. 
Optimal processing of information is a desirable property, but in applications the assumption of adequate prior and model specification is often violated.
This has inspired several lines of research, including (but not limited to) strategies for the robust specification of prior belief
\citep{Berger1994}, the so-called \textit{safe Bayes} approach \citep{SafeLearning, SafeBayesian}, \textit{power posteriors} \citep[e.g.][]{holmes2017assigning}, \textit{coarsened posteriors} \citep{DunsonCoarsening} and Bayesian inference based on scoring rules \citep{Giummole2019}. 
A particularly versatile approach to robustness, which encompasses most of the above, is \textit{generalised Bayesian inference} \citep{Bissiri2016} \citep[see also the earlier work of][]{chernozhukov2003mcmc}. 
This approach constructs a distribution, denoted $\pi_n^L$, using a \textit{loss function} $L_n: \Theta \rightarrow \R$, which may be data-dependent, and a scaling parameter $\beta>0$, according to
\begin{align}
    \pi_n^{L}(\theta) \propto \pi(\theta) \exp \left\{-\beta n L_n(\theta) \right\} .
    \label{eq:gen-posterior}
\end{align}
The so-called \textit{generalised posterior} $\pi_n^L$ coincides with the Bayesian posterior $\pi_n$ when $\beta = 1$ and the loss function is the negative average log-likelihood; $L_n(\theta) = - \frac{1}{n} \sum_{i=1}^n\log p_\theta(x_i)$. 
As discussed in \cite{Knoblauch2019}, generalised Bayesian inference admits an optimisation-centric interpretation:
\begin{align}
\pi_n^{L} = \argmin_{\rho \in \mathcal{P}(\Theta)}\bigg\{ \beta n \ \mathbb{E}_{\theta \sim \rho}\left[ L_n(\theta) \right] + \KL(\rho \|\pi) \bigg\} \label{eq:gen-posterior-opt}
\end{align}
where $\text{KL}(\rho \| \pi)$ denotes the Kullback--Leibler (KL) divergence between two distributions $\rho,\pi \in \mathcal{P}(\Theta)$.
This perspective reveals that the standard Bayesian posterior is an implicit commitment to a particular loss function -- the negative log-likelihood -- and that the weighting constant $\beta$ controls the influence of this loss relative to the prior $\pi$.
In particular, under mild conditions $L_n(\theta) \stackrel{\text{a.s.}}{\rightarrow} \KL(\P\|\P_{\theta}) + C$ as $n \rightarrow \infty$, for a constant $C$ independent of $\theta$, which reveals that standard Bayesian posterior concentrates around the value of $\theta$ that minimizes the $\KL$ divergence between the data-generating distribution $\P$ and the model $\P_{\theta}$.
Outside of the M-closed setting such concentration is problematic, often leading to over-confident predictions \citep[]{Bernardo}.

The use of alternative, divergence-based loss functions has been demonstrated to mitigate the negative consequences of a misspecified statistical model, as pioneered in the work on $\alpha$- and $\beta$-divergences in \cite{hooker2014bayesian, Ghosh2016} and extended to $\gamma$-divergence in  \cite{Nakagawa2020}.
The properties of the divergence, including any potentially undesirable pathologies associated with it, determine the properties of the generalised posterior \citep{Jewson2018,Knoblauch2019}.
These compelling theoretical results have led to considerable interest in generalised Bayesian inference with divergence-based loss functions, yet the divergences that have been considered to-date cannot be computed in the important setting of intractable likelihood.

\subsection{Stein Discrepancy}

In an independent line of research, \textit{Stein discrepancies} were proposed in \cite{Gorham2015} to provide statistical divergences that are both computable and capable of providing various forms of distributional convergence control.
The approach is based on the method of \citet{stein1972bound}, which requires the identification of a linear operator $\S_\Q: \U \to L^1(\X, \Q)$, depending on a probability distribution $\Q \in \mathcal{P}(\X)$ and acting on a Banach space $\U$, such that
\begin{align}
\E_{X \sim \Q}[\S_\Q [h](X)] = 0 \quad \forall h \in \U. \label{eq:stein_identity}
\end{align}
Such an operator $\S_\Q$ is called a \emph{Stein operator} and $\H$ is called a \textit{Stein set}.
Given a distribution $\Q \in \mathcal{P}(\X)$, there are infinitely many operators $\S_\Q$ satisfying \eqref{eq:stein_identity}.
A convenient example is the \emph{Langevin Stein operator} \citep{Gorham2015}, defined for $\X = \R^d$, $\Q \in \mathcal{P}_{\text{S}}(\R^d)$ and a Banach space $\U$ of differentiable functions $h : \R^d \rightarrow \mathbb{R}^d$, as
\begin{align}
\S_\Q [h](x) = h(x) \cdot \nabla \log q(x) + \nabla \cdot h(x)  \label{eq:stein_operator}
\end{align}
where $q$ is the p.d.f. of $\Q$.
Under suitable regularity conditions on $\nabla \log q$ and $\U$, the Langevin Stein operator satisfies Equation \ref{eq:stein_identity}; see \citet[Proposition 1]{Gorham2015}.
Given $\P,\Q \in \mathcal{P}(\mathcal{X})$ and a Stein operator $\S_\Q : \U \rightarrow L^1(\X,\Q)$ whose image is contained in $L^1(\X,\P)$, the \textit{Stein discrepancy} (SD) 
is defined as 
\begin{align}
\operatorname{SD}(\Q \| \P) := \sup_{\| h \|_\U \le 1} \Big| \E_{X \sim \P}\left[ \S_\Q[h](X) \right] - \E_{X \sim \Q}\left[ \S_\Q[h](X) \right] \Big| = \sup_{\| h \|_\U \le 1} \Big| \E_{X \sim \P}\left[ \S_\Q[h](X) \right] \Big|, \label{eq:SD}
\end{align}
where the last equality follows directly from \eqref{eq:stein_identity}.
Under mild assumptions, SD defines a statistical divergence between two probability distributions $\P,\Q \in \mathcal{P}(\X)$, meaning that $\text{SD}(\Q \| \P) \geq 0$ with equality if and only if $\P = \Q$; see Proposition 1 and Theorem 2 in \cite{Barp2019}. 
Under slightly stronger assumptions SD provides convergence control, meaning that a sequence $(\Q_n)_{n=1}^\infty \subset \mathcal{P}(\X)$ converges in a specified sense to $\Q$ whenever $\text{SD}(\Q \| \Q_n) \rightarrow 0$; see \citet[][Theorem 2, Proposition 3]{Gorham2015} and \citet[][Theorem 8, Proposition 9]{Gorham2017}. 
An important property of SDs that we exploit in this work is that, unlike other divergences, SDs can often be computed with an un-normalised representation of $\Q$.
For example, the Stein operators in \eqref{eq:stein_operator} depend on $\Q$ only through $\nabla \log q$, which can be computed when $q$ is provided in a form that involves an intractable normalisation constant.
The suitability of SD for use in generalised Bayesian inference has not previously been considered, and this is our focus next.


\section{Methodology} \label{sec:methodology}

Highly structured data, or data belong to a high-dimensional domain $\mathcal{X}$, are often associated with an intractable likelihood.
Moreover, the difficulty of modelling such data means that models will typically be misspecified.
Thus there is a pressing need for Bayesian methods that are both robust and compatible with intractable likelihood.
To this end, in \Cref{subsec: sd bayes} we introduce \emph{SD-Bayes}, a generalised Bayesian procedure with a loss function based on SD.
There are numerous SDs that can be considered, and in \Cref{subsec: kernel SDs} we focus in detail on KSD due to the possibility of performing fully conjugate inference in the context of exponential family models, as described in \Cref{subsec: exp fam mod}.
\textcolor{black}{Non-conjugate inference and its computational cost are discussed in \Cref{subsec: computation}.}
However, all statistical divergences have their pathologies, and one must bear in mind the pathologies of KSD when using KSD-Bayes; see the discussion in \Cref{subsec: pathologies}.

\subsection{SD-Bayes} \label{subsec: sd bayes}

Suppose we are given a prior p.d.f. $\pi \in \mathcal{P}(\Theta)$ and a statistical model $\{ \P_{\theta} \mid \theta \in \Theta \} \subset \mathcal{P}(\X)$.
Let $\{ x_i \}_{i=1}^{n}$ be independent observations generated from $\P \in \mathcal{P}(\X)$ and let $\P_n := \frac{1}{n} \sum_{i=1}^n \delta_{x_i}$ be the empirical measure associated to this dataset.
In this context, the SD-Bayes generalised posterior can now be defined:

\begin{definition}[SD-Bayes]
\label{def:KSD_Bayes} 
	For each $\theta \in \Theta$, select  a Stein operator  $\S_{\P_{\theta}}$ and denote the associated Stein discrepancy $\operatorname{SD}(\P_\theta \| \cdot)$.
	Let $\beta \in (0,\infty)$.
	Then the SD-Bayes generalised posterior is defined as
	\begin{align}
	\pi_n^D(\theta) \propto \pi(\theta) \exp\left\{- \beta n \operatorname{SD}^2(\P_{\theta} \| \P_n) \right\}
    \label{eq: sd bayes def}
	\end{align}
	where $\theta \in \Theta$.
\end{definition}

\noindent Here the `$D$' superscript stands for \emph{discrepancy}.
Comparing \eqref{eq: sd bayes def} to \eqref{eq:gen-posterior} confirms that SD-Bayes is a generalised Bayesian method with loss function $L_n(\theta) = \text{SD}^2(\P_\theta \| \P_n)$.
\textcolor{black}{There is an arbitrariness to using squared discrepancy, as opposed to another power of the discrepancy, but this choice turns out to be appropriate for the discrepancies considered in \Cref{subsec: kernel SDs}, ensuring that fluctuations of $L_n(\theta)$ about its expectation are $\mathcal{O}(n^{-1/2})$, analogous to the standard Bayesian loss, and permitting tractable computation (\Cref{subsec: exp fam mod}) and analysis (\Cref{sec:theory}). }
A discussion of how the weight $\beta$ should be selected is deferred until after our theoretical analysis, in \Cref{subsec: choice of k}.

\subsection{KSD-Bayes}
\label{subsec: kernel SDs}

Compared to other Stein discrepancies, KSDs are attractive because they enable the supremum in \eqref{eq:SD} to be be explicitly computed.
To define KSD, we require the concept of a (matrix-valued) \textit{kernel} $K: \X \times \X \to \R^{d \times d}$; the precise definition is contained in \Cref{sec:kernel_RKHS}.
For our purposes in the main text, it suffices to point out that any kernel $K$ \textcolor{black}{has a} uniquely associated Hilbert space of functions $f: \X \to \R^d$, called a \emph{vector-valued reproducing kernel Hilbert space} (v-RKHS).
This v-RKHS constitutes the Stein set in KSD, and we therefore denote this v-RKHS as $\H$.
The associated norm and inner product will respectively be denoted $\| \cdot \|_{\H}$ and $\langle \cdot, \cdot \rangle_{\H}$.

Let $\S_\Q$ be a Stein operator and denote the action of $\S_\Q$ on both the first and second argument\footnote{More precisely, denoting the $j$-th column of $K(x, x') \in \R^{d \times d}$ by $K_{-,j}(x, x') \in \R^d$, we define $\S_\Q K(x, x') := [\S_\Q K_{-,1}(x, x'), \dots, \S_\Q K_{-,d}(x, x')] \in \R^d$ where $\S_\Q K_{-,j}(x, x') := \S_\Q[ K_{-,j}(\cdot, x') ](x)$ is an action of $\S_\Q$ for the $\R^d$-valued function $K_{-,j}(\cdot, x')$ at each $x' \in \X$.
We further define  $\S_\Q \S_\Q K(x,x') := \S_\Q[\ \S_\Q K(x, \cdot)\ ](x')$ as an action of $\S_\Q$ for the $\R^d$-valued function $\S_\Q K(x, \cdot)$ at each $x \in \X$.} of a kernel $K$ as $\S_\Q \S_\Q K$.
The following result is a generalisation of the original construction of KSD \citep{Chwialkowski2016a,Liu2016b} to general Stein operators. 

\begin{assumption} \label{asmp:derivation_KSD}
	Let $\H$ be a v-RKHS with kernel $K: \X \times \X \to \mathbb{R}^{d \times d}$. 
	For $\Q \in \mathcal{P}(\X)$, let $\S_\Q$ be a Stein operator with domain $\H$.
	For each fixed $x \in \X$, we assume $h \mapsto \S_\Q[h](x)$ is a continuous linear functional on $\H$.
	Further, we assume that $\E_{X \sim \P}\left[ \S_\Q \S_\Q K(X,X) \right] < \infty$.
\end{assumption}

\begin{proposition}[Closed form of SD] \label{prop:derivation_KSD}
	Under \Cref{asmp:derivation_KSD}, we have
	\begin{align*}
	\operatorname{SD}^2(\Q \| \P) =
	\operatorname{KSD}^2(\Q \| \P)
	:= \E_{X,X' \sim \P}\left[ \S_\Q \S_\Q K(X,X') \right] 
	\end{align*}
	where $X$ and $X'$ are independent.
\end{proposition}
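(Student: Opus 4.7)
The plan is to apply the Riesz representation theorem to the continuous linear functional $h \mapsto \S_\Q[h](x)$, represent this functional as an inner product against a specific element $g_x \in \H$, identify $g_x$ explicitly in terms of the kernel $K$, and then compute the supremum in \eqref{eq:SD} via Cauchy--Schwarz.

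First, by \Cref{asmp:derivation_KSD} and Riesz representation, for each $x \in \X$ there exists a unique $g_x \in \H$ such that $\S_\Q[h](x) = \langle h, g_x \rangle_\H$ for all $h \in \H$. To identify $g_x$, I would specialise this identity to $h = K(\cdot, x') v$ for arbitrary $x' \in \X$ and $v \in \R^d$. The reproducing property of the v-RKHS yields $\langle K(\cdot, x') v, g_x \rangle_\H = v^\top g_x(x')$, while linearity of $\S_\Q$ together with the footnote convention gives $\S_\Q[K(\cdot, x') v](x) = v^\top \S_\Q K(x, x')$. Matching these and using that $v$ is arbitrary yields $g_x(\cdot) = \S_\Q K(x, \cdot)$ as an element of $\H$.

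Next, I would verify Bochner integrability of $X \mapsto g_X$. Applying the Riesz identity to $h = g_x$ itself gives $\|g_x\|_\H^2 = \S_\Q[g_x](x) = \S_\Q \S_\Q K(x, x)$, so the moment assumption in \Cref{asmp:derivation_KSD} and Jensen's inequality together imply $\E_{X \sim \P}[\|g_X\|_\H] < \infty$. This permits the interchange $\E_{X \sim \P}[\S_\Q[h](X)] = \langle h, \E_{X \sim \P}[g_X] \rangle_\H$. Taking the supremum over $\|h\|_\H \le 1$ and using Cauchy--Schwarz (with equality attained at $h = \E_\P[g_X] / \|\E_\P[g_X]\|_\H$) yields
\begin{equation*}
\operatorname{SD}(\Q \| \P) = \bigl\| \E_{X \sim \P}[g_X] \bigr\|_\H .
\end{equation*}
Squaring this identity, expanding the squared norm as a double inner product, and invoking Fubini once more (again justified by the moment bound), together with a final application of the Riesz identity to $g_{X'}$, I obtain
\begin{equation*}
\operatorname{SD}^2(\Q \| \P) = \E_{X, X' \sim \P}\bigl[ \langle g_X, g_{X'} \rangle_\H \bigr] = \E_{X, X' \sim \P}\bigl[ \S_\Q \S_\Q K(X, X') \bigr],
\end{equation*}
which is the claimed formula.

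The main obstacle will be the clean identification $g_x(\cdot) = \S_\Q K(x, \cdot)$ in the matrix-valued setting, since one must handle the column-wise action of $\S_\Q$ on $K$ in the sense of the footnote and check it is compatible with the vector reproducing property. The remaining steps are standard Hilbert-space manipulations, with all integrability requirements supplied by the moment condition $\E_{X \sim \P}[\S_\Q \S_\Q K(X,X)] < \infty$ via the key identity $\|g_x\|_\H^2 = \S_\Q \S_\Q K(x,x)$.
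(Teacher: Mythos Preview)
Your proposal is correct and follows essentially the same route as the paper: Riesz representation to obtain $g_x$, identification $g_x = \S_\Q K(x,\cdot)$ via the reproducing property, Bochner integrability from $\|g_x\|_\H^2 = \S_\Q\S_\Q K(x,x)$ and Jensen, then Cauchy--Schwarz to evaluate the supremum and expansion of the squared norm. The paper merely packages the first three steps into a separate preliminary lemma (\Cref{lem:ub_kd2}) before the main proof, and performs the identification by applying the matrix-valued reproducing property directly to $g_x$ rather than testing against $h = K(\cdot,x')v$, but these are cosmetic differences.
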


The proof is in \Cref{apx:proof_KSD_derivation}. 
Note that it is straightforward to verify the assumption that $h \mapsto \S_\Q[h](x)$ is a continuous linear functional for each fixed $x \in \X$ once the form of $\S_\Q$ is specified;
see \Cref{sec:verify_asmp_1}.
KSD is attractive for SD-Bayes since it enables the generalised posterior in \Cref{def:KSD_Bayes} to be explicitly computed:
\begin{IEEEeqnarray}{rCl}
\operatorname{KSD}^2(\P_\theta \| \P_n) = \frac{1}{n^2} \sum_{i=1}^{n} \sum_{j=1}^{n} \S_{\P_\theta}\S_{\P_\theta} K(x_i,x_j) .
\label{eq:empirical_KSD}
\end{IEEEeqnarray}
The resulting generalised posterior will be referred to as \textit{KSD-Bayes} in the sequel.
The explicit form of $\S_{\P_\theta} \S_{\P_\theta} K$ depends on $\S_{\P_\theta}$. 
The case of $\X = \R^d$ and the Langevin Stein operator in \eqref{eq:stein_operator} is given by
\begin{align}
\S_{\P_\theta} \S_{\P_\theta} K(x,x') & = \nabla \log p_\theta(x) \cdot K(x,x') \nabla \log p_\theta(x') + \nabla_{x} \cdot \left( \nabla_{x'} \cdot K(x, x') \right) \nonumber \\
& \hspace{30pt} + \nabla \log p_\theta(x) \cdot \left( \nabla_{x'} \cdot K(x, x') \right) + \nabla \log p_\theta(x') \cdot \left( \nabla_{x} \cdot K(x, x') \right)  \label{eq:langevin_k0}
\end{align}
where $p_\theta$ is a p.d.f. for $\P_\theta \in \mathcal{P}_{\text{S}}(\R^d)$.
Clearly, this expression is straightforward to evaluate\footnote{For maximum clarity, the vector calculus notation is expanded as follows: 
\begin{align*}
    \S_{\P_\theta} \S_{\P_\theta}K(x,x') & = \sum_{i,j=1}^{d} \frac{\partial}{\partial x_{(i)}} \log p_\theta(x) \left[ K(x, x') \right]_{(i,j)} \frac{\partial}{\partial x_{(j)}} \log p_\theta(x) + \frac{\partial^2}{\partial x_{(i)} \partial x'_{(j)}} \left[ K(x, x') \right]_{(i,j)} \nonumber \\
& \hspace{60pt} + \frac{\partial}{\partial x_{(i)}} \log p_\theta(x) \frac{\partial}{\partial x'_{(j)}} \left[ K(x, x') \right]_{(i,j)} + \frac{\partial}{\partial x'_{(j)}} \log p_\theta(x') \frac{\partial}{\partial x_{(i)}} \left[ K(x, x') \right]_{(i,j)}
\end{align*}} whenever we have access to derivatives of the kernel and the log density. 
If the derivatives are analytically intractable, the expression above is amenable to the use of automatic differentiation tools \citep{Baydin2018}. 

Whether KSD-Bayes is reasonable or not hinges crucially on whether KSD is a meaningful way to quantify the difference between the discrete distribution $\P_n$ and the parametric model $\P_\theta$.
Sufficient conditions for convergence control have been established for the Langevin Stein operator, under which the convergence of $\operatorname{KSD}(\P_\theta \| \P_n)$ implies the weak convergence of $\P_n$ to $\P_\theta$ \cite[Theorem~8]{Gorham2017}.
This provides some preliminary assurance that KSD-Bayes may work; we present formal theoretical guarantees in \Cref{sec:theory}.
These theoretical results motivate specific choices of $K$ for use in KSD-Bayes, which we discuss in \Cref{subsec: choice of k}.

\subsection{Conjugate Inference for Exponential Family Models} \label{subsec: exp fam mod}

The generalised posterior can be exactly computed in the case of an natural exponential family model when a conjugate prior is used.
Let $\eta: \Theta \to \R^k$ and $t: \X \to \R^k$ be any sufficient statistic for some $k \in \mathbb{N}$ and let $a: \Theta \to \R$ and $b: \X \to \R$.
An exponential family model has p.m.f. or p.d.f. (with respect to an appropriate reference measure on $\X$) of the form
\begin{align}
p_\theta(x) = \exp( \eta(\theta) \cdot t(x) - a(\theta) + b(x) ). \label{eq:expfam}
\end{align} 
This includes a wide range of distributions with an intractable normalisation constant $\exp( a(\theta) )$, used in statistical applications such as random graph estimation \citep{yang2015graphical}, spin glass models \citep{Besag1974} and the kernel exponential family model \citep{canu2006kernel}.
The model in \eqref{eq:expfam} is called \emph{natural} when the canonical parametrisation $\eta(\theta) = \theta$ is employed.

\begin{proposition} \label{prop:post_expfam}
	Consider $\X = \R^d$ and the Langevin Stein operator $\S_{\P_\theta}$ in \eqref{eq:stein_operator}, where $\P_\theta$ is the exponential family in \eqref{eq:expfam}, and a kernel $K \in C_b^{1,1}(\R^d \times \R^d; \R^{d \times d})$.
	Assuming the prior has a p.d.f. $\pi$, the KSD-Bayes generalised posterior has a p.d.f.
	\begin{align*}
	\pi_n^D(\theta) \propto \pi(\theta) \exp\left( - \beta n \{ \eta(\theta) \cdot \Lambda_{n} \eta(\theta) + \eta(\theta) \cdot \nu_{n} \} \right) ,
	\end{align*}
	where $\Lambda_{n} \in \R^{k \times k}$ and $\nu_{n} \in \R^{k}$ are defined as
	\begin{align*}
	\Lambda_{n} & := \frac{1}{n^2} \sum_{i,j=1}^{n}  \nabla t(x_i) \cdot K(x_i, x_j) \nabla t(x_j) , \\
	\nu_n & := \frac{1}{n^2} \sum_{i,j=1}^{n} \nabla t(x_i) \cdot \big( \nabla_{x_j} \cdot K(x_i, x_j) \big) + \nabla t(x_j) \cdot \big( \nabla_{x_i} \cdot K(x_i, x_j) \big) + 2 \nabla t(x_i) \cdot K(x_i, x_j) \nabla b(x_j) .
	\end{align*}
	For a natural exponential family we have $\eta(\theta) = \theta$, and the prior $\pi(\theta) \propto \exp(- \frac{1}{2} (\theta - \mu) \cdot \Sigma^{-1} (\theta - \mu))$ leads to a generalised posterior
	\begin{align*}
	\pi_n^D(\theta) \propto \exp\left( - \frac{1}{2} ( \theta - \mu_n ) \cdot \Sigma_n^{-1} ( \theta - \mu_n )\right) ,
	\end{align*}
	where $\Sigma_n^{-1} := \Sigma^{-1} + 2 \beta n \Lambda_n$ and $\mu_n := \Sigma_n^{-1} ( \Sigma^{-1} \mu - \nu_n )$.
\end{proposition}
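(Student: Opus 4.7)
The plan is to apply \Cref{prop:derivation_KSD} to expand $\operatorname{KSD}^2(\P_\theta \| \P_n)$ as the double sum in \eqref{eq:empirical_KSD}, and then to exploit the fact that for the exponential family \eqref{eq:expfam} the score $\nabla \log p_\theta(x) = \nabla t(x)\, \eta(\theta) + \nabla b(x)$ is \emph{affine} in $\eta(\theta)$ (the log-normaliser $-a(\theta)$ has no $x$-dependence and vanishes under $\nabla = \nabla_x$). The assumption $K \in C_b^{1,1}(\R^d \times \R^d; \R^{d \times d})$ guarantees that $h \mapsto \S_{\P_\theta}[h](x)$ is a continuous linear functional on $\H$ at each $x$ (so \Cref{asmp:derivation_KSD} holds for every $\theta$) and that each summand $\S_{\P_\theta}\S_{\P_\theta} K(x_i,x_j)$ of the finite double sum is well-defined, so \Cref{prop:derivation_KSD} applies.

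Substituting the affine decomposition of $\nabla \log p_\theta$ into \eqref{eq:langevin_k0} and sorting by degree in $\eta(\theta)$ yields three groups of contributions: a quadratic piece whose integrand is $\eta(\theta) \cdot \nabla t(x)^\top K(x,x') \nabla t(x')\, \eta(\theta)$; four linear pieces, namely the two $\nabla t$--$\nabla b$ cross-terms produced by the bilinear expansion on the first line of \eqref{eq:langevin_k0}, together with the two Stein cross-terms $\nabla t(x)\eta(\theta) \cdot (\nabla_{x'} \cdot K(x,x'))$ and $\nabla t(x')\eta(\theta) \cdot (\nabla_x \cdot K(x,x'))$; and a $\theta$-independent remainder built from $\nabla b$ and derivatives of $K$, which will ultimately be absorbed into the normalising constant.

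Summing over $(i,j) \in \{1,\dots,n\}^2$ and using the kernel symmetry $K(x,x') = K(x',x)^\top$ together with the $i \leftrightarrow j$ relabelling symmetry of the double sum, the two $\nabla t$--$\nabla b$ cross-terms coalesce into $2 \sum_{i,j} \nabla t(x_i) \cdot K(x_i, x_j) \nabla b(x_j)$, which accounts for the factor of $2$ in the stated $\nu_n$. The two Stein cross-terms produce the remaining two summands in $\nu_n$, and these do not collapse further because $\nabla_x$ and $\nabla_{x'}$ act on distinct arguments. Collecting yields $\operatorname{KSD}^2(\P_\theta \| \P_n) = \eta(\theta) \cdot \Lambda_n \eta(\theta) + \eta(\theta) \cdot \nu_n + C_n$ with $C_n$ independent of $\theta$, which proves the first claim upon absorbing $\exp(-\beta n C_n)$ into the proportionality constant.

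For the natural parametrisation $\eta(\theta) = \theta$, substituting into the preceding expression and adding the Gaussian log-prior $-\tfrac{1}{2}(\theta - \mu) \cdot \Sigma^{-1}(\theta - \mu)$ gives a negative-definite quadratic form in $\theta$; completing the square in the standard way reads off the precision $\Sigma_n^{-1} = \Sigma^{-1} + 2 \beta n \Lambda_n$ and the posterior mean $\mu_n$. The main obstacle is purely clerical: carefully tracking the four linear contributions through the two symmetrisations (the transpose symmetry of $K$ and the $i \leftrightarrow j$ relabelling of the double sum) to produce the factor of $2$ in $\nu_n$, and verifying that the two Stein cross-terms remain distinct summands.
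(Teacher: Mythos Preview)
Your proposal is correct and follows essentially the same route as the paper's proof: substitute the affine score $\nabla \log p_\theta(x) = \nabla t(x)\eta(\theta) + \nabla b(x)$ into \eqref{eq:langevin_k0}, group by degree in $\eta(\theta)$, use the symmetry of $K$ together with the $i\leftrightarrow j$ relabelling to merge the two $\nabla t$--$\nabla b$ cross-terms into a single term with coefficient $2$, and absorb the $\theta$-free remainder into the normaliser. The paper organises the same computation by labelling the three $\theta$-dependent pieces of \eqref{eq:langevin_k0} as $(*_1),(*_2),(*_3)$ and expanding each separately, but the algebra and the use of symmetry are identical to what you describe.
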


The proof is in \Cref{apx:proof_post_expfam}.
That the Gaussian distribution will be conjugate in KSD-Bayes, even in the presence of intractable likelihood, is remarkable and notably different from the classical Bayesian case, \textcolor{black}{albeit at a $\mathcal{O}(n^2)$ computational cost}.
\textcolor{black}{Strategies to further reduce this computational cost are discussed in \Cref{subsec: computation}.}
It is well known that certain minimum discrepancy estimators, such as the \emph{score matching estimator} \citep{Hyvarinen2005} and the \emph{minimum KSD estimator} \citep{Barp2019}, have closed forms in the case of an exponential family models; it is similar reasoning that has led us to \Cref{prop:post_expfam}.

\color{black}
\subsection{Non-Conjugate Inference and Computational Cost}
\label{subsec: computation}

To access the generalised posterior in the non-conjugate case, existing MCMC algorithms for \textit{tractable} likelihood can be used.
The per-iteration computational cost appears to be $\mathcal{O}(n^2)$ since, for each state $\theta$ visited along the sample path, the KSD in \eqref{eq:empirical_KSD} must be evaluated.
However, various strategies enable this computational cost to be mitigated.
For concreteness of the discussion that follows, we consider the Langevin Stein operator, for which 
\begin{align*}
	\eqref{eq:empirical_KSD} \stackrel{+C}{=} \frac{1}{n^2} \sum_{i=1}^{n} \sum_{j=1}^{n} \left\{ \begin{array}{l} \nabla \log p_\theta(x_i) \cdot K(x_i, x_j) \nabla \log p_\theta(x_j) + \nabla \log p_\theta(x_i) \cdot \nabla_{x_j} \cdot K(x_i, x_j)  \\
	\qquad + \nabla \log p_\theta(x_j) \cdot \nabla_{x_i} \cdot K(x_i, x_j)  \end{array} \right\}
\end{align*}
where the equality holds up to a $\theta$-independent constant.

\vspace{5pt}
\noindent
\textbf{Memoisation:} 
The above expression depends on $\theta$ only through the terms $\{ \nabla \log p_\theta(x_i) \}_{i=1}^{n}$, of which there are $\mathcal{O}(n)$, while all other terms involving $K$, of which there are $\mathcal{O}(n^2)$, can be computed once and memoised.
The double summation still necessitates $\mathcal{O}(n^2)$ computational cost but this operation is \textit{embarrassingly parallel}.

\vspace{5pt}
\noindent
\textbf{Finite rank kernel:}
Computational cost can be reduced from $\mathcal{O}(n^2)$ to $\mathcal{O}(n)$ using a finite rank kernel. 
A useful and important example is the rank one kernel $K(x,x') =  I_{d}$, which reduces \eqref{eq:empirical_KSD} to
\begin{equation*} 
	\eqref{eq:empirical_KSD} \stackrel{+C}{=} \left\| \frac{1}{n} \sum_{i=1}^{n} \nabla \log p_\theta(x_i)  \right\|^2 
\end{equation*}
and is closely related to divergences used in \emph{score matching} \citep{Hyvarinen2005}.
Random finite rank approximations of the kernel can also considered in this context \citep{huggins2018random}.

\vspace{5pt}
\noindent
\textbf{Stochastic approximation:}
The construction of low-cost unbiased estimators for \eqref{eq:empirical_KSD} is straight-forward via sampling \textit{mini-batches} from the dataset.
This enables a variety of exact and approximate algorithms for posterior approximation to be exploited \citep[e.g.][]{Ma2015}.
Alternatively, \citet{huggins2018random,gorham2020stochastic} argued for stochastic approximations of KSD that could be used.

\color{black}

\subsection{Limitations of KSD-Bayes} \label{subsec: pathologies}

A divergence $\text{D}(\Q || \P)$ induces an information geometry \citep{amari1997information}, encoding a particular sense in which $\Q$ can be considered to differ from $\P$.
As such, all divergence exhibit \emph{pathologies}, meaning that certain characteristics that distinguish $\Q$ from $\P$ are less easily detected.
A documented pathology of gradient-based discrepancies, including the Langevin KSD, is their insensitivity to the existence of high-probability regions which are well-separated; see \citet[Section 5.1][]{Gorham2019} and \citet{wenliang2020blindness}.
To see this, consider a Gaussian mixture model 
\begin{align}
	p_\theta(x) = \frac{\theta}{\sqrt{2 \pi}} \exp \left( - \frac{(x - \mu)^2}{2} \right) + \frac{(1 - \theta)}{\sqrt{2 \pi}} \exp\left( - \frac{(x + \mu)^2}{2} \right)  \label{eq: gmm}
\end{align}
where $\theta \in [0, 1]$ specifies the mixture ratio and $\mu \in \R$ controls the separation between the two components.
If the two components are well-separated i.e. $\mu \gg 1$, the gradient $\nabla \log p_\theta$ becomes insensitive to $\theta$ and hence a gradient-based divergence such as KSD will be insensitive to $\theta$, as demonstrated in \Cref{fig:score_mm_example}. 
For this reason, caution is warranted when gradient-based discrepancies are used.
However, in practice direct inspection of the dataset and knowledge of how $\P_\theta$ is parametrised can be used to ascertain whether either distribution is multi-modal.
Our applications in \Cref{sec:experiment} are not expected to be multi-modal (with the exception of the kernel exponential family in \Cref{subsec: kernel family} which was selected to demonstrate the insensitivity to mixing proportions of KSD-Bayes).

\begin{figure}[t!]
	\centering
	\subcaptionbox{$\theta=0.2, \mu = 5$}{\includegraphics[width = 0.24\textwidth]{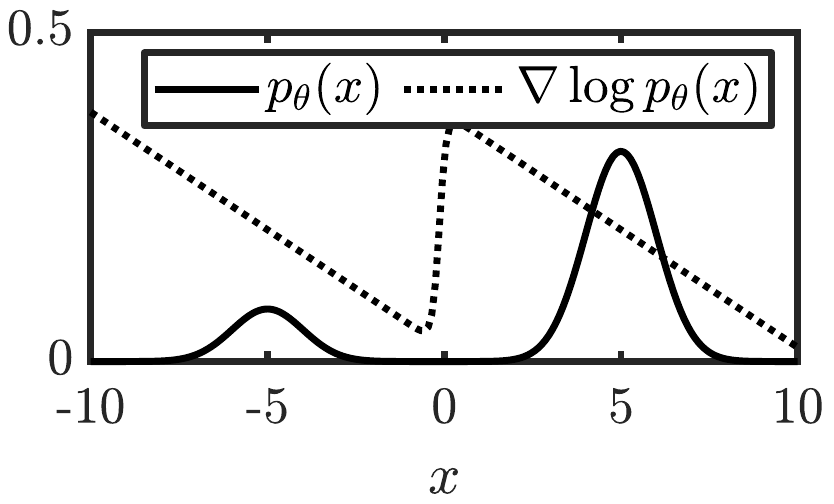}}
	\subcaptionbox{$\theta=0.5, \mu = 5$}{\includegraphics[width = 0.24\textwidth]{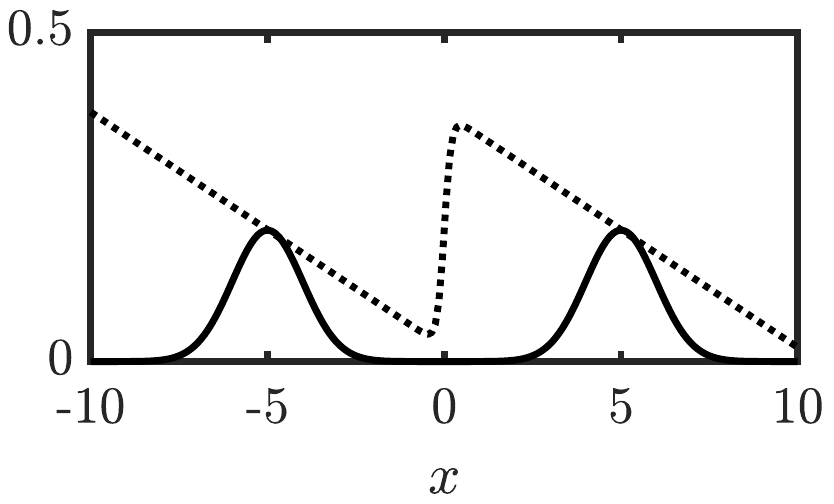}}
	\subcaptionbox{$\theta=0.8, \mu = 5$}{\includegraphics[width = 0.24\textwidth]{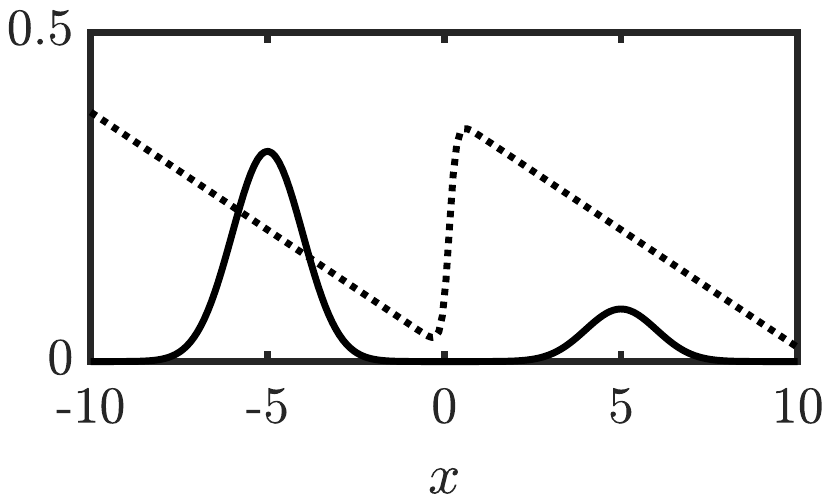}}
	\subcaptionbox{$\operatorname{KSD}^2$ for $\mu = 5$}{\includegraphics[width = 0.24\textwidth]{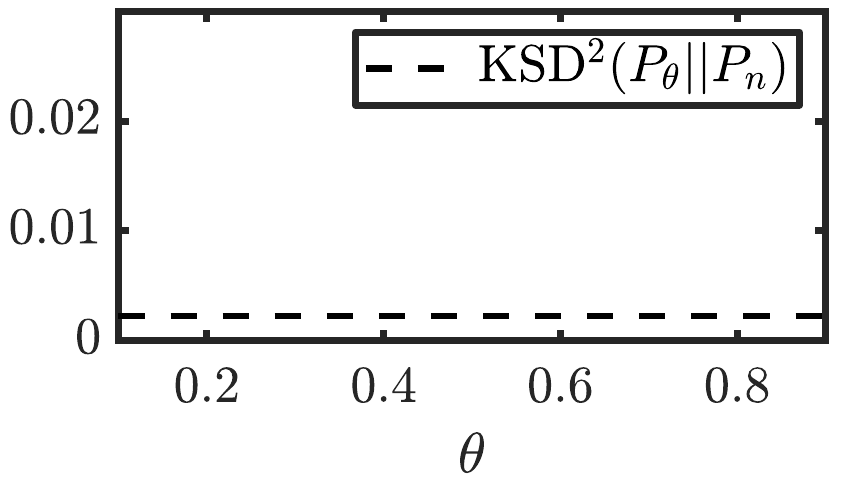}}
	\subcaptionbox{$\theta=0.2, \mu = 2$}{\includegraphics[width = 0.24\textwidth]{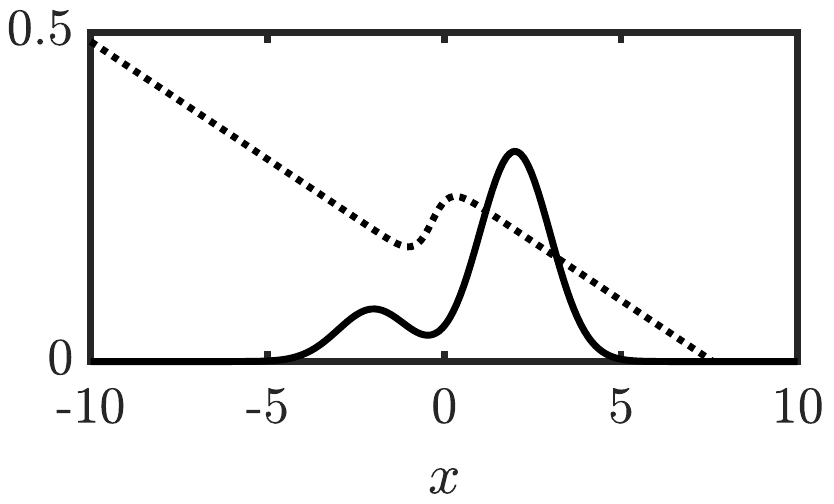}}
	\subcaptionbox{$\theta=0.5, \mu = 2$}{\includegraphics[width = 0.24\textwidth]{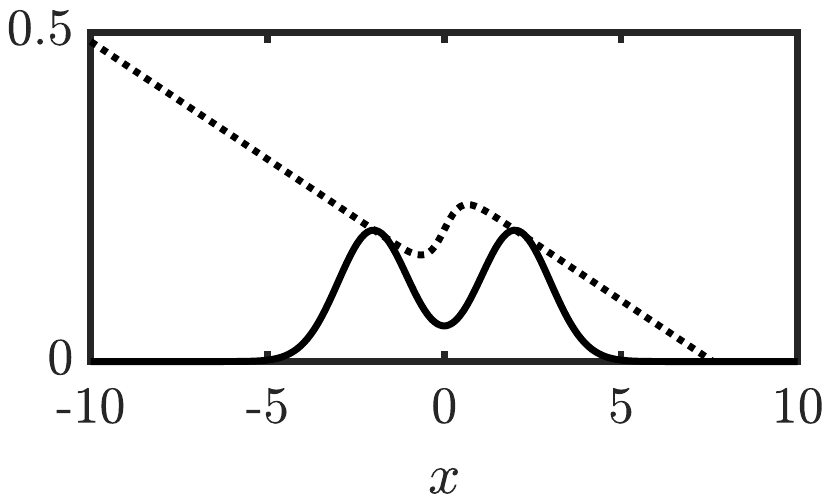}}
	\subcaptionbox{$\theta=0.8, \mu = 2$}{\includegraphics[width = 0.24\textwidth]{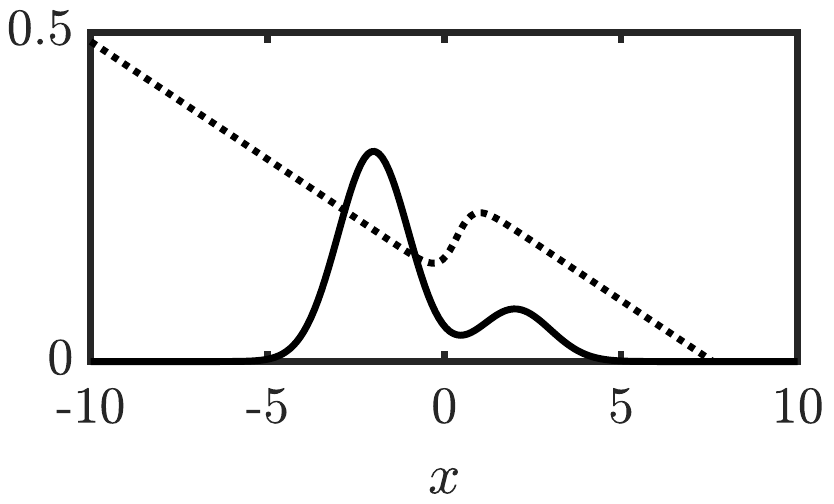}}
	\subcaptionbox{$\operatorname{KSD}^2$ for $\mu = 2$}{\includegraphics[width = 0.24\textwidth]{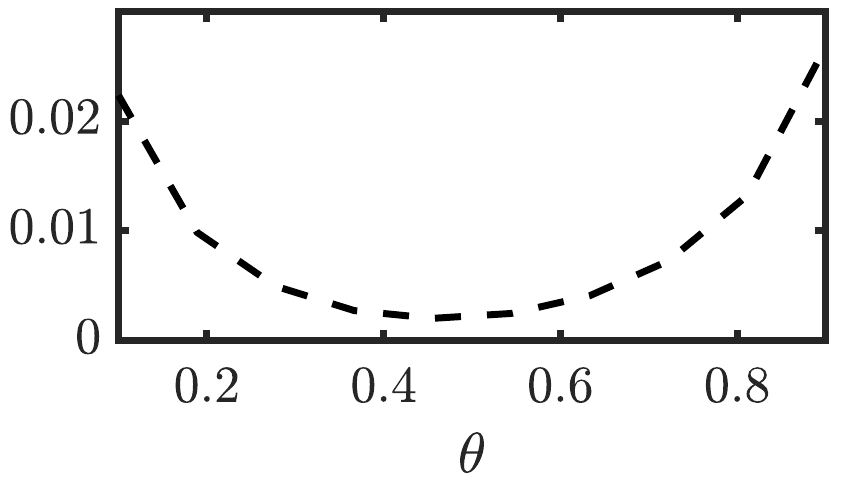}}
	\caption{Illustrating the insensitivity to mixture proportions of KSD.
	Panels (a-c,e-g) display the density function $p_\theta(x)$ from \eqref{eq: gmm} together with the gradient $\nabla \log p_\theta(x)$, the latter rescaled to fit onto the same plot. 
	Panels (d,h) display the discrepancy $\KSD^2(\P_\theta \| \P_n)$, where $\P_n$ is an empirical distribution of $n=1000$ samples from the model with $\theta=0.5$. 
	}
	\label{fig:score_mm_example}
\end{figure}

A second limitation of KSD-Bayes is non-invariance to a change of coordinates in the dataset.
This is a limitation of loss-based estimators in general. \textcolor{black}{In \Cref{subsec: default SO and K} we recommend a data-adaptive choice of kernel, which serves to provide approximate invariance to affine transformations of the dataset.
As usual in statistical analyses, we recommend \textit{post-hoc} assessment of the sensitivity of inferences to perturbations of the dataset.}

Despite these two limitations, KSD-Bayes represents a flexible and effective procedure for generalised Bayesian inference in the context of an intractable likelihood.
Our attention turns next to theoretical analysis of KSD-Bayes.


\section{Theoretical Assessment} \label{sec:theory}

This section contains a comprehensive theoretical treatment of KSD-Bayes.
The main results are \emph{posterior consistency} and a \emph{Bernstein--von Mises} theorem in \Cref{sec:pc_bvm}, and \emph{global bias-robustness} of the generalised posterior in \Cref{sec:robustness}.
In obtaining these results we have developed novel intermediate results concerning an important V-statistic estimator for KSD; these are anticipated to be of independent interest, so we present these in \Cref{sec:prop_KSD} of the main text.
Note that all theory is valid for the misspecified regime where $\P$ need not be an element of $\{\P_\theta : \theta \in \Theta\}$.
Moreover, the results in \Cref{sec:prop_KSD} and \Cref{sec:pc_bvm} hold for general data domains $\X$.
For the entirety of this section we set $\beta = 1$, with all results for $\beta \ne 1$ immediately recovered by replacing $K$ with $\beta K$.
The results of this section motivate a specific choice for $\beta$ that is described in \Cref{subsec: choice of k}.

\vspace{5pt}
\noindent \textbf{Standing Assumptions 2:}
The dataset $\{ x_i \}_{i=1}^{n}$ consists of independent samples generated from $\P \in \mathcal{P}(\X)$, with empirical distribution denoted $\P_n := (1/n) \sum_{i=1}^n \delta_{x_i}$.
The set $\Theta \subseteq \R^p$ is open, convex and bounded\footnote{It simplifies presentation to assume the parameter set $\Theta$ is bounded; there is no loss of generality since re-parametrisation can be performed.}.
\Cref{asmp:derivation_KSD} holds with $\Q = \P_{\theta}$ for every $\theta \in \Theta$.

\vspace{5pt}
\noindent \textbf{Notation:}
For shorthand, let $\partial^1$, $\partial^2$ and $\partial^3$ denote the partial derivatives $(\partial / \partial \theta_{(h)})$, $(\partial^2 / \partial \theta_{(h)} \partial \theta_{(k)})$ and $(\partial^3 / \partial \theta_{(h)} \partial \theta_{(k)} \partial \theta_{(l)})$ for $h,k,l \in \{ 1, \dots, p \}$, where to reduce notation the indices $(h,k,l)$ are left implicit.
The gradient and Hessian operators are $[\nabla_{\theta}]_{(h)} = (\partial / \partial \theta_{(h)})$ and $[\nabla_{\theta}^2]_{(h,k)} = (\partial^2 / \partial \theta_{(h)} \partial \theta_{(k)})$.

\subsection{Minimum KSD Estimators} \label{sec:prop_KSD}

First we present novel analysis of the V-statistic in \eqref{eq:empirical_KSD}.
Note that a U-statistic estimator of KSD was analysed in \cite{Barp2019}, but only for \textcolor{black}{the so-called \textit{diffusion} Stein operator, a variant (or \textit{standardisation}) of the Langevin Stein operator in \eqref{eq:stein_operator}.}
Our results for the V-statistic do not depend on a specific form of $\S_{\P_{\theta}}$, and may hence be of independent interest.

Despite the bias present in a V-statistic, our standing assumptions are sufficient to derive the following consistency result:

\begin{lemma}[a.s. Pointwise Convergence] \label{thm:pw}
For each $\theta \in \Theta$,
	\begin{align*}
	\operatorname{KSD}^2(\P_{\theta} \| \P_n) - \operatorname{KSD}^2(\P_{\theta} \| \P) \overset{a.s.}{\longrightarrow} 0 .
	\end{align*}
\end{lemma}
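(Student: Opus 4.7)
Fix $\theta \in \Theta$ and, for brevity, write $u(x,x') := \S_{\P_\theta}\S_{\P_\theta} K(x,x')$. By \Cref{prop:derivation_KSD} we have $\mathrm{KSD}^2(\P_\theta\|\P) = \E_{X,X' \sim \P}[u(X,X')]$, while \eqref{eq:empirical_KSD} gives $\mathrm{KSD}^2(\P_\theta\|\P_n) = n^{-2}\sum_{i,j=1}^n u(x_i,x_j)$. The natural strategy is to split this V-statistic into its diagonal and off-diagonal parts,
\begin{align*}
\frac{1}{n^2}\sum_{i,j=1}^n u(x_i,x_j) \;=\; \frac{1}{n^2}\sum_{i=1}^n u(x_i,x_i) \;+\; \frac{n(n-1)}{n^2}\cdot\frac{1}{n(n-1)}\sum_{i \neq j} u(x_i,x_j),
\end{align*}
treat the diagonal term via the classical strong law of large numbers applied to the i.i.d.\ sequence $\{u(x_i,x_i)\}_{i=1}^n$, and treat the off-diagonal term via the strong law for U-statistics of order two (Hoeffding).

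The prerequisite for both laws is integrability. Since $u$ is the reproducing kernel of a scalar RKHS (namely $\{\S_{\P_\theta}h : h\in\H\}$ under \Cref{asmp:derivation_KSD}, since $h\mapsto \S_{\P_\theta}[h](x)$ is a continuous linear functional on $\H$), $u$ is positive semidefinite and satisfies the Cauchy--Schwarz bound $|u(x,x')| \le u(x,x)^{1/2}\, u(x',x')^{1/2}$. Combined with Jensen's inequality, this yields
\begin{align*}
\E_\P|u(X,X)| \;\le\; \E_\P[u(X,X)] \;<\; \infty, \qquad \E_{\P\otimes\P}|u(X,X')| \;\le\; \big(\E_\P[u(X,X)^{1/2}]\big)^2 \;\le\; \E_\P[u(X,X)],
\end{align*}
where finiteness of $\E_\P[u(X,X)]$ is precisely the last clause of \Cref{asmp:derivation_KSD} (with $\Q = \P_\theta$). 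Hence the diagonal empirical mean $n^{-1}\sum_{i=1}^n u(x_i,x_i)$ converges almost surely to $\E_\P[u(X,X)]$, so $n^{-2}\sum_i u(x_i,x_i)\to 0$ almost surely; and the off-diagonal average $[n(n-1)]^{-1}\sum_{i\neq j}u(x_i,x_j)$ converges almost surely to $\E_{\P\otimes\P}[u(X,X')]$ by the strong law for U-statistics, whose prefactor $n(n-1)/n^2$ tends to $1$. Adding the two limits yields the claim.

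The only subtlety is the justification of the reproducing-kernel (and hence Cauchy--Schwarz) structure of $u$ under \Cref{asmp:derivation_KSD}; this follows since continuity of the linear functional $h\mapsto \S_{\P_\theta}[h](x)$ means that $\S_{\P_\theta}$ sends $\H$ into a scalar RKHS whose kernel is exactly $(x,x')\mapsto \S_{\P_\theta}\S_{\P_\theta}K(x,x')$, so no symmetrisation argument beyond that already built into the U-statistic SLLN is needed. All remaining steps are routine applications of standard limit theorems to sums of i.i.d.\ variables.
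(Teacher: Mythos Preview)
Your proposal is correct and follows essentially the same route as the paper: split the V-statistic into diagonal and off-diagonal parts, apply the classical SLLN to the former and Hoeffding's SLLN for U-statistics to the latter, with integrability supplied by the Cauchy--Schwarz bound $|u(x,x')|\le u(x,x)^{1/2}u(x',x')^{1/2}$ and Jensen. The only cosmetic difference is that the paper packages your ``reproducing-kernel structure of $u$'' observation into its \Cref{lem:ub_kd2}(iii), whereas you rederive it inline.
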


\noindent The proof is contained in \Cref{apx:proof_pw}.
If we impose further regularity, we can obtain a uniform convergence result.
\textcolor{black}{It will be convenient to introduce a collection of assumptions that are indexed by $r_{\max} \in \{0,1,2,\dots\}$, as follows: }
\begin{assumption}[$r_{\max}$] \label{asmp:an_cnd}
	\textcolor{black}{For all integers $0 \leq r \leq r_{\max}$, the following conditions hold: }
	\begin{enumerate}
		\item[(1)] the map $\theta \mapsto \partial^r \S_{\P_{\theta}}[h](x)$ exists and is continuous, for all $h \in \H$ and $x \in \X$;
		\item[(2)] the map $h \mapsto ( \partial^r \S_{\P_\theta} )[h](x)$ is a continuous linear functional on $\H$, for each $x \in \X$;
		\item[(3)] $\E_{X \sim \P}[ \sup_{\theta \in \B} ( ( \partial^r \S_{\P_{\theta}} ) ( \partial^r \S_{\P_\theta} ) K(X, X) ) ] < \infty$,
	\end{enumerate}
	where $( \partial^0 \S_{\P_\theta} ) := \S_{\P_\theta}$; note that (2) with $r = 0$ is implied from Standing Assumption 2.
\end{assumption}
\noindent In the expression above, the first and second $( \partial^r \S_{\P_{\theta}} )$ are applied, respectively, to the first and second argument of $K$, as with $\S_{\P_{\theta}} \S_{\P_{\theta}} K(x, x)$.
These assumptions become concrete when considering a specific Stein operator; the case of the Langevin Stein operator is presented in \Cref{ap: verify A3}.

\begin{lemma}[a.s. Uniform Convergence] \label{thm:uc}
	\textcolor{black}{Suppose \Cref{asmp:an_cnd} ($r_{\max}=1$) holds.}
	Then
	\begin{align*}
	\sup_{\theta \in \B} \left| \operatorname{KSD}^2(\P_{\theta} \| \P_n) - \operatorname{KSD}^2(\P_{\theta} \| \P) \right| \overset{a.s.}{\longrightarrow} 0 .
	\end{align*}
\end{lemma}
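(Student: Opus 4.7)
The plan is to lift the pointwise a.s.\ convergence of \Cref{thm:pw} to uniform convergence on the bounded set $\Theta$ by combining it with an a.s., uniform-in-$n$ Lipschitz bound on $\theta \mapsto \operatorname{KSD}^2(\P_\theta\|\P_n)$ (and on the limit $\theta \mapsto \operatorname{KSD}^2(\P_\theta\|\P)$). Once such a constant $L_\infty$ is available, I would apply \Cref{thm:pw} on a countable dense subset $\Theta_0 \subset \Theta$, use that a countable union of null sets is null, and then for any $\epsilon > 0$ cover $\Theta$ (bounded in $\R^p$) by finitely many $\eta$-balls centred in $\Theta_0$ with $\eta := \epsilon/(3 L_\infty)$; the usual $\epsilon/3$ argument then closes the proof.

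The Lipschitz bound is cleanest through the RKHS-feature reformulation already used in \Cref{prop:derivation_KSD}. Under \Cref{asmp:an_cnd}(2) at $r=0$, the Riesz representation theorem supplies $g_\theta(x) \in \H$ with $\S_{\P_\theta}[h](x) = \langle g_\theta(x), h\rangle_\H$ and $\|g_\theta(x)\|_\H^2 = \S_{\P_\theta}\S_{\P_\theta}K(x,x)$, yielding $\operatorname{KSD}^2(\P_\theta\|\Q) = \|\E_{X \sim \Q}[g_\theta(X)]\|_\H^2$ for $\Q \in \{\P, \P_n\}$. The same statement at $r=1$ gives derivative representers $\tilde g_{\theta}(x) \in \H$ (for each implicit coordinate $h$) with $\|\tilde g_\theta(x)\|_\H^2 = (\partial^1 \S_{\P_\theta})(\partial^1 \S_{\P_\theta})K(x,x)$, and validates differentiating $\operatorname{KSD}^2(\P_\theta\|\Q)$ under the expectation by dominated convergence. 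Differentiating the squared norm and applying Cauchy--Schwarz in $\H$ followed by Jensen in $\Q$ then produces
\begin{align*}
\sup_{\theta \in \Theta}|\partial^1 \operatorname{KSD}^2(\P_\theta\|\Q)| \leq 2\,\E_{X \sim \Q}\Bigl[\sup_{\theta \in \Theta} \sqrt{\S_{\P_\theta}\S_{\P_\theta}K(X,X)}\Bigr] \cdot \E_{X \sim \Q}\Bigl[\sup_{\theta \in \Theta} \sqrt{(\partial^1 \S_{\P_\theta})(\partial^1 \S_{\P_\theta})K(X,X)}\Bigr] .
\end{align*}
For $\Q = \P$ the right-hand side is finite by Jensen applied to \Cref{asmp:an_cnd}(3) at $r=0$ and $r=1$. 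For $\Q = \P_n$, each factor is an empirical mean of i.i.d.\ non-negative random variables with the same finite expectation, so by the strong law the bound is a.s.\ uniform in $n$. Summing over the $p$ coordinates of $\theta$ and invoking the mean value inequality on the convex set $\Theta$ delivers the required $L_\infty$.

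The main obstacle is controlling the off-diagonal envelope $\sup_\theta |\partial^1 u_\theta(X,X')|$ with $u_\theta(x,x') := \S_{\P_\theta}\S_{\P_\theta}K(x,x')$, using only the \emph{diagonal} integrability $\E[\sup_\theta (\partial^r \S_{\P_\theta})(\partial^r \S_{\P_\theta})K(X,X)] < \infty$ supplied by \Cref{asmp:an_cnd}(3); a direct V-statistic uniform law of large numbers would instead demand an integrable envelope for the bivariate quantity itself. The RKHS Cauchy--Schwarz factorisation resolves this cleanly: it decouples $X$ from $X'$ through the Hilbert norms $\|g_\theta(X)\|_\H$ and $\|\tilde g_\theta(X')\|_\H$, reducing a bivariate sup-envelope to a \emph{product} of univariate sup-envelopes that are square-integrable precisely by the $r=0$ and $r=1$ clauses of \Cref{asmp:an_cnd}. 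This is the structural reason the uniform statement requires the strengthening from the pointwise hypotheses used in \Cref{thm:pw} to \Cref{asmp:an_cnd} at $r_{\max}=1$.
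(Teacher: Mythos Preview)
Your proposal is correct and follows essentially the same approach as the paper: both arguments combine the pointwise convergence of \Cref{thm:pw} with an a.s.\ Lipschitz bound on $\theta \mapsto \operatorname{KSD}^2(\P_\theta\|\P_n)$, and both obtain that bound via the identical RKHS Cauchy--Schwarz factorisation $\sup_\theta \|\nabla_\theta f_n(\theta)\|_2 \le \frac{1}{n^2}\sum_{i,j} M^1(x_i,x_j) = 2\bigl(\tfrac{1}{n}\sum_i m^0(x_i)\bigr)\bigl(\tfrac{1}{n}\sum_i m^1(x_i)\bigr)$, controlled by the strong law. The only cosmetic difference is that the paper invokes Davidson's stochastic-equicontinuity theorems (Theorems~21.8 and~21.10) as a black box, whereas you unpack the equivalent $\epsilon/3$ argument over a countable dense subset by hand.
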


\noindent The proof is contained in \Cref{apx:proof_uc}.

Our next results concern consistency and asymptotic normality of the estimator $\theta_n$ that minimises the V-statistic in \eqref{eq:empirical_KSD}.

\begin{assumption} \label{asmp:sc_cnd}
	There exist minimisers $\theta_n$ of $\operatorname{KSD}(\P_\theta \| \P_n)$ for all sufficiently large $n \in \mathbb{N}$,
	and there exists a unique $\theta_*$ s.t. $\operatorname{KSD}(\P_{\theta_*} \| \P) < \inf_{\{ \theta \in \Theta : \| \theta - \theta_*\|_2 \geq \epsilon\}} \operatorname{KSD}(\P_\theta \| \P)$ for any $\epsilon > 0$.
\end{assumption}	

\begin{lemma}[Strong Consistency] \label{lem:sc_ksd}
	\textcolor{black}{Suppose Assumptions \ref{asmp:an_cnd} ($r_{\max}=1$) and \ref{asmp:sc_cnd} hold. }
	Then 
	$$
	\theta_n \overset{a.s.}{\longrightarrow}  \theta_*.
	$$
\end{lemma}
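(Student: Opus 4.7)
The plan is to run the classical argmin consistency argument (in the spirit of van der Vaart's Theorem~5.7), where the two ingredients are (i) the a.s.\ uniform convergence supplied by Lemma~\ref{thm:uc} and (ii) the well-separated minimum identifiability condition built into \Cref{asmp:sc_cnd}. Abbreviate $f_n(\theta) := \KSD^2(\P_\theta \| \P_n)$ and $f(\theta) := \KSD^2(\P_\theta \| \P)$. Under \Cref{asmp:an_cnd} with $r_{\max}=1$, Lemma~\ref{thm:uc} gives $\sup_{\theta \in \Theta} |f_n(\theta) - f(\theta)| \overset{a.s.}{\to} 0$, and \Cref{asmp:sc_cnd} guarantees that $\theta_n$ is well-defined for $n$ large.

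The first step is to upgrade the well-separation condition from $\KSD$ to $\KSD^2$: since $\KSD(\P_\theta \| \P) \geq 0$ and $s \mapsto s^2$ is strictly increasing on $[0,\infty)$, the hypothesis in \Cref{asmp:sc_cnd} implies that for every $\epsilon > 0$ there exists $\delta_\epsilon > 0$ such that
\begin{equation*}
\inf_{\{\theta \in \Theta : \|\theta - \theta_*\|_2 \geq \epsilon\}} f(\theta) \;\geq\; f(\theta_*) + \delta_\epsilon .
\end{equation*}
Next I would exploit the defining property $f_n(\theta_n) \leq f_n(\theta_*)$ and decompose
\begin{equation*}
f(\theta_n) - f(\theta_*) = [f(\theta_n) - f_n(\theta_n)] + [f_n(\theta_n) - f_n(\theta_*)] + [f_n(\theta_*) - f(\theta_*)] \;\leq\; 2 \sup_{\theta \in \Theta} |f_n(\theta) - f(\theta)|,
\end{equation*}
so that Lemma~\ref{thm:uc} forces $f(\theta_n) - f(\theta_*) \overset{a.s.}{\to} 0$. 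Finally, on any sample path along which $\|\theta_n - \theta_*\|_2 \geq \epsilon$ infinitely often, the previous display yields $f(\theta_n) - f(\theta_*) \geq \delta_\epsilon$ infinitely often, contradicting the a.s.\ convergence just established. Thus $\|\theta_n - \theta_*\|_2 \overset{a.s.}{\to} 0$.

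There is no genuine obstacle left at this stage: the heavy lifting (pointwise-to-uniform convergence of a V-statistic indexed by $\theta$, exchange of supremum and expectation under the integrability condition (3) in \Cref{asmp:an_cnd}) has already been carried out in Lemma~\ref{thm:uc}, and identifiability is an assumption rather than a deduction. The only minor care required is the transfer of the well-separation from $\KSD$ to $\KSD^2$ noted above, and the observation that \Cref{asmp:sc_cnd} is exactly what rules out the possibility of minimisers drifting along a ``ridge'' where $f$ is flat near its infimum.
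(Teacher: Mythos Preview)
Your proposal is correct and follows essentially the same route as the paper: both invoke the a.s.\ uniform convergence from Lemma~\ref{thm:uc} and then run the standard argmin consistency argument in the spirit of \citet[Theorem~5.7]{Vaart1998}, which the paper packages as a separate deterministic lemma while you write it out inline. You are slightly more careful than the paper in making explicit the monotonicity step transferring the well-separation condition from $\KSD$ to $\KSD^2$, which the paper leaves implicit.
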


\noindent The proof is contained in \Cref{apx:proof_sc_ksd}.
For the well-specified case where $\exists \theta_0$ s.t. $\P_{\theta_0} = \P$, the uniqueness of $\theta_*$ holds automatically if $\operatorname{KSD}$ is a proper divergence i.e. $\operatorname{KSD}(\P \| \Q) = 0 \Longleftrightarrow \P = \Q$.
For example, if the preconditions of \citet[Proposition~1]{Barp2019} are satisfied and the parametrisation $\theta \mapsto \P_{\theta}$ is injective, the minimum is uniquely attained.

Asymptotic normality of $\theta_n$ can be established if further regularity is imposed: 

\begin{lemma}[Asymptotic Normality] \label{lem:an_ksd}
	\textcolor{black}{Suppose Assumptions \ref{asmp:an_cnd} ($r_{\max}=3$) and \ref{asmp:sc_cnd} hold.}
	Let $H_* := \nabla_{\theta}^2 \operatorname{KSD}^2(\P_{\theta} \| \P) |_{\theta = \theta_*}$ and $J_* := \mathop{\E}_{X \sim \P}[ S(X, \theta_*) S(X, \theta_*)^\top]$, where we define the column vector $S(x, \theta) := \mathop{\E}_{X \sim \P}[ \nabla_{\theta} ( \S_{\P_{\theta}} \S_{\P_{\theta}} K(x, X) ) ]$. 
	If $H_*$ is non-singular, $$\sqrt{n} \left( \theta_n - \theta_* \right) \overset{d}{\to} \mathcal{N}(0, H_*^{-1} J_* H_*^{-1})$$ where $\overset{d}{\to}$ denotes the convergence in distribution.
\end{lemma}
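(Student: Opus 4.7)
The plan is to treat $\theta_n$ as an M-estimator and use the classical Z-estimator expansion. Since $\theta_n \overset{a.s.}{\longrightarrow} \theta_*$ by \Cref{lem:sc_ksd} and $\Theta$ is open, $\theta_n$ is eventually an interior point and satisfies the first-order optimality condition $\nabla_\theta \operatorname{KSD}^2(\P_{\theta_n} \| \P_n) = 0$ almost surely. A component-wise mean-value expansion of $\theta \mapsto \nabla_\theta \operatorname{KSD}^2(\P_\theta \| \P_n)$ around $\theta_*$ then yields, for some $\tilde\theta_n$ on the segment between $\theta_n$ and $\theta_*$,
$$\sqrt{n}(\theta_n - \theta_*) = -\bigl[ \nabla_\theta^2 \operatorname{KSD}^2(\P_{\tilde\theta_n} \| \P_n) \bigr]^{-1} \sqrt{n}\, \nabla_\theta \operatorname{KSD}^2(\P_{\theta_*} \| \P_n),$$
so the task reduces to (i) showing the Hessian converges almost surely to the non-singular matrix $H_*$, and (ii) establishing a CLT for the rescaled gradient at $\theta_*$. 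Slutsky's theorem and the continuous mapping theorem then combine these ingredients.

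For (i), I would extend the uniform strong law behind \Cref{thm:uc} to second derivatives of the V-statistic. Under \Cref{asmp:an_cnd} with $r_{\max}=3$ the envelope conditions hold for $r=2$, which together with dominated convergence produces $\sup_{\theta \in \Theta} \| \nabla_\theta^2 \operatorname{KSD}^2(\P_\theta \| \P_n) - \nabla_\theta^2 \operatorname{KSD}^2(\P_\theta \| \P) \| \overset{a.s.}{\longrightarrow} 0$ by the pointwise-law-plus-equicontinuity strategy used for \Cref{thm:uc}; the $r=3$ envelope supplies a uniform Lipschitz bound needed for equicontinuity. Combined with $\tilde\theta_n \overset{a.s.}{\longrightarrow} \theta_*$ and continuity of $\theta \mapsto \nabla_\theta^2 \operatorname{KSD}^2(\P_\theta \| \P)$, this gives $\nabla_\theta^2 \operatorname{KSD}^2(\P_{\tilde\theta_n} \| \P_n) \overset{a.s.}{\longrightarrow} H_*$, and non-singularity of $H_*$ transfers the convergence to the inverse.

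For (ii), observe that $\nabla_\theta \operatorname{KSD}^2(\P_{\theta_*} \| \P_n)$ is a vector-valued V-statistic of order two with kernel $g(x,y) := \nabla_\theta [ \S_{\P_\theta} \S_{\P_\theta} K(x,y) ] |_{\theta=\theta_*}$. Because $\theta_*$ is interior and minimises $\operatorname{KSD}^2(\P_\theta \| \P)$, and because \Cref{asmp:an_cnd} at $r=1$ licenses exchange of gradient and expectation, $\E[ g(X,X') ] = 0$. The difference between this V-statistic and the associated U-statistic is the diagonal sum $\tfrac{1}{n^2}\sum_i g(x_i,x_i)$, which is $O_p(1/n)$ and therefore negligible after $\sqrt n$-rescaling. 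Hoeffding's decomposition of the centred U-statistic then shows that $\sqrt{n}\, \nabla_\theta \operatorname{KSD}^2(\P_{\theta_*} \| \P_n)$ is asymptotically equivalent to a linear statistic in the Hoeffding projection $S(x,\theta_*) = \E_{Y \sim \P}[ g(x,Y) ]$, and the multivariate CLT delivers a centred Gaussian limit whose covariance is expressible through $J_* = \E[ S(X,\theta_*) S(X,\theta_*)^\top ]$. Substitution into the Z-estimator identity, together with (i) and Slutsky's theorem, gives $\sqrt{n}(\theta_n - \theta_*) \overset{d}{\to} \mathcal{N}(0, H_*^{-1} J_* H_*^{-1})$.

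The main obstacle is step (i): uniform convergence of the Hessian in the misspecified regime, which demands careful envelope control over both the off-diagonal U-statistic part (by an argument parallel to \Cref{thm:uc}) and the diagonal bias contribution intrinsic to the V-statistic. This is precisely why \Cref{asmp:an_cnd} is invoked at $r_{\max}=3$ rather than $r_{\max}=2$: the third-derivative envelope furnishes the Lipschitz remainder needed for uniform equicontinuity. The rest of the argument is structurally the standard M-estimator asymptotic normality proof, adapted to accommodate a V-statistic objective.
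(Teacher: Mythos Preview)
Your proposal is correct and follows essentially the same M-estimator strategy as the paper. The one noteworthy difference is in how the Hessian is handled: you use a component-wise mean-value expansion and then argue that $\nabla_\theta^2 \operatorname{KSD}^2(\P_{\tilde\theta_n}\|\P_n)\to H_*$ via a uniform law for the second derivatives, whereas the paper instead takes a second-order Taylor expansion of $\nabla_\theta f_n$ about $\theta_*$, so the Hessian is evaluated at the fixed point $\theta_*$ (needing only pointwise convergence there) and the remainder is controlled by a bound on $\sup_{\theta}\|\nabla_\theta^3 f_n(\theta)\|_2$. Both routes consume the $r_{\max}=3$ envelope in the same place and yield the same conclusion; the paper's version merely sidesteps proving a full uniform law for the Hessian. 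Your treatment of the gradient CLT (diagonal negligible, Hoeffding projection, multivariate CLT) matches the paper's use of \citet[Theorem~12.3]{Vaart1998} exactly.
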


\noindent The proof is contained in \Cref{apx:proof_an_ksd}.
These preliminaries on minimum KSD estimation are required for our main results on KSD-Bayes, presented next.

\subsection{Posterior Consistency and Bernstein-von-Mises} \label{sec:pc_bvm}

Armed with the technical results of \Cref{sec:prop_KSD}, we can now establish consistency of KSD-Bayes and a Bernstein--von Mises result. 
Our consistency result requires a \textit{prior mass condition}, similar to that of \citet{Cherief-Abdellatif2019}:

\begin{assumption} \label{asmp:pc_cnd}
The prior is assumed to
\begin{enumerate}
    \item admit a p.d.f. $\pi$ that is continuous at $\theta_*$, with $\pi(\theta_*) > 0$;
    \item satisfy $\int_{B_n(\alpha_1)}  \pi(\theta) \mathrm{d} \theta \ge e^{- \alpha_2 \sqrt{n}}$ for some constants $\alpha_1, \alpha_2 >0$,
\end{enumerate}
where we define $B_n(\alpha_1) := \{ \theta \in \Theta : | \operatorname{KSD}^2(\P_\theta \| \P) - \operatorname{KSD}^2(\P_{\theta_*} \| \P) | \le \alpha_1/\sqrt{n} \}$.
\end{assumption}

\noindent \Cref{asmp:pc_cnd} specifies the amount of prior mass in a neighbourhood around the population-optimal value $\theta_*$ that is required.
This is not a strong assumption and \Cref{apx: expfam_asmp} demonstrates how each of \Cref{asmp:sc_cnd,asmp:an_cnd,asmp:pc_cnd} can be verified in the case of an exponential family model.

\begin{theorem}[Posterior Consistency] \label{thm:pc}
	Suppose \Cref{asmp:sc_cnd,asmp:pc_cnd} hold.
	Let $\sigma(\theta) := \E_{X \sim \P}\left[ \S_{\P_{\theta}} \S_{\P_{\theta}} K(X, X) \right]$.
	Then, for all $\delta \in (0,1]$, 
	\begin{align*}
	\mathbb{P} \left( \left| \int_\Theta  \operatorname{KSD}^2(\P_\theta \| \P) \pi_{n}^{D}(\theta) \mathrm{d} \theta - \operatorname{KSD}^2(\P_{\theta_{*}} \| \P) \right| > \delta \right) \le \frac{\alpha_1 + \alpha_2 + 8 \sup_{\theta \in \Theta} \sigma(\theta)}{\delta \sqrt{n}} 
	\end{align*}
	where the probability is with respect to realisations of the dataset $\{ x_i \}_{i=1}^{n} \overset{i.i.d.}{\sim} \P$.
\end{theorem}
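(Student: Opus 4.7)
Since $\theta_*$ minimises $\KSD^2(\P_\theta \| \P)$ over $\Theta$ by \Cref{asmp:sc_cnd}, the integrand $R(\theta) := \KSD^2(\P_\theta \| \P) - \KSD^2(\P_{\theta_*} \| \P)$ is pointwise non-negative, so the absolute value in the probability statement can be dropped and Markov's inequality reduces the task to establishing
\[
\E\!\left[\int_\Theta R(\theta)\,\pi_n^D(\theta)\,\mathrm{d}\theta\right] \;\leq\; \frac{\alpha_1 + \alpha_2 + 8\sup_{\theta \in \Theta}\sigma(\theta)}{\sqrt{n}}.
\]
Writing $D(\theta) := \KSD^2(\P_\theta \| \P)$ and $D_n(\theta) := \KSD^2(\P_\theta \| \P_n)$, the plan is to decompose
\[
R(\theta) \;=\; \bigl[D_n(\theta) - D_n(\theta_*)\bigr] + \bigl\{[D(\theta) - D_n(\theta)] - [D(\theta_*) - D_n(\theta_*)]\bigr\}
\]
and bound each piece in turn.

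The first piece is handled via the variational characterisation \eqref{eq:gen-posterior-opt} of $\pi_n^D$ (with $\beta = 1$ and $L_n = D_n$), applied to the competitor $\rho := \pi\,\mathbbm{1}_{B_n(\alpha_1)}/\pi(B_n(\alpha_1))$, i.e., the prior restricted to the neighbourhood appearing in \Cref{asmp:pc_cnd}. The prior-mass condition gives $\KL(\rho\|\pi) = -\log \pi(B_n(\alpha_1)) \leq \alpha_2 \sqrt{n}$. Discarding the non-negative $\KL(\pi_n^D\|\pi)$ and subtracting $D_n(\theta_*)$ from both sides yields
\[
\int [D_n - D_n(\theta_*)]\,\pi_n^D\,\mathrm{d}\theta \;\leq\; \int [D_n - D_n(\theta_*)]\,\rho\,\mathrm{d}\theta + \frac{\alpha_2}{\sqrt{n}}.
\]
Taking expectations (the RHS becomes deterministic, since $\rho$ is data-independent) and invoking the V-statistic bias identity $\E[D_n(\theta)] = (1-1/n)\,D(\theta) + \sigma(\theta)/n$, combined with the fact that $\rho$ is supported on $B_n(\alpha_1) = \{\theta : D(\theta) - D(\theta_*) \leq \alpha_1/\sqrt{n}\}$, bounds this contribution by $(\alpha_1+\alpha_2)/\sqrt{n}$ plus $\mathcal{O}(\sup\sigma/n)$ correction terms coming from the V-statistic's diagonal bias.

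The hard part is the second piece, $\E\int \{[D - D_n] - [D(\theta_*) - D_n(\theta_*)]\}\,\pi_n^D\,\mathrm{d}\theta$: because $\pi_n^D$ depends on the same data as $D_n$, expectation and $\theta$-integration do not commute directly. I will split $D - D_n = (D - \E D_n) + (\E D_n - D_n)$. The deterministic bias $D(\theta) - \E D_n(\theta) = (D(\theta) - \sigma(\theta))/n$ has sup-norm at most $\sup\sigma/n$, using $|D(\theta)| \leq \sigma(\theta)$ by Cauchy--Schwarz in the Stein RKHS, and hence contributes $\mathcal{O}(\sup\sigma/n)$ after integration against $\pi_n^D$. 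For the centred fluctuation $\E D_n - D_n$, apply Cauchy--Schwarz with respect to $\pi_n^D$ to obtain $\bigl|\int (\E D_n - D_n)\,\pi_n^D\,\mathrm{d}\theta\bigr|^2 \leq \int (\E D_n - D_n)^2\,\pi_n^D\,\mathrm{d}\theta$, then Jensen's inequality on the outer expectation and Tonelli's theorem (valid by non-negativity) to reduce the bound to a pointwise estimate of $\E\bigl[(\E D_n(\theta) - D_n(\theta))^2\,\pi_n^D(\theta)\bigr]$; a standard V-statistic variance bound $\mathrm{Var}(D_n(\theta)) \leq C\sigma(\theta)^2/n$, obtained by covariance counting combined with Cauchy--Schwarz in the Stein RKHS, then delivers the required $\mathcal{O}(\sup\sigma/\sqrt{n})$ rate. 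Collecting all the $\sup\sigma$-type contributions and using $\sup\sigma/n \leq \sup\sigma/\sqrt{n}$ absorbs them into the single constant $8\sup\sigma$, completing the PAC-Bayes-type bound.
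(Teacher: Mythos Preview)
Your decomposition and treatment of the first piece (via the variational characterisation with the restricted-prior competitor $\rho$) are fine, and indeed mirror the paper's use of \eqref{eq:gen-posterior-opt} and \Cref{asmp:pc_cnd}. The bias terms in the second piece are also handled correctly. The genuine gap is in the fluctuation term of the second piece. After your Cauchy--Schwarz/Jensen/Tonelli chain you arrive at
\[
\int_\Theta \E\!\left[\bigl(\E D_n(\theta) - D_n(\theta)\bigr)^2\,\pi_n^D(\theta)\right]\mathrm{d}\theta,
\]
and you then invoke the V-statistic variance bound $\mathrm{Var}(D_n(\theta)) \leq C\sigma(\theta)^2/n$ as if this controlled the inner expectation. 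It does not: $\pi_n^D(\theta)$ is built from the same data as $D_n(\theta)$, so $\E\bigl[(\E D_n - D_n)^2\,\pi_n^D\bigr] \neq \mathrm{Var}(D_n)\cdot\E[\pi_n^D]$ in general, and Tonelli only swaps the order of the $\theta$-integral and the data expectation --- it does nothing to decouple this correlation. To close the argument you would need either an almost-sure uniform-in-$\theta$ bound on $(\E D_n(\theta) - D_n(\theta))^2$ (false for a fluctuation) or a deterministic bound on the random density $\pi_n^D(\theta)$ (not available without controlling the random normaliser). The step therefore fails as written.

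The paper circumvents this entirely by \emph{not} passing through Markov's inequality and expectations. It first proves a pointwise concentration inequality $\mathbb{P}(|f_n(\theta)-f(\theta)|\geq t)\leq 4\sigma(\theta)/(t\sqrt{n})$ (\Cref{thm:1st_con_ksd2}), which on inversion gives, with probability at least $1-\delta$, the two-sided bound $|f_n(\theta)-f(\theta)|\leq 4\sup_\theta\sigma(\theta)/(\delta\sqrt{n})$. Working on this high-probability event, one can replace $f$ by $f_n$ when integrating against $\pi_n^D$, apply the variational characterisation, and then replace $f_n$ by $f$ again inside the infimum over $\rho$ --- each substitution costing $4\sup\sigma/(\delta\sqrt{n})$, hence the $8\sup\sigma$ in the final bound. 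The $1/\delta$ factor thus enters through the concentration threshold rather than through Markov, and the data-dependence of $\pi_n^D$ never has to be disentangled from the fluctuation $f_n - f$.
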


\noindent The proof is contained in \Cref{apx:proof_pc}.

Next, we derive a Bernstein--von Mises result.
The pioneering work of \cite{hooker2014bayesian} and \cite{Ghosh2016} established Bernstein--von Mises results for generalised posteriors defined by $\alpha$- and $\beta$-divergences.
Unfortunately, the form of KSD is rather different and different theoretical tools are required to tackle it.
\textcolor{black}{
\cite{Miller2019} introduced a general approach to deriving Bernstein--von Mises results for generalised posteriors, demonstrating how the assumptions can be verified for several additive loss functions $L_n$.
Our proof builds on \cite{Miller2019}, demonstrating that the required assumptions can also be satisfied by the non-additive KSD loss function in \eqref{eq:empirical_KSD}.
}

\begin{theorem}[Bernstein--von Mises] \label{thm:lan}
	\textcolor{black}{Suppose \Cref{asmp:an_cnd} ($r_{\max}=3$), \ref{asmp:sc_cnd}, and part (1) of \ref{asmp:pc_cnd} hold.}
	Let $\hat{\pi}_n^D$ the p.d.f. of the random variable $\sqrt{n} (\theta - \theta_n)$ for $\theta \sim \pi_n^D$, viewed as a p.d.f. on $\R^p$.
	Let $H_* := \nabla_{\theta}^2 \operatorname{KSD}^2(\P_{\theta} \| \P) |_{\theta = \theta_*}$.
	If $H_*$ is nonsingular,
	\begin{align*}
	\int_{\R^p} \left| \hat{\pi}_n^D(\theta) - \frac{1}{\det( 2 \pi H_*^{-1}  )^{1/2}} \exp\left( - \frac{1}{2} \theta \cdot H_* \theta \right) \right| \mathrm{d} \theta  \overset{a.s.}{\longrightarrow} 0 ,
	\end{align*}
	where the a.s. convergence is with respect to realisations of the dataset $\{ x_i \}_{i=1}^{n}$.
\end{theorem}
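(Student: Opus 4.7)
The plan is to follow the generalised Bernstein--von Mises strategy of \cite{Miller2019}, adapted to accommodate the non-additive V-statistic loss in \eqref{eq:empirical_KSD}. Introducing the rescaled coordinate $u = \sqrt{n}(\theta - \theta_n)$, the density of interest factorises as $\hat\pi_n^D(u) = g_n(u)/Z_n$, where
\[
    g_n(u) := \pi(\theta_n + u/\sqrt{n}) \exp\!\left( -n \bigl[ \operatorname{KSD}^2(\P_{\theta_n + u/\sqrt{n}} \| \P_n) - \operatorname{KSD}^2(\P_{\theta_n} \| \P_n) \bigr] \right)
\]
(taken to vanish when $\theta_n + u/\sqrt{n} \notin \Theta$) and $Z_n := \int_{\R^p} g_n(u)\,\mathrm{d}u$. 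The goal is to establish $g_n \to g$ in $L^1(\R^p)$ a.s., where $g(u) := \pi(\theta_*) \exp(-\tfrac{1}{2} u^\top H_* u)$, and then invoke Scheff\'e's lemma to convert this into the claimed TV convergence of $\hat\pi_n^D$ towards the $\mathcal{N}(0, H_*^{-1})$ density.

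Pointwise convergence of $g_n$ would be obtained via a second-order Taylor expansion about $\theta_n$, using $\nabla_\theta \operatorname{KSD}^2(\P_\theta \| \P_n)|_{\theta_n} = 0$ to write
\[
n \bigl[ \operatorname{KSD}^2(\P_{\theta_n + u/\sqrt{n}} \| \P_n) - \operatorname{KSD}^2(\P_{\theta_n} \| \P_n) \bigr] = \tfrac{1}{2} u^\top H_n u + \tfrac{1}{6\sqrt{n}} T_n(\tilde\theta)[u,u,u],
\]
where $H_n := \nabla_\theta^2 \operatorname{KSD}^2(\P_\theta \| \P_n)|_{\theta_n}$, $T_n(\theta)$ denotes the third $\theta$-derivative tensor of $\operatorname{KSD}^2(\P_\theta \| \P_n)$, and $\tilde\theta$ lies on the segment between $\theta_n$ and $\theta_n + u/\sqrt{n}$. \Cref{asmp:an_cnd} with $r_{\max} = 3$ supplies precisely the regularity needed for a direct extension of Lemma \ref{thm:uc} to yield a.s.\ uniform-in-$\theta$ convergence of the first three $\theta$-derivatives of $\operatorname{KSD}^2(\P_\theta \| \P_n)$ to those of $\operatorname{KSD}^2(\P_\theta \| \P)$. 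Combined with $\theta_n \to \theta_*$ a.s.\ (\Cref{lem:sc_ksd}) and continuity of $\pi$ at $\theta_*$, this delivers $H_n \to H_*$ a.s., $\sup_{\theta \in \Theta} \|T_n(\theta)\|$ bounded a.s., and hence the pointwise convergence $g_n(u) \to g(u)$ for each $u \in \R^p$.

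The main obstacle is constructing an integrable envelope dominating $g_n$ for all $n$ large enough, so that dominated convergence delivers $Z_n \to (2\pi)^{p/2} \pi(\theta_*)/\sqrt{\det H_*}$ and $\|g_n - g\|_{L^1} \to 0$. I would split $\R^p$ into an inner region $\{\|u\|_2 \le \delta \sqrt{n}\}$ and its complement. On the inner region, taking $\delta$ sufficiently small (and $n$ large) makes the cubic remainder dominated by $\tfrac{1}{4}\lambda_{\min}(H_*)\|u\|_2^2$, yielding the uniform Gaussian envelope $g_n(u) \le \|\pi\|_\infty \exp(-c \|u\|_2^2)$ with $c = \lambda_{\min}(H_*)/4 > 0$. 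On the complement, the uniqueness clause of \Cref{asmp:sc_cnd} gives $\gamma := \inf_{\|\theta - \theta_*\|_2 \ge \delta/2}\{\operatorname{KSD}^2(\P_\theta \| \P) - \operatorname{KSD}^2(\P_{\theta_*} \| \P)\} > 0$; combining this with \Cref{thm:uc} and $\theta_n \to \theta_*$ produces $g_n(u) \le \|\pi\|_\infty e^{-n\gamma/2}$ there, whose total contribution is $O(n^{p/2} e^{-n\gamma/2}) \to 0$ since the feasible volume grows only polynomially. These two envelopes close the dominated-convergence step, after which the standard rearrangement $|g_n/Z_n - g/Z_\infty| \le (1/Z_n)|g_n - g| + g\,|1/Z_n - 1/Z_\infty|$ delivers the claimed $L^1$ convergence of the densities.
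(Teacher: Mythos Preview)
Your approach is essentially the same as the paper's: both reduce the claim to the generalised Bernstein--von Mises machinery of \cite{Miller2019}, driven by (i) the Taylor expansion about $\theta_n$ with a uniformly bounded third-derivative remainder, (ii) $H_n \to H_*$ a.s., (iii) the well-separated minimiser condition lifted from $f$ to $f_n$ via \Cref{thm:uc}, and (iv) continuity and positivity of $\pi$ at $\theta_*$. The paper simply packages these four ingredients as a separate lemma verifying the hypotheses of \cite[Theorem~4]{Miller2019} and then cites that result, whereas you spell out the dominated-convergence argument that Miller's theorem encapsulates. Two small points to tighten: you do not actually need uniform convergence of all three $\theta$-derivatives, only a.s.\ boundedness of $\sup_{\theta\in\Theta}\|\nabla_\theta^3 f_n(\theta)\|$ (this is what the paper establishes via its auxiliary derivative-bound lemmas) together with $H_n\to H_*$; and on the outer region your pointwise envelope $g_n(u)\le\|\pi\|_\infty e^{-n\gamma/2}$ invokes $\|\pi\|_\infty$, which is not assumed---but this is easily repaired by keeping $\pi$ under the integral and using $\int_\Theta\pi=1$ after changing variables back to $\theta$, which still gives the $O(n^{p/2}e^{-n\gamma/2})$ bound you state.
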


\noindent The proof is contained in \Cref{apx:proof_lan}.
These positive results are encouraging, as they indicate the limitations of KSD-Bayes described in \Cref{subsec: pathologies} are at worst a finite sample size effect.
\textcolor{black}{However, we note that the asymptotic precision matrix $H_*$ from \Cref{thm:lan} differs to the precision matrix $H_* J_*^{-1} H_*$ of the minimum KSD estimator from \Cref{lem:an_ksd}; this is analogous to fact that Bayesian credible sets can have asymptotically incorrect frequentist coverage if the statistical model is mis-specified \citep{kleijn2012bernstein}.
This point will be addressed in \Cref{subsec: beta setting}.
}

\textcolor{black}{
	\begin{remark}
		The analysis in \Cref{sec:prop_KSD,sec:pc_bvm} covers general domains $\X$ and Stein operators $\S_\P$.
		Henceforth, in the main text we restrict attention to $\X = \R^d$, but the case of a discrete domain $\X$, and the identification of an appropriate Stein operator in this context, are discussed in \Cref{subsec: other-space}.
	\end{remark}
}

\subsection{Global Bias-Robustness of KSD-Bayes} \label{sec:robustness} 

An important property of KSD-Bayes is that, through a suitable choice of kernel, the generalised posterior can be made robust to contamination in the dataset.
This robustness will now be rigorously established.

Consider the \emph{$\varepsilon$-contamination model} $\P_{n,\epsilon,y} = (1 - \epsilon) \P_n + \epsilon \delta_{y}$, where $y \in \mathcal{X}$ and $\epsilon \in [0,1]$ \citep[see][]{Huber2009}.
In other words, the datum $y$ is considered to be contaminating the dataset $\{x_i\}_{i=1}^n$.
Robustness in the generalised Bayesian setting has been considered in \cite{hooker2014bayesian,Ghosh2016,Nakagawa2020}. 
In what follows we write $L_n(\theta) = L(\theta; \P_n)$ to make explicit the dependence of the loss function $L_n$ on the dataset $\P_n$.
Following \cite{Ghosh2016}, we consider a generalised posterior based on a (contaminated) loss $L(\theta; \P_{n,\epsilon,y})$ with density $\pi_n^L(\theta ; \P_{n,\epsilon,y})$, and define the \emph{posterior influence function}
\begin{align}
\operatorname{PIF}(y, \theta, \P_n) & :=  \frac{\mathrm{d}}{\mathrm{d} \epsilon} \pi_n^L(\theta ; \P_{n,\epsilon,y}) |_{\epsilon = 0} . \label{eq: def pif}
\end{align}
Here the notation $\pi_n^L(\theta ; \P_{n,\epsilon,y})$ emphasises the dependence of the generalised posterior on the (contaminated) dataset $\P_{n,\epsilon,y}$.
A generalised posterior $\pi_n^L$ is called \emph{globally bias-robust} if $\sup_{\theta \in \Theta} \sup_{y \in \mathcal{X}} | \operatorname{PIF}(y, \theta, \P_n) | < \infty$, meaning that the sensitivity of the generalised posterior to the contaminant $y$ is limited.
The following lemma provides general sufficient conditions for global bias-robustness to hold:

\begin{lemma} \label{lem:bias-robust_cnd}
	Let $\pi_n^L$ be a generalised Bayes posterior for a fixed $n \in \N$ with a loss $L(\theta; \P_n)$ and a prior $\pi$. 
	Suppose $L(\theta; \P_n)$ is lower-bounded and $\pi(\theta)$ is upper-bounded over $\theta \in \Theta$, for any $\P_n$.
	Denote $\operatorname{D}L(y, \theta, \P_n) := (\mathrm{d} / \mathrm{d} \epsilon) L(\theta; \P_{n,\epsilon,y}) |_{\epsilon=0}$.
	Then $\pi_n^L$ is globally bias-robust if, for any $\P_n$,
	\begin{enumerate}
		\item $\sup_{\theta \in \Theta} \sup_{y \in \X} \left| \operatorname{D}L(y, \theta, \P_n ) \right| \pi(\theta) < \infty$, and
		\item $\int_{\Theta} \sup_{y \in \X} \left| \operatorname{D}L(y, \theta , \P_n ) \right| \pi(\theta) \mathrm{d} \theta < \infty$.
	\end{enumerate}
\end{lemma}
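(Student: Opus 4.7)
The plan is to differentiate the generalised posterior with respect to $\epsilon$ at $\epsilon = 0$ via the quotient rule, obtain an explicit formula for the PIF, and then bound each resulting piece using the two integrability conditions. Writing
\begin{align*}
\pi_n^L(\theta;\P_{n,\epsilon,y}) = \frac{\pi(\theta)\exp\{-\beta n L(\theta;\P_{n,\epsilon,y})\}}{Z(\P_{n,\epsilon,y})}, \qquad Z(\P_{n,\epsilon,y}) := \int_\Theta \pi(\theta')\exp\{-\beta n L(\theta';\P_{n,\epsilon,y})\}\,d\theta',
\end{align*}
I would differentiate numerator and denominator at $\epsilon=0$; the denominator requires interchanging derivative and integral (Leibniz rule), which is the one nontrivial step and is discussed at the end. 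After simplification, the quotient rule delivers
\begin{align*}
\operatorname{PIF}(y,\theta,\P_n) = -\beta n\,\pi_n^L(\theta;\P_n)\Big\{\operatorname{D}L(y,\theta,\P_n) - \int_\Theta \operatorname{D}L(y,\theta',\P_n)\,\pi_n^L(\theta';\P_n)\,d\theta'\Big\}.
\end{align*}

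To bound the two pieces uniformly in $(\theta,y)$, let $L_{\min}$ denote the lower bound on $L(\cdot;\P_n)$ and $\bar{\pi}$ the upper bound on $\pi$. These guarantee $Z(\P_n)>0$ and $\pi_n^L(\theta;\P_n)\le \bar{\pi}\,e^{-\beta n L_{\min}}/Z(\P_n)$. For the pointwise piece, the factorisation $\pi_n^L(\theta;\P_n)\,|\operatorname{D}L(y,\theta,\P_n)|\le Z(\P_n)^{-1}e^{-\beta n L_{\min}}\,\pi(\theta)\,|\operatorname{D}L(y,\theta,\P_n)|$ is uniformly bounded in $(\theta,y)$ by condition~(1). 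For the integral piece, bound the inner integral by $Z(\P_n)^{-1}e^{-\beta n L_{\min}}\int_\Theta \pi(\theta')\sup_y|\operatorname{D}L(y,\theta',\P_n)|\,d\theta'$ (finite by condition~(2)) and factor out the uniform bound on $\pi_n^L(\theta;\P_n)$. Summing and taking suprema yields $\sup_{\theta\in\Theta}\sup_{y\in\X}|\operatorname{PIF}(y,\theta,\P_n)|<\infty$, which is exactly global bias-robustness.

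The main obstacle is rigorously justifying differentiation under the integral sign in $Z'(0)$. A dominated convergence argument should work: on a neighbourhood $\epsilon\in[0,\delta]$, the integrand's $\epsilon$-derivative is $-\beta n\,\pi(\theta')\,e^{-\beta n L(\theta';\P_{n,\epsilon,y})}\,\partial_\epsilon L(\theta';\P_{n,\epsilon,y})$, which can be majorised by $\beta n\,e^{-\beta n L_{\min}}\,\pi(\theta')\sup_{\epsilon'\in[0,\delta]}|\partial_{\epsilon'}L(\theta';\P_{n,\epsilon',y})|$ after applying the lower bound on $L$; a version of condition~(2) that holds uniformly for contaminated measures in a neighbourhood of $\P_n$ then supplies the integrable envelope, and the Leibniz rule applies. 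All subsequent steps reduce to routine algebraic manipulations driven directly by the hypothesis bounds.
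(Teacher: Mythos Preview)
Your proposal is correct and follows essentially the same approach as the paper: derive the explicit PIF formula, use the lower bound on $L$ to get $\pi_n^L(\theta)\le C\pi(\theta)$ with $C=e^{-\beta n L_{\min}}/Z$, and then apply conditions (1) and (2) to the two resulting pieces. The only difference is that the paper imports the PIF formula directly from \cite{Ghosh2016} (their equation (17)) rather than deriving it via the quotient rule, thereby sidestepping the Leibniz-rule justification that you flag as the main obstacle; your discussion of that point is more careful than the paper's own treatment.
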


\noindent The proof is contained in \Cref{apx:proof_bias-robust_cnd}.
\textcolor{black}{Note that standard Bayesian inference does not satisfy the conditions of \Cref{lem:bias-robust_cnd} in general.
Indeed, when $L(\theta; \P_n)$ is the negative log likelihood, $\operatorname{D}L(y, \theta, \P_n) = \log p_\theta(y) - \sum_{i=1}^{n} \log p_\theta(x_i)$, and the term $\log p_\theta(y)$ can be unbounded over $y \in \X$.
This can occur even if the statistical model is not heavy-tailed, e.g. for a normal location model $p_\theta$ on $\X = \R^d$.
In contrast, the kernel $K$ in KSD-Bayes provides a degree of freedom which can be leveraged to ensure that the conditions of \Cref{lem:bias-robust_cnd} \textit{are} satisfied; }
the specific form of $\operatorname{D}L(y, \theta, \P_n)$ for KSD-Bayes is derived in \Cref{sec:GD_KSD}.
This enables us to derive sufficient conditions \textcolor{black}{on $K$} for global bias-robustness of KSD-Bayes, which we now present.

\begin{theorem}[Globally Bias-Robust] \label{thm:bias-robust}
	For each $\theta \in \Theta$, let $\P_{\theta} \in \mathcal{P}_{\normalfont \text{S}}(\R^d)$ and let $\S_{\P_{\theta}}$ denote \textcolor{black}{the Langevin Stein operator in \eqref{eq:stein_operator}}.
	Let $K \in C_b^{1,1}(\R^d \times \R^d; \R^{d \times d})$. 
	Suppose that $\pi$ is bounded over $\Theta$.
	If there exists a function $\gamma: \Theta \to \R$ such that
	\begin{align}
	\sup_{y \in \R^d} \Big( \nabla_{y} \log p_{\theta}(y) \cdot K(y, y) \nabla_{y} \log p_{\theta}(y) \Big) \le \gamma(\theta)  \label{eq: main robust cdn}
	\end{align}
	and, in addition, $\sup_{\theta \in \Theta} | \pi(\theta) \gamma(\theta) | < \infty$ and $\int_\Theta \pi(\theta) \gamma(\theta) \mathrm{d} \theta < \infty$, then KSD-Bayes is globally bias-robust.
\end{theorem}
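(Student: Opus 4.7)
The plan is to apply \Cref{lem:bias-robust_cnd}: its preconditions are immediate, since $L(\theta; \P_n) = \operatorname{KSD}^2(\P_\theta \| \P_n) \ge 0$ is lower-bounded and $\pi$ is upper-bounded by hypothesis. The task then reduces to bounding $\sup_{y \in \R^d} | \operatorname{D}L(y, \theta, \P_n) |$ in a form that can be combined with the hypotheses $\sup_\theta \pi(\theta)\gamma(\theta) < \infty$ and $\int_\Theta \pi(\theta)\gamma(\theta)\,\mathrm{d}\theta < \infty$ to discharge conditions (1) and (2) of that lemma.

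First, I would derive (via the explicit computation anticipated in \Cref{sec:GD_KSD}) the closed-form expression $\operatorname{D}L(y, \theta, \P_n) = 2\,\E_{X \sim \P_n}[k_\theta(X, y)] - 2\operatorname{KSD}^2(\P_\theta \| \P_n)$, where $k_\theta(x, x') := \S_{\P_\theta}\S_{\P_\theta} K(x, x')$. This follows from differentiating $\E_{X, X' \sim \P_{n,\epsilon,y}}[k_\theta(X, X')]$, which is bilinear in $\P_{n,\epsilon,y}$, at $\epsilon = 0$. It thus suffices to bound $\sup_y |k_\theta(x, y)|$ for each $x$ in the support of $\P_n$. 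Exploiting the RKHS structure of \Cref{asmp:derivation_KSD}, the continuous linear functional $h \mapsto \S_{\P_\theta}[h](x)$ admits a representer $\xi_{x,\theta} \in \H$, giving $k_\theta(x, y) = \langle \xi_{x,\theta}, \xi_{y,\theta}\rangle_\H$. Cauchy--Schwarz in $\H$ then yields $|k_\theta(x, y)|^2 \le k_\theta(x, x)\, k_\theta(y, y)$, so it is enough to bound $\sup_y k_\theta(y, y)$ in terms of $\gamma(\theta)$.

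The main obstacle arises when expanding $k_\theta(y, y)$ via~\eqref{eq:langevin_k0}: besides the quadratic term $\nabla \log p_\theta(y) \cdot K(y, y) \nabla \log p_\theta(y)$ (directly bounded by $\gamma(\theta)$) and the pure-kernel term $\nabla_x \cdot (\nabla_{x'} \cdot K(x, x'))|_{x = x' = y}$ (bounded uniformly in $y$ by the $C_b^{1,1}$ assumption), there appear two mixed terms that are linear in $\nabla \log p_\theta(y)$ paired with divergences of $K$. A naive Euclidean Cauchy--Schwarz would leave $\| \nabla \log p_\theta(y) \|$ unbounded and is thus insufficient. The remedy is a second application of Cauchy--Schwarz, this time in $\H$: each mixed term can be recognised as $\langle \xi^{(1)}_{y,\theta}, \xi^{(2)}_{y}\rangle_\H$, where $\xi^{(1)}_{y,\theta}$ is the representer of the functional $h \mapsto \nabla \log p_\theta(y) \cdot h(y)$ with squared norm $\nabla \log p_\theta(y) \cdot K(y, y) \nabla \log p_\theta(y) \le \gamma(\theta)$, and $\xi^{(2)}_y$ is the representer of $h \mapsto \nabla \cdot h(y)$ with squared norm bounded uniformly in $y$ by the $C_b^{1,1}$ hypothesis. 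This yields $\sup_y k_\theta(y, y) \le C_1 \gamma(\theta) + C_2 \sqrt{\gamma(\theta)} + C_3$ for constants $C_i$ depending only on $K$. Combining with the earlier RKHS Cauchy--Schwarz bound, and noting that the same estimate controls both $\E_{X \sim \P_n}[\sqrt{k_\theta(X, X)}]$ and $\operatorname{KSD}^2(\P_\theta \| \P_n)$ uniformly in $\P_n$, one concludes $\sup_{y \in \R^d} | \operatorname{D}L(y, \theta, \P_n) | \le C(\gamma(\theta) + 1)$ for a constant $C$ depending only on $K$. Conditions (1) and (2) of \Cref{lem:bias-robust_cnd} then follow directly from the hypotheses on $\pi\gamma$, establishing global bias-robustness of KSD-Bayes.
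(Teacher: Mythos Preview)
Your proposal is correct and follows essentially the same route as the paper: invoke \Cref{lem:bias-robust_cnd}, compute $\operatorname{D}L$ via bilinearity, apply the RKHS Cauchy--Schwarz bound $|k_\theta(x,y)|^2 \le k_\theta(x,x)\,k_\theta(y,y)$, and then control $\sup_y k_\theta(y,y)$ by separating the score-dependent part (bounded by $\gamma(\theta)$) from the pure-kernel part (bounded via $K \in C_b^{1,1}$). The only cosmetic difference is that the paper writes $k_\theta(y,y) = \|\S_{\P_\theta} K(y,\cdot)\|_\H^2$ and applies $(a+b)^2 \le 2(a^2+b^2)$ to the two summands of $\S_{\P_\theta}$, thereby avoiding cross terms altogether and landing directly on the bound $2\gamma(\theta) + 2C_{MK}$; your route expands all four terms of \eqref{eq:langevin_k0} and bounds the cross terms by a second RKHS Cauchy--Schwarz, picking up an intermediate $\sqrt{\gamma(\theta)}$ that you then absorb into $C(\gamma(\theta)+1)$---equivalent, just slightly less streamlined.
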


\noindent The proof is contained in \Cref{sec:proof_bias-robust}.
\textcolor{black}{The preconditions of \Cref{thm:bias-robust} can be satisfied through an appropriate choice of kernel $K$; see \Cref{subsec: default SO and K}.}
\textcolor{black}{A comparison of KSD-Bayes to existing robust generalised Bayesian methodologies for tractable likelihood can be found in \Cref{subsec: gen bayes comparison}.}
The difference in performance of robust and non-robust instances of KSD-Bayes is explored in detail in \Cref{sec:experiment}.


\section{Default Settings for KSD-Bayes} \label{subsec: choice of k}

The previous section considered $\beta$ to be fixed, but an appropriate selection of $\beta$ is essential to ensure the generalised posterior is calibrated.
The choice of $\beta$ is closely related to the choice of a Stein operator $\S_{\P_\theta}$ and kernel $K$; the purpose of this section is to recommend how these quantities are selected.
If the recommendations of this section are followed, then KSD-Bayes has no remaining degrees of freedom to be specified.

\subsection{Default Settings for $\S_{\P_\theta}$ and $K$}
\label{subsec: default SO and K}

For Euclidean domains $\X = \R^d$, we advocate the default use of \textcolor{black}{the Langevin Stein operator $\S_{\P_\theta}$ in \eqref{eq:stein_operator} and a kernel of the form
\begin{equation} 
	K(x,x') = \frac{M(x) M(x')^\top}{ \left(1 + (x-x')^\top \Sigma^{-1} (x-x') \right)^{\gamma} } , \label{eq: suggested kernel}
\end{equation}
where $\Sigma$ is a positive definite matrix, $\gamma \in (0,1)$ is a constant, and $M \in C_b^1(\R^d; \R^{d \times d})$ will be called a matrix-valued \textit{weighting function}\footnote{\color{black} The use of a non-constant weighting function is equivalent to replacing the Langevin Stein operator with a \textit{diffusion} Stein operator whose \textit{diffusion matrix} is $M(x)$; see \citet{Gorham2019}.}.}
\textcolor{black}{
	For $M(x) = I_d$, \eqref{eq: suggested kernel} is called an \textit{inverse multi-quadratic} (IMQ) kernel.
}
The IMQ kernel and the Langevin Stein operator have appealing properties in the context of KSD.
Firstly, under mild conditions on $\mathbb{P}$, $\text{KSD}(\mathbb{P} || \mathbb{P}_n) \rightarrow 0$  implies that $\mathbb{P}_n$ converges weakly to $\mathbb{P}$ \citep[][Theorem 4]{chen2019stein}.
This convergence control ensures that small values of $\text{KSD}(\P_\theta \| \P_n)$ imply similarity between $\P_\theta$ and $\P_n$ in the topology of weak convergence, so that minimising KSD is meaningful\footnote{Note that other common kernels (e.g., Gaussian or Mat\'{e}rn kernels) fail to provide convergence control \citep[][Theorem 6]{Gorham2017}.}.
Secondly, and on a more practical level, \textcolor{black}{the combination of Stein operator and IMQ kernel}, with $\gamma = 1/2$, was found to work well in previous studies \citep{chen2019stein,riabiz2020optimal}; we therefore also recommend $\gamma = 1/2$ as a default.
\textcolor{black}{
	The weighting function $M(x)$ facilitates an efficiency-robustness trade-off:
	If global bias robustness is \emph{not} required then we recommend setting $M(x) = I_d$ as a default, which enjoys the aforementioned properties of KSD.
	If global bias-robustness \emph{is} required then we recommend selecting $M(x)$ such that the supremum in \eqref{eq: main robust cdn} exists and the preconditions of \Cref{thm:bias-robust} are satisfied; see the worked examples in \Cref{sec:experiment} and the further discussion in \Cref{subsec: eff robust}.
}

\textcolor{black}{The theoretical analysis of \Cref{sec:theory} assumed that $K$ is fixed, but in our experiments we follow standard practice in the kernel methods community and recommend a data-adaptive choice of the matrix $\Sigma$.
All experiments we report used the $\ell_1$-regularised sample covariance matrix estimator of \citet{ollila2019optimal}.
The sensitivity of KSD-Bayes to the choice of kernel parameters is investigated in \Cref{subsec: sensitivity kernel choice}.
}

\subsection{Default Setting for $\beta$}
\label{subsec: beta setting}

For a simple normal location model, as described in \Cref{sec:experiment-nl}, and in a well-specified setting, the asymptotic variance of the KSD-Bayes posterior with $\beta = 1$ is never smaller than that of the standard posterior.
This provides a heuristic motivation for the default $\beta = 1$.
However, in a misspecified setting smaller values of $\beta$ are needed to avoid over-confidence in the generalised posterior, taking misspecification into account; see the recent review of \cite{Wu2020}.
Here we aim to pick $\beta$ such that the scale of the asymptotic precision matrix of the generalised posterior ($H_*$; \Cref{thm:lan}) matches that of the minimum KSD point estimator ($H_* J_*^{-1} H_*$; \Cref{lem:an_ksd}), an approach proposed in \cite{Lyddon2018}.
\textcolor{black}{This ensures the scale of the generalised posterior matches the scale of the sampling distribution of a closely related estimator whose frequentist properties can be analysed when the statistical model is misspecified.}
Since $\P$ is unknown, estimators of $H_*$ and $J_*$ are required. 
We propose the following default for $\beta$:
\begin{align}
\beta = \min\left( 1, \beta_n \right) \quad \text{ where } \quad \beta_n = \frac{\text{tr}(H_n J_n^{-1} H_n)}{\text{tr}(H_n)} , \label{eq:beta_choice_approx}
\end{align}
where the matrix $H_*$ is approximated using $H_n := \nabla_\theta^2 \operatorname{KSD}^2(\mathbb{P}_{\theta} \| \mathbb{P}_n) \big|_{\theta = \theta_n}$, and the matrix $J_*$ is approximated using
\begin{align*}
J_n := \frac{1}{n} \sum_{i=1}^n S_n(x_i, \theta_n) S_n(x_i, \theta_n)^\top ,
\quad S_n(x,\theta) := \frac{1}{n} \sum_{i=1}^n \nabla_\theta ( \mathcal{S}_{\mathbb{P}_\theta} \mathcal{S}_{\mathbb{P}_\theta} K(x,x_i) ) .
\end{align*}
\textcolor{black}{
The minimum of $\beta = 1$ and $\beta = \beta_n$ taken in \eqref{eq:beta_choice_approx} provides a safeguard against selecting a value of $\beta$ that over-shrinks the posterior covariance matrix --- a phenomenon that we observed for the experiments reported in \Cref{subsec: scale est,subsec: kernel family,subsec: exponential model}, due to poor quality of the approximations $H_n$ and $J_n$ when $n$ is small.
}
The above expressions are derived for the exponential family model in \Cref{apx: expfam_asmp}.

\vspace{5pt}
This completes our methodological and theoretical development, and next we turn to empirical performance assessment.


\section{Empirical Assessment} \label{sec:experiment}

In this section four distinct experiments are presented.
The first experiment, in \Cref{sec:experiment-nl}, concerns a normal location model, allowing the standard posterior and our generalised posterior to be compared and confirming our robustness results are meaningful.
\Cref{subsec: scale est} presents a two-dimensional precision estimation problem, where standard Bayesian computation is challenging but computation with KSD-Bayes is trivial. 
Then, \Cref{subsec: kernel family} presents a 25-dimensional kernel exponential family model, and \Cref{subsec: exponential model} presents a 66-dimensional exponential graphical model; in both cases a Bayesian analysis has not, to-date, been attempted due to severe intractability of the likelihood.
In addition, the kernel exponential family model allows us to explore a multi-modal dataset and to understand the potential limitations of KSD-Bayes in that context (c.f. \Cref{subsec: pathologies}). 
For all experiments, the default settings of \Cref{subsec: choice of k} were used.
\textcolor{black}{An example of KSD-Bayes applied to a discrete dataset is presented in \Cref{subsec: other-space}.}

\subsection{Normal Location Model} \label{sec:experiment-nl}

For expositional purposes we first consider fitting a normal location model $\mathbb{P}_\theta = \mathcal{N}(\theta,1)$ to a dataset \textcolor{black}{$\{x_i\}_{i=1}^n$}.
Our aim is to illustrate the robustness properties of KSD-Bayes, and we therefore \textcolor{black}{generated the dataset using a} contaminated data-generating model where, \textcolor{black}{for each index $i = 1,\dots,n$ independently, with probability $1 - \epsilon$ the datum $x_i$ was drawn from $\P_\theta$ with ``true'' parameter $\theta = 1$, otherwise $x_i$ was drawn from $\mathbb{P}_y = \mathcal{N}(y,1)$, } so that $y$ and $\epsilon$ control, respectively, the nature and extent of the contamination in the dataset.
The task is to make inferences for $\theta$ based on a contaminated dataset of size $n = 100$.
The prior on $\theta$ was $\mathcal{N}(0,1)$.

\begin{figure}[t!]
	\centering
	\includegraphics[width = 0.9\textwidth,clip,trim = 0cm 10cm 0.5cm 10cm]{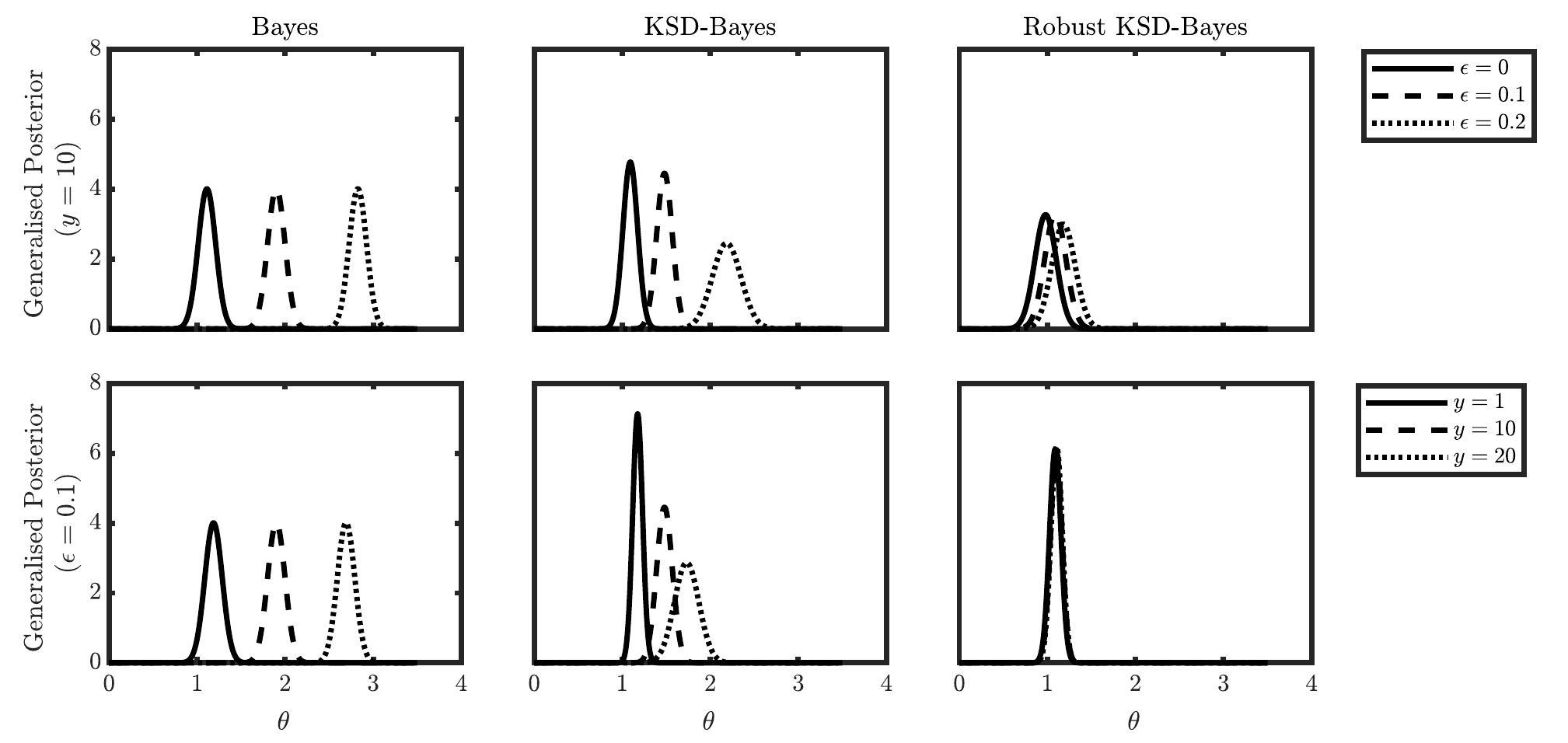}
	\caption{\textcolor{black}{Posteriors and generalised posteriors for the normal location model. The true parameter value is $\theta = 1$, while a proportion $\epsilon$ of the data were contaminated by noise of the form $\mathcal{N}(y,1)$.
	In the top row $y=10$ is fixed and $\epsilon \in \{0,0.1,0.2\}$ are considered, while in the bottom row $\epsilon = 0.1$ is fixed and $y \in \{1,10,20\}$ are considered.
	}
	}
	\label{fig:normal location model}
\end{figure}

The standard Bayesian posterior is depicted in the leftmost panels of \Cref{fig:normal location model}, for varying $\epsilon$ (top row) and varying $y$ (bottom row).
Straightforward calculation shows that the expected posterior mean is $\frac{n}{n+1} \left[ \theta  + \epsilon (y - \theta) \right]$, which increases linearly as either $y$ or $\epsilon$ are increased, with the other fixed.
This behaviour is evident in the leftmost panels of \Cref{fig:normal location model}.
The generalised posterior from KSD-Bayes is depicted in the central panels of \Cref{fig:normal location model}.
This generalised posterior is slightly less sensitive to contamination compared to the standard posterior.
Moreover, the variance slightly increases  whenever either $\epsilon$ or $y$ are increased, as a result of estimating $\beta$ (c.f. \Cref{subsec: beta setting}).
In the rightmost panels of \Cref{fig:normal location model} we display the robust generalised posterior using the weighting function $M(x) = (1 + x^2)^{-1/2}$, intended to bound the influence of large values in the dataset.
\textcolor{black}{This choice of $M(x)$ vanishes just fast enough as $|x| \rightarrow \infty$ to ensure that the bias-robustness conditions of \Cref{thm:bias-robust} are satisfied; see \Cref{subsec: eff robust}.}
The effect is clear from the bottom right panel of \Cref{fig:normal location model}, where even for $y = 20$ (and $\epsilon$ fixed to a small value, $\epsilon = 0.1$) the robust generalised posterior remains centred close to the true value $\theta = 1$.
While our theoretical results relate to $y$ and do not guarantee robustness when $\epsilon$ is increased, the top right panel in \Cref{fig:normal location model} suggests that the robust generalised posterior is indeed robust in this regime as well.
\Cref{fig:PIFy_ex} displays the posterior influence function \eqref{eq: def pif} for this normal location model.
This reveals that the standard Bayesian posterior is not bias-robust, since the tails of the posterior are highly sensitive to the contaminant $y$.
In contrast, the tails of the generalised posterior are insensitive to the contaminant.
This appears to be the case for both weighting functions, despite only one weighting function satisfying the conditions of \Cref{thm:bias-robust}.

\begin{figure}[t!]
	\centering
	\subcaptionbox{$\theta \mapsto | \operatorname{PIF}(y = 2.0, \theta, \P_n) |$}{\includegraphics[width=0.45\textwidth]{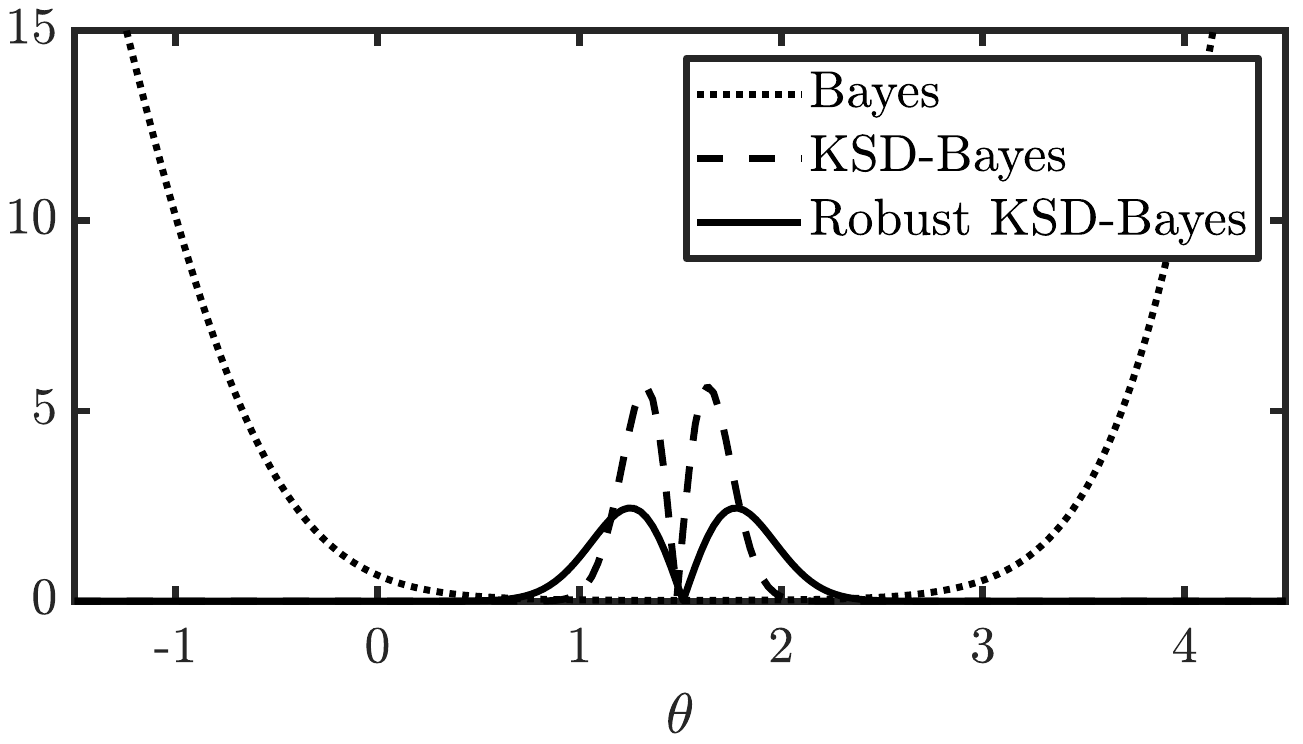}}
	\subcaptionbox{$\theta \mapsto | \operatorname{PIF}(y = 20, \theta, \P_n) |$}{\includegraphics[width=0.45\textwidth]{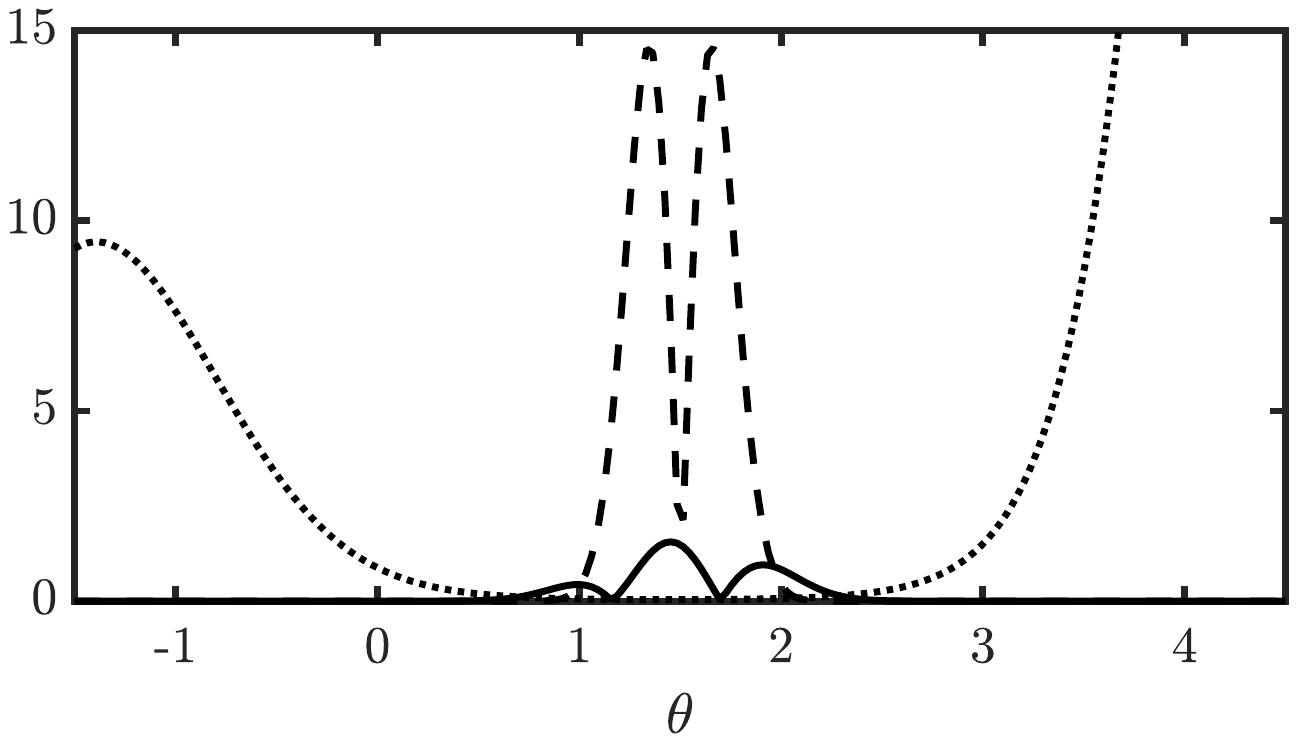}}
	\caption{Posterior influence function for the normal location model.}
	\label{fig:PIFy_ex}
\end{figure}

\subsection{Precision Parameters in an Intractable Likelihood Model} \label{subsec: scale est}

Our second experiment is due to \cite{liu2019fisher}, and concerns an exponential family model $p_\theta(x) = \exp( \theta \cdot t(x) - a(\theta) + b(x) )$, where $\theta \in \mathbb{R}^2$ are parameters to be inferred and $x \in \mathbb{R}^5$.
The model specification is completed with
\begin{align*}
t(x) & = (\tanh(x_{(4)}),\tanh(x_{(5)})) , \quad    b(x)  = \textstyle -0.5 \sum_{i=1}^5 x_{(i)}^2 + 0.6 x_{(1)} x_{(2)} + 0.2 \sum_{i=3}^5 x_{(1)} x_{(i)} .
\end{align*}
Despite the apparent simplicity of this model, the term $a(\theta)$, which determines the normalisation constant, is analytically intractable and exact simulation from this data-generating model is not straightforward (excluding the case $\theta = 0$). 
As a consequence, standard Bayesian analysis is not practical without, for example, the development of model-specific numerical methods, such as cubature rules to approximate the intractable normalisation constant.
In sharp contrast, the generalised posterior produced by KSD-Bayes is available in closed form for this model.
Our aim here is to assess robustness of the generalised posterior, focusing on the setting where $y$ is fixed and $\epsilon$ is increased, since this is the regime for which our theoretical results do \textit{not} hold.
A dataset of size $n=500$ was generated from 
the model $\P_\theta$ with true parameter $\theta = (0,0)$, so that $\mathbb{P}_\theta$ has the form $\mathcal{N}(0,\Sigma)$ and can be exactly sampled.
Each datum $x_i$ was, with probability $\epsilon$, shifted to $x_i + y$ where $y = (10,\dots,10)$.
The prior on $\theta$ was $\mathcal{N}(0,10^2 I)$.

\begin{figure}[t!]
	\centering
	\includegraphics[width = 0.8\textwidth,clip,trim = 4cm 10.5cm 4cm 8.5cm]{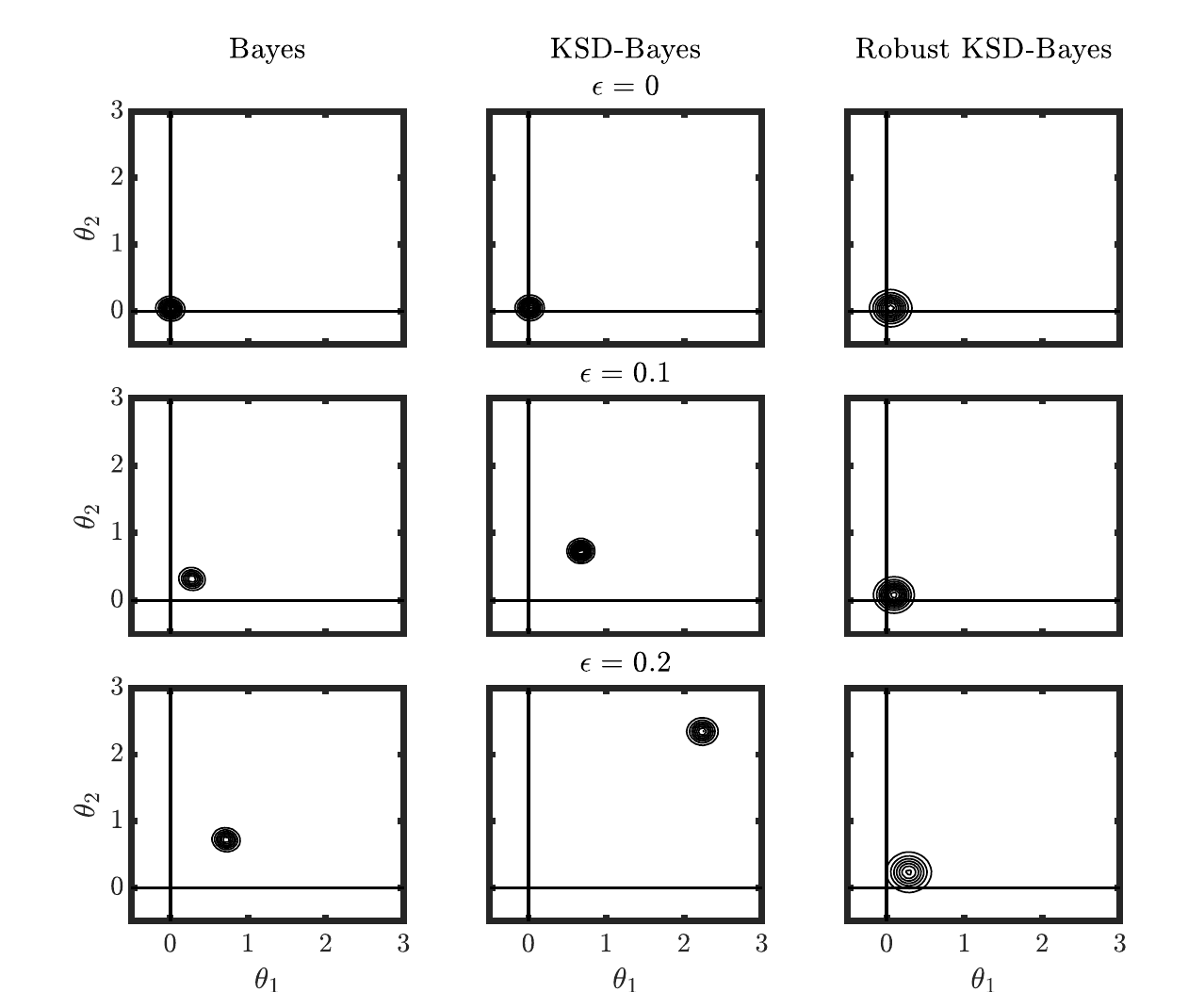}
	\caption{\textcolor{black}{Posteriors and generalised posteriors for the \cite{liu2019fisher} model. The true parameter value is $\theta = 0$, while a proportion $\epsilon$ of the data were contaminated by being shifted by an amount $y = (10,10)$.}}
	\label{fig: scale param estimation}
\end{figure}

The left column in \Cref{fig: scale param estimation} displays the standard posterior\footnote{To obtain these results, the intractable normalisation constant was approximated using a numerical cubature method.
To do this, we recognise that $p_\theta(x) = \mathcal{N}(x;0,\Sigma) r_\theta(x) / C_\theta$ where $r_\theta(x) = \exp(\theta_1 \tanh(x_4) + \theta_2 \tanh(x_5))$.
Then $C_\theta = \int r_\theta(x) \mathrm{d}\mathcal{N}(x;0,\Sigma)$, which was approximated using (polynomial order 10) Gauss-Hermite cubature in 2D.}, which is seen to be sensitive to contamination in the dataset, in much the same way observed for the normal location model in \Cref{sec:experiment-nl}.
The generalised posterior with $M(x) = I_d$ is depicted in the middle column of \Cref{fig: scale param estimation}, and is seen to be \textit{more} sensitive to contamination compared to the standard Bayesian posterior, in that the mean moves further from 0 as $\epsilon$ is increased.
Finally, in the right column of \Cref{fig: scale param estimation} we display the robust generalised posterior obtained with weighting function
\begin{align*}
M(x) = \text{diag}\left( (1+x_{(1)}^2+\dots+x_{(5)}^2)^{-1/2} , (1+x_{(1)}^2+x_{(2)}^2)^{-1/2} , \dots , (1+x_{(1)}^2+x_{(5)}^2)^{-1/2} \right) ,
\end{align*}
\textcolor{black}{which ensures the criteria for bias-robustness in \Cref{thm:bias-robust} are satisfied.}
From the figure, we observe that the robust generalised posterior remains centred close to the data-generating value $\theta = 0$, even for the largest contamination proportion considered ($\epsilon = 0.2$), with a variance that increases as $\epsilon$ is increased.
At $\epsilon = 0$, the spread of the robust generalised posterior is almost twice that of the standard posterior, which reflects the trade-off between robustness and efficiency.

\subsection{Robust Nonparametric Density Estimation} \label{subsec: kernel family}

Our third experiment concerns density estimation using the kernel exponential family, and explores the performance of KSD-Bayes when the dataset is multi-modal (c.f. \Cref{subsec: pathologies}).
Let $q$ denote a reference p.d.f. on $\mathbb{R}^d$, and let $\kappa : \mathbb{R}^d \times \mathbb{R}^d \rightarrow \mathbb{R}$ be a reproducing kernel.
The \textit{kernel exponential family} model  \citep{canu2006kernel}
\begin{align}
p_\theta(x) \propto q(x) \exp( \langle f , \kappa(\cdot,x) \rangle_{\mathcal{H}(\kappa)} ) \label{eq: KEF}
\end{align}
is parametrised by $f$, an element of the RKHS $\mathcal{H}(\kappa)$.
The implicit normalisation constant of \eqref{eq: KEF}, if it exists, is typically an intractable function of $f$.
There appears to be no Bayesian or generalised Bayesian treatment of \eqref{eq: KEF} in the literature, which may be due to intractability of the likelihood.
Indeed, we are not aware of a computational algorithm that would easily facilitate Bayesian inference for \eqref{eq: KEF}, so a standard Bayesian analysis will not be presented.
As the theory in this paper is finite-dimensional, we consider a finite-rank approximation of elements in $\mathcal{H}(\kappa)$ of the form $f(x) = \sum_{i=1}^p \theta_{(i)} \phi_{(i)}(x)$, with coefficients $\theta_{(i)} \in \mathbb{R}$ and basis functions $\phi_{(i)} \in \mathcal{H}(\kappa)$, where we will take $\theta$ to be $p=25$ dimensional.
Finite rank approximations have previously been considered for frequentist learning of kernel exponential families in \cite{strathmann2015gradient,sutherland2018efficient}.
In our case, the finite rank approximation ensures that any prior we induce on $f$ via a prior on the coefficients $\theta_{(i)}$ will be supported on $\mathcal{H}(\kappa)$.
If one is interested in a well-defined limit as $p \rightarrow \infty$ then one will need to ensure a.s. convergence of the sum in this limit.
If the $\phi_i$ are orthonormal in $\mathcal{H}(\kappa)$, and if the $\theta_{(i)}$ are a priori independent, then  $\mathbb{E}[\|f\|_{\mathcal{H}(\kappa)}^2] = \sum_{i=1}^p \mathbb{E}[\theta_{(i)}^2]$ so a sufficient condition, for example, is $\mathbb{E}[\theta_{(i)}^2] = O(n^{-1-\delta})$ for some $\delta > 0$.

Our interest is in the performance of KSD-Bayes applied to a multi-modal dataset, and to explore these we considered the \emph{galaxy data} of \cite{postman1986probes,roeder1990density}, comprising $n=82$ velocities in km/sec of galaxies from 6 well-separated conic sections of a survey of the \textit{Corona Borealis}. 
The data were whitened prior to computation, but results are reported with the original scale restored.
For the kernel exponential family we use $q(x) = \mathcal{N}(0, 3^2)$ and the kernel $\kappa(x,y) = \exp(- (x-y)^2 / 2)$, which ensures that \eqref{eq: KEF} is normalisable due to Proposition 2 of \cite{wenliang2019learning}. 
For basis functions we use $\phi_{(i+1)}(x) = (x^i / \sqrt{i!}) \exp( - x^2 / 2 )$, \textcolor{black}{$i = 0,\dots,24$}, which are orthonormal in $\mathcal{H}(\kappa)$ \citep{steinwart2006explicit}.
For our prior we let $\theta_{(i)} \sim \mathcal{N}(0,10^2 i^{-1.1})$, which is weakly informative within the constraint of having a well-defined $p \rightarrow \infty$ limit.
Our contamination model replaces a proportion $\epsilon$ of the dataset with values independently drawn from $\mathcal{N}(y  , 0.1^2 )$, with $y=5$, shown as black bars in the top row of \Cref{fig: KEF}.

\begin{figure}[t!]
	\centering
	\includegraphics[width = 0.9\textwidth,clip,trim = 3cm 10cm 2cm 9cm]{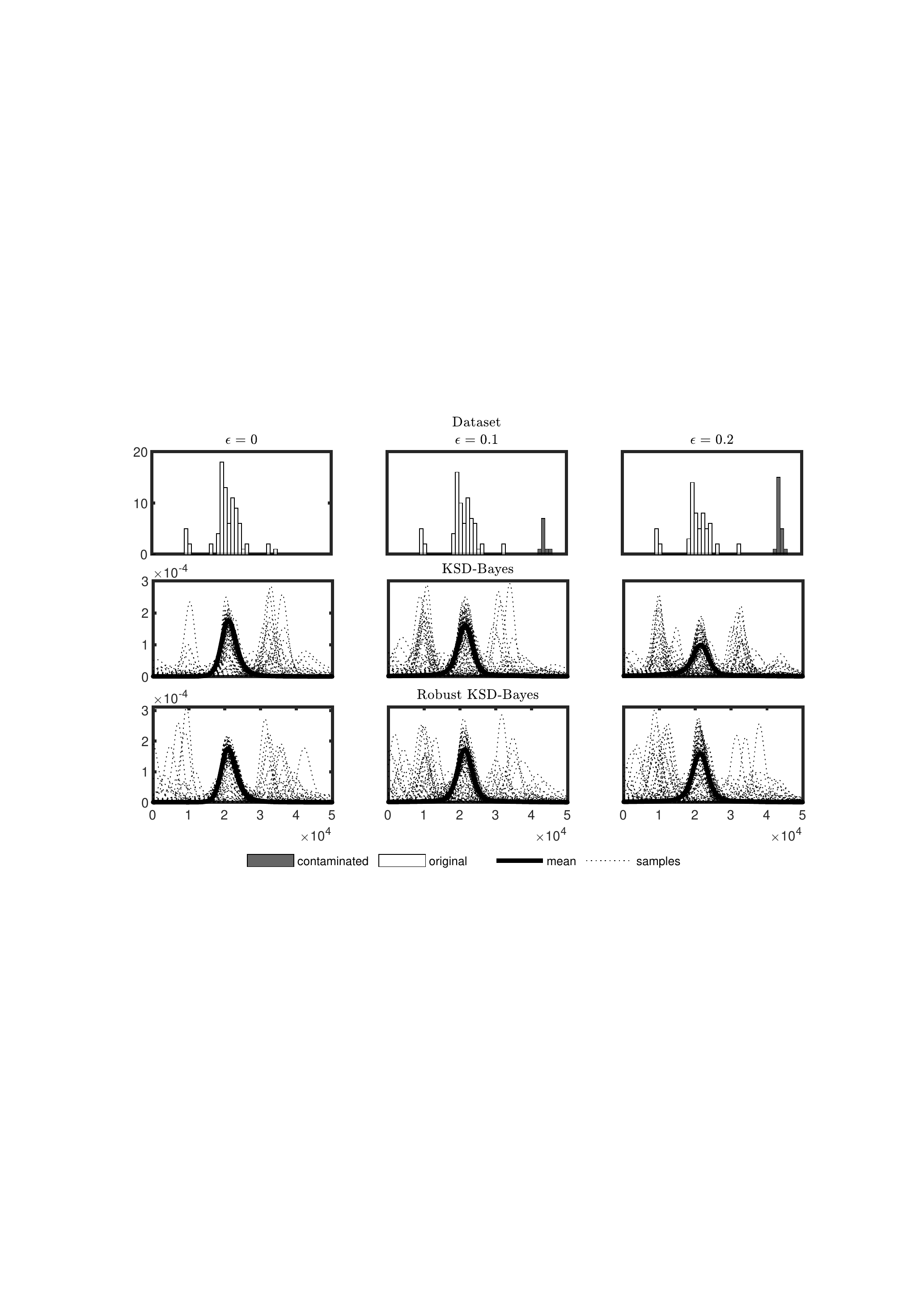}
	\caption{\textcolor{black}{Generalised posteriors for the kernel exponential family model. A proportion $\epsilon$ of the data (top row) were contaminated. }}
	\label{fig: KEF}
\end{figure}

The generalised posterior with $M(x) = 1$ is displayed in the second row of \Cref{fig: KEF}, with the bottom row presenting a robust generalised posterior based on the weighting function $M(x) = (1+x^2)^{-1/2}$, which ensures the conditions of \Cref{thm:bias-robust} are satisfied.
The results we present are for fixed $y$ and increasing $\epsilon$, since this regime is \textit{not} covered by \Cref{thm:bias-robust}.
The generalised posterior mean is a uni-modal density, which we attribute to the insensitivity of KSD to mixture proportions discussed in \Cref{subsec: pathologies}, but multi-modal densities are evident in sampled output.
Our results indicate that the robust weighting function reduces sensitivity to contamination in the dataset (note how the mass in the central mode of the generalised posterior decreases when $\epsilon = 0.2$, when the identity weighting function is used).
Whether this insensitivity of KSD to well-separated regions in the dataset is desirable or not will depend on the application, but in this case it happens to be beneficial.

\subsection{Network Inference with Exponential Graphical Models} \label{subsec: exponential model}

Our final example concerns an exponential graphical model, representing negative conditional relationships among a collection of random variables $W = (W_1,\dots,W_d)$, described in \citet[][Sec. 2.5]{yang2015graphical}.
The likelihood function is
\begin{align}
p_{W | \theta}(w | \theta) & \propto \exp\Big( - \sum_{i} \theta_{(i)} w_{(i)} - \sum_{i < j} \theta_{(i,j)} w_{(i)} w_{(j)} \Big) ,  
\label{eq: EGM defn}
\end{align}
where $w \in (0,\infty)^d$ and $\theta_{(i)} > 0, \theta_{(i,j)} \geq 0$.
The total number of parameters is $p = d(d+1)/2$.
Simulation from this model is challenging and the normalisation constant is an intractable integral, so in what follows a standard Bayesian analysis is not attempted.
Our aim is to fit \eqref{eq: EGM defn} to a protein kinase dataset, mimicking an experiment presented by \cite{Yu2016} in the score-matching context.
This dataset, originating in \cite{sachs2005causal}, consists of quantitative measurements of $d=11$ phosphorylated proteins and
phospholipids, simultaneously measured from single cells using a fluorescence-activated cell sorter, so the parameter $\theta$ is 66-dimensional.
Nine stimulatory or inhibitory interventional conditions were combined to give a total of $7,466$ cells in the dataset.
The data were square-root transformed and samples containing values greater than 10 standard deviations from their mean were judged to be \textit{bona fide} outliers and were removed. 
The remaining dataset of size $n=7,449$ was normalised to have unit standard deviation.
In most cases the measurement reflects the activation state of the kinases, and scientific interest lies in the mechanisms that underpin their interaction\footnote{There is no scientific basis to expect only negative conditional dependencies in the dataset; in this sense the model is likely to be misspecified. Our interest is in assessing the robustness properties of KSD-Bayes only, and no scientific conclusions will be drawn using this model.}.
These mechanisms are often summarised as a \emph{protein signalling network}, whose nodes are the $d$ proteins and whose edges correspond to the pairs of proteins that interact.
An important statistical challenge is to \emph{estimate} a protein signalling network from such a dataset \citep{oates2013bayesian}.
However, it is known that existing approaches to \emph{network inference} are non-robust, in a general sense, with community challenges regularly highlighting the different conclusions drawn by different estimators applied to an identical dataset \citep{hill2016inferring}.
Our interest is in whether networks estimated using KSD-Bayes are robust.

For our experiment the variables $w_{(i)}$ were re-parametrised as $x_{(i)} := \log(w_{(i)})$, in order that they are unconstrained and $\P_\theta \in \mathcal{P}_{\text{S}}(\R^d)$.
For the contamination model, a proportion $\epsilon$ of the data were replaced with the fixed value $y = (10,\dots,10) \in \mathbb{R}^d$.
Parameters were \textit{a priori} independent with $\theta_{(i)} \sim \mathcal{N}_{\text{T}}(0,1)$, $\theta_{(i,j)} \sim \mathcal{N}_{\text{T}}(0,1)$, where $\mathcal{N}_{\text{T}}$ is the Gaussian distribution truncated to the positive orthant of $\mathbb{R}^p$.
This prior is conjugate to the likelihood, as explained in \Cref{subsec: exp fam mod}, and allows the \textcolor{black}{generalised posterior} to be exactly computed.
Generalised posteriors were produced both without and with the exponential weighting function $[M(x)]_{(i,i)} = \exp(-x_{(i)})$, the latter aiming to reduce sensitivity to large values in the dataset and coinciding with the identity weighting function at $x = 0$.
From these, protein signalling networks were estimated using the $s$ most significant edges, defined as the $s$ largest values of $\bar{\theta}_{(i,j)} / \sigma_{(i,j)}$, where the generalised posterior marginal for $\theta_{(i,j)}$ is $\mathcal{N}_{\text{T}}(\bar{\theta}_{(i,j)} , \sigma_{(i,j)}^2)$.
Results are shown in \Cref{fig: EGM networks}; to optimise visualisation we report results for $s=5$, though for other values of $s$ similar conclusions hold.
It is interesting to observe little agreement between the networks returned when the identity weighting function is used, which may reflect the difficulty of the network inference task.
Reduced sensitivity to $\epsilon$ was observed when the exponential weighting function was used.
In \Cref{fig: EGM networks} we report the number of edges that are consistent with the network reported in \citet[Fig. 3A]{sachs2005causal}; the use of the exponential weighting function resulted in more edges being consistent with this benchmark network.

\begin{figure}[t!]
	\centering
	\includegraphics[width = 0.9\textwidth,clip,trim = 2cm 9.6cm 1.9cm 9.9cm]{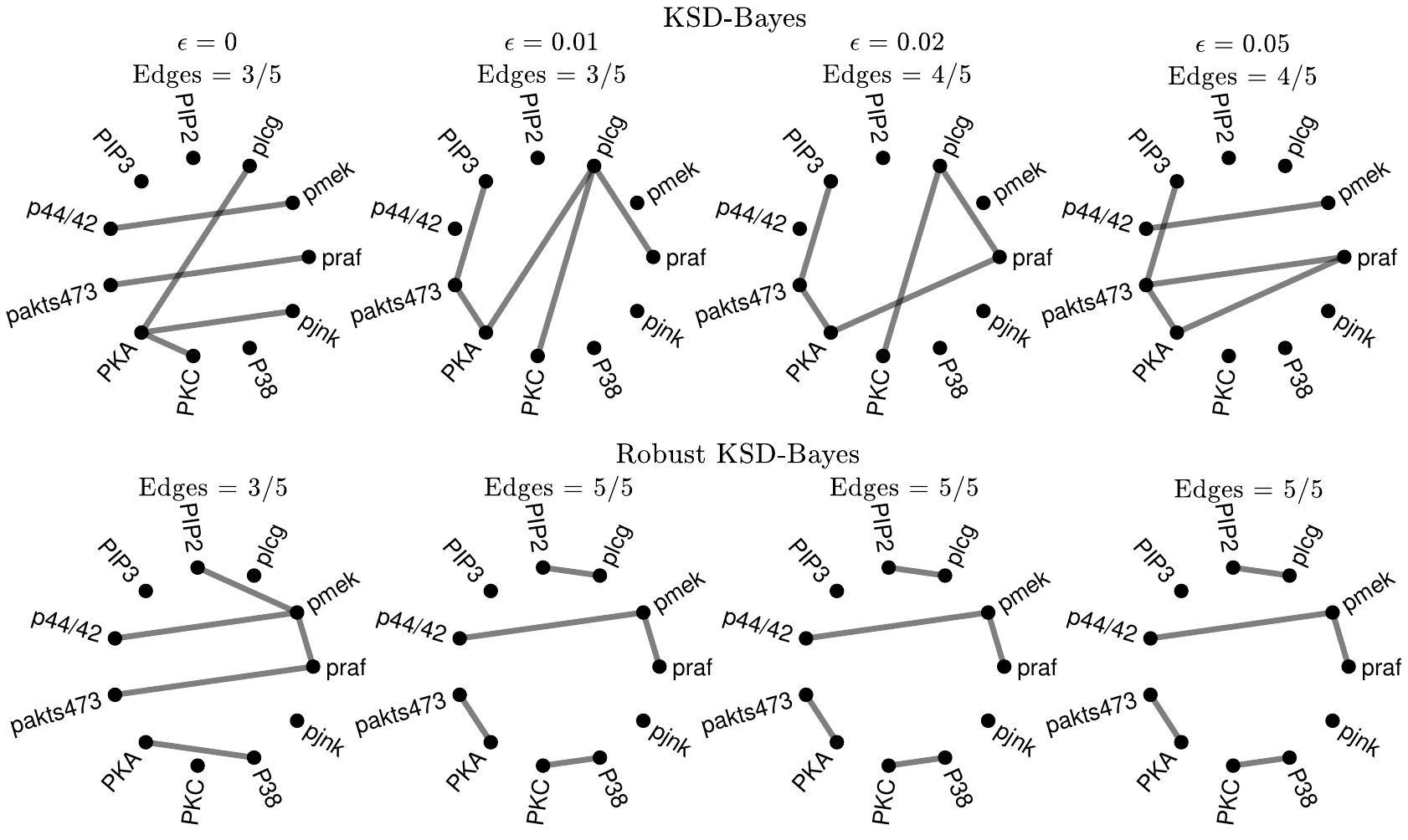}
	\caption{\textcolor{black}{Exponential graphical model; estimated protein signalling networks as a function of the proportion $\epsilon$ of contamination in the dataset.}
	}
	\label{fig: EGM networks}
\end{figure}

color{black}

\section{Conclusion} \label{sec:conclusion}

There is little existing literature concerning robust Bayesian inference in the setting of intractable likelihood.
Existing approaches to Bayesian inference for intractable likelihood fall into three categories:
(1) likelihood-free methods \citep[such as \textit{approximate Bayesian computation} and \textit{Bayesian synthetic likelihood};][]{tavare1997inferring, beaumont2002approximate, Marin2012, Price2018, Cherief-Abdellatif2019, Frazier2020b}, \textcolor{black}{(2) auxiliary variable MCMC \citep[such as the \textit{exchange algorithm} and \textit{pseudo-marginal} MCMC;][]{Moller2006, Murray2006, andrieu2009pseudo, Liang2010, Lyne2015, Doucet2015, andrieu2020metropolis}, } and (3) approximate likelihood methods \citep[such as \textit{pseudo-likelihood} and \textit{composite likelihood};][]{Besag1974, Dryden2002,eidsvik2014estimation}, which are of course also applicable beyond the Bayesian context.
Both (1) and (2) rely on either the ability to simulate from the generative model or the ability to unbiasedly estimate the data likelihood, whilst (3) represents an \textit{ad hoc} collection of approaches that are tailored to particular statistical models \citep[see the recent surveys in][]{Lyne2015, Park2018}.
These algorithms aim to approximate the standard Bayesian posterior, and do not attempt to confer robustness in situations where the model is misspecified.

This paper proposed KSD-Bayes, a generalised Bayesian procedure for likelihoods that involve an intractable normalisation constant.
KSD-Bayes provides robust generalised Bayesian inference in this context, including a theoretical guarantee of global bias-robustness over $\Theta$.
Moreover, and unlike existing Bayesian approaches to intractable likelihood, the generalised posterior can be approximated by standard sampling methods without additional levels of algorithmic complexity, even admitting conjugate analysis for the exponential family model.
From a theoretical perspective, the soundness of KSD-Bayes, in terms of consistency and asymptotic normality of the generalised posterior, was established.

Although KSD-Bayes has several appealing features, it is not a panacea for intractable likelihood.
The generalised posterior is not invariant to transformations of the dataset and, as discussed in \Cref{subsec: pathologies}, KSD can suffer from insensitivity to mixture proportions, which limits its applicability to models and datasets that are not ``too multi-modal''.
\textcolor{black}{The selection of $\beta$ remains an open problem for generalised Bayesian inference, and further regularisation may be required when the parameter $\theta$ is high-dimensional relative to the size $n$ of the dataset.}
These are challenging issues for future work.
In addition, our experiments focused on continuous data, though our theory was general.
The empirical performance of KSD-Bayes for discrete data remains to be assessed.

\paragraph{Acknowledgements:}

TM was supported EPSRC grant EP/N510129/1 at the Alan Turing Institute, UK.
JK was funded by EPSRC grant EP/L016710/1 and the Facebook Fellowship Programme.
FXB and CJO were supported by the Lloyd's Register Foundation programme on data-centric engineering at The Alan Turing Institute under the EPSRC grant EP/N510129/1.
\textcolor{black}{The authors thank the Associate Editor and three Reviewers for detailed feedback that led to an improved manuscript, and Oscar Key for pointing out an indexing error in an earlier version of the manuscript.}

{\small
\bibliography{bibliography.bib}
}

\appendix

\begin{center}
   \LARGE \textbf{Supplementary Material}
\end{center}

\vspace{3mm}

This electronic supplement contains proofs for all theoretical results in the main text, \textcolor{black}{as well as the additional empirical results referred to in the main text}.
First, in \Cref{sec:kernel_RKHS} a formal definition of a vector-valued RKHS is provided.
Proofs for the results in the main text are contained in \Cref{sec:appendix}, with the statements and proofs of auxiliary technical lemmas contained in \Cref{sec: proof_pre_result_4}.
\textcolor{black}{Additional empirical results are contained in \Cref{sec: emp appendix}.}


\section{Background on Vector-Valued RKHS} \label{sec:kernel_RKHS}

This appendix contains background on the matrix-valued kernels used in the main text. 
Our main references are \citet{Carmeli2006, Caponnetto2008,Carmeli2011}.
For simplicity we start with the scalar-valued case and define a scalar-valued kernel:

\begin{definition}[Scalar-valued kernel]
	A function $k: \X \times \X \to \R$ is called a \emph{(scalar-valued) kernel} if 
	\begin{enumerate}[label=(\roman*)]
	\item $k$ is \emph{symmetric}; i.e. $k(x, x') = k(x', x)$ for all $x, x' \in \X$,
	\item $k$ is \emph{positive semi-definite}; i.e. $
	\sum_{i=1}^{n} \sum_{j=1}^{n} c_i c_j k(x_i, x_j) \ge 0$ for all $n \in \N$, $c_1, \dots, c_n \in \R$ and all $x_1, \dots, x_n \in \X$.
	\end{enumerate}
\end{definition}
To every scalar-valued kernel is an associated Hilbert space $\H$ of functions $h: \X \to \R$, called the \emph{reproducing kernel Hilbert space} (RKHS) of the kernel.
\begin{definition}[Reproducing kernel Hilbert space] \label{def: rkhs app}
    A Hilbert space $\H$ is said to be \emph{reproduced} by a kernel $k : \X \times \X \rightarrow \R$ if
    \begin{enumerate}[label=(\roman*)]
        \item $k(x, \cdot) \in \H$ for all $x \in \X$,
        \item $\langle h, k(x, \cdot) \rangle_{\H} = h(x)$ for all $x \in \X$ and $h \in \H$. \label{item: repro prop}
    \end{enumerate}
    \Cref{item: repro prop} is called the \emph{reproducing property} of $k$ in $\H$.
\end{definition}
It can be shown that, for every kernel $k$, there exists a unique Hilbert space $\H$ reproduced by $k$ \citep[Theorem~2.14]{Paulsen2016}.
These definitions can be generalised in the form of a matrix-valued kernel $K: \X \times \X \to \R^{m \times m}$.

\begin{definition}[Matrix-valued kernel]
	A function $K: \X \times \X \to \R^{m \times m}$, $m > 1$, is called a \emph{(matrix-valued) kernel} if 
	\begin{enumerate}[label=(\roman*)]
	\item $K$ is \emph{symmetric}; i.e. $K(x, x') = K(x', x)$ for all $x, x' \in \X$, \label{item: symm prop}
	\item $k$ is \emph{positive semi-definite}; i.e. $
	\sum_{i=1}^{n} \sum_{j=1}^{n} c_i \cdot k(x_i, x_j) c_j \ge 0$ for all $n \in \N$, $c_1, \dots, c_n \in \R^m$ and all $x_1, \dots, x_n \in \X$.
	\end{enumerate}
\end{definition}

As a direct generalisation of the scalar-valued case, there exists a uniquely associated Hilbert space $\H$ of functions $h: \X \to \R^m$ to every matrix-valued kernel $K : \X \times \X \rightarrow \R^{m \times m}$.
To define this Hilbert space, whose inner product we denote $\langle \cdot , \cdot \rangle_\H$, some additional notation is required:
Let $F$ be a $\R^{m \times m}$-valued function and let $F_{i,-}$ denote the vector-valued function $F_{i,-} : \X \rightarrow \R^m$ defined by the the $i$-th row of $F$.
Similarly, let $G$ be a $\R^{m \times m}$-valued function and let $G_{-,j}$ denote the vector-valued function $G_{-,j} : \X \rightarrow \R^m$ defined by the $j$-th column of $G$.
Formally define the symbols $\langle F, g \rangle_{\H}$, $\langle f, G \rangle_{\H}$ and $\langle F, G \rangle_{\H}$ as follows
\begin{align*}
\langle F, g \rangle_{\H} & := \left[ \begin{array}{c}
\langle F_{1,-}, g \rangle_{\H} \\
\vdots \\
\langle F_{m,-}, g \rangle_{\H}
\end{array} \right] \in \R^m, \quad
\langle f, G \rangle_{\H} := \left[ \begin{array}{c}
\langle f, G_{-,1} \rangle_{\H} \\
\vdots \\
\langle f, G_{-,m} \rangle_{\H}
\end{array} \right] \in \R^m, \\
\langle F, G \rangle_{\H} & := \left[ \begin{array}{ccc}
\langle F_{1,-}, G_{-,1} \rangle_{\H} & \cdots & \langle F_{1,-}, G_{-,m} \rangle_{\H} \\
\vdots & & \vdots \\
\langle F_{m,-}, G_{-,1} \rangle_{\H} & \cdots & \langle F_{m,-}, G_{-,m} \rangle_{\H}
\end{array} \right] \in \R^{m \times m},
\end{align*}
where these are to be interpreted as compound symbols only (i.e. we are not attempting to define an inner product on matrix-valued functions).
Then, the generalisation of the reproducing property (\cref{item: repro prop} in \Cref{def: rkhs app}) to a matrix-valued kernel $K$ is 
\begin{align*}
h(x) = \langle h, K(x, \cdot) \rangle_{\H} = \left[ \begin{array}{c}
\langle h, K_{-,1}(x, \cdot) \rangle_{\H} \\
\vdots \\
\langle h, K_{-,m}(x, \cdot) \rangle_{\H}
\end{array} \right]
\end{align*}
for all $x \in \X$ and $h \in \H$ \citep{Carmeli2011}.
The generalisation of the symmetry property (\cref{item: symm prop} in \Cref{def: rkhs app}) is straight-forward; $K(x,x') = K(x',x)$ for all $x,x' \in \X$.
A Hilbert space $\H$ for which these two properties are satisfied is called a \emph{vector-valued RKHS} that we say is \emph{reproduced} by the matrix-valued kernel $K$.
\textcolor{black}{Matrix-valued kernels and their associated vector-valued RKHS have recently been exploited in the context of Stein's method  \citep[e.g.][]{Barp2019,wang2019stein}.}


\section{Proofs of Theoretical Results} \label{sec:appendix}

This appendix provides proofs for all theoretical results in the main text. 
On occasion we refer to auxiliary theoretical results, which are stated and proven in \Cref{sec: proof_pre_result_4}.

\subsection{Proof of Result in \Cref{sec:background}} \label{apx:proof_KSD_derivation}

The following properties of the Stein operator $\S_\Q$ will be useful:

\begin{lemma} \label{lem:ub_kd2}
	Under \Cref{asmp:derivation_KSD}, we have, for all $x, x' \in \mathcal{X}$ and $h \in \U$,
	\begin{enumerate}[label=(\roman*)]
	    \item $\S_\Q K(x, \cdot) \in \H$ , \label{item: var stein 1}
	    \item $\S_\Q[h](x) = \langle h(\cdot), \S_\Q K(x, \cdot) \rangle_{\H}$ , \label{item: var stein 2}
	    \item $| \S_\Q \S_\Q K(x, x') | 
	\le \sqrt{ \S_\Q \S_\Q K(x, x) } \sqrt{ \S_\Q \S_\Q K(x', x') }$ . \label{item: var stein 3}
	\end{enumerate}
\end{lemma}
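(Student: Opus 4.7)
The plan is to prove all three claims via a single Riesz representation argument. Since \Cref{asmp:derivation_KSD} assumes $h \mapsto \S_\Q[h](x)$ is a continuous linear functional on the vector-valued RKHS $\H$ for each fixed $x$, the Riesz representation theorem provides a unique $g_x \in \H$ with $\S_\Q[h](x) = \langle h, g_x \rangle_\H$ for every $h \in \H$. The bulk of the work is in identifying $g_x = \S_\Q K(x, \cdot)$; once this is done, part (i) is immediate from the fact $g_x \in \H$, part (ii) is the Riesz identity itself, and part (iii) reduces to Cauchy--Schwarz in $\H$.

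To identify $g_x$, I would work columnwise. The reproducing property of the matrix-valued kernel $K$ in the v-RKHS $\H$ can be expressed as $\langle h, K_{-,j}(\cdot, y) \rangle_\H = [h(y)]_{(j)}$ for each $y \in \X$ and $j \in \{1, \dots, d\}$; see \Cref{sec:kernel_RKHS}. Substituting $h = g_x$ and using the Riesz identity backwards, together with the definition $\S_\Q K_{-,j}(x, y) := \S_\Q[K_{-,j}(\cdot, y)](x)$ from the footnote to \Cref{prop:derivation_KSD}, gives
\begin{align*}
[g_x(y)]_{(j)} = \langle g_x, K_{-,j}(\cdot, y) \rangle_\H = \S_\Q[K_{-,j}(\cdot, y)](x) = \S_\Q K_{-,j}(x, y).
\end{align*}
Assembling components, $g_x(y) = \S_\Q K(x, y)$ for all $y \in \X$, so $\S_\Q K(x, \cdot) = g_x \in \H$, which is part (i). Part (ii) then follows by just rewriting the Riesz identity with $g_x$ replaced by its explicit form.

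For part (iii), I would apply part (ii) with the function $h = \S_\Q K(x, \cdot)$, which is a legitimate element of $\H$ by part (i). Evaluating at $x'$ and using the second definition from the same footnote, $\S_\Q \S_\Q K(x, x') := \S_\Q[\S_\Q K(x, \cdot)](x')$, gives the key identity
\begin{align*}
\S_\Q \S_\Q K(x, x') = \langle \S_\Q K(x, \cdot), \S_\Q K(x', \cdot) \rangle_\H.
\end{align*}
The Cauchy--Schwarz inequality in $\H$ immediately produces the bound $|\S_\Q \S_\Q K(x, x')| \le \| \S_\Q K(x, \cdot) \|_\H \| \S_\Q K(x', \cdot) \|_\H$, and setting $x' = x$ in the identity above identifies $\| \S_\Q K(x, \cdot) \|_\H^2 = \S_\Q \S_\Q K(x, x)$, completing (iii).

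The main obstacle is not analytical but notational: the footnote's definition overloads the symbol $\S_\Q K$ so that the same object appears both as a function of the second argument of $K$ and as a value after applying $\S_\Q$ to the first; similarly $\S_\Q \S_\Q K$ combines both actions. The careful columnwise bookkeeping with the reproducing property is what makes the identification of $g_x$ precise, and once that is pinned down the rest of the argument is the standard Riesz-then-Cauchy--Schwarz template familiar from the scalar-valued KSD derivations in \cite{Chwialkowski2016a, Liu2016b}.
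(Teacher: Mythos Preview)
Your proof is correct and follows essentially the same approach as the paper: Riesz representation to obtain $g_x$, identification of $g_x$ with $\S_\Q K(x,\cdot)$ via the reproducing property, then Cauchy--Schwarz for part (iii). The only cosmetic difference is that you unpack the reproducing property columnwise, whereas the paper uses the compound symbol $\langle h, K(\cdot,x')\rangle_\H$ from \Cref{sec:kernel_RKHS} directly; the content is identical.
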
 

\begin{proof}
	First of all, since $h \mapsto \S_\Q[h](x)$ is a continuous linear functional on $\H$ for each fixed $x \in \X$ by assumption, from the Riesz representation theorem \cite[Theorem A.5.12]{Steinwart2008} there exists a \emph{representer} $g_x \in \H$ for each fixed $x \in \X$ s.t.
	\begin{align*}
	\S_\Q[h](x) = \langle h, g_x \rangle_{\H} .
	\end{align*}
	Second of all, the reproducing property $h(x') = \langle h(\cdot), K(\cdot, x') \rangle_\H$ holds for any $h \in \H$, where we recall that the inner product between $h \in \H$ and a matrix-valued function $K(x, \cdot)$ is defined in \Cref{sec:kernel_RKHS}.
	By the reproducing property, for all $x, x' \in \X$,
	\begin{align}
	g_x(x') = \langle g_x, K(\cdot, x') \rangle_{\H} = \S_\Q\left[ K(\cdot, x') \right](x) = \S_\Q K(x, x') . \label{eq:deriv_KSD_lem_eq1}
	\end{align}
	In particular, $\S_\Q K(x, \cdot) \in \H$ since $g_x \in \H$, establishing \cref{item: var stein 1}.
	Based on these two observations, we can rewrite $\S_\Q[h](x)$ at each fixed $x \in \X$ as 
	\begin{align}
	\S_\Q[h](x) & = \langle h, g_x \rangle_{\H} = \langle h(\cdot), \S_\Q K(x, \cdot) \rangle_{\H} , \label{eq:deriv_KSD_lem_eq2}
	\end{align}
	establishing \cref{item: var stein 2}.
	We now apply \eqref{eq:deriv_KSD_lem_eq2} with $h(\cdot) = \S_\Q K(x',\cdot)$ to deduce that
	\begin{align}
	\S_\Q \S_\Q K(x', x) = \S_\Q \left[ \S_\Q K(x', \cdot) \right](x) = \langle \S_\Q K(x', \cdot), \S_\Q K(x, \cdot) \rangle_{\H} .
	\label{eq: various stein}
	\end{align}
	Applying the Cauchy-Schwarz inequality, 
	\begin{align*}
	| \S_\Q \S_\Q K(x, x') | = | \langle \S_\Q K(x, \cdot), \S_\Q K(x', \cdot) \rangle_{\H} | \le \| \S_\Q K(x, \cdot) \|_\H \| \S_\Q K(x', \cdot) \|_\H .
	\end{align*}
	Here for each $x \in \X$ the norm term can computed using \eqref{eq: various stein}:
	\begin{align*}
	\| \S_\Q K(x, \cdot) \|_\H = \sqrt{ \langle \S_\Q K(x, \cdot), \S_\Q K(x, \cdot) \rangle_{\H} } = \sqrt{ \S_\Q \S_\Q K(x, x) } 
	\end{align*}
	Therefore for all $x, x' \in \X$ we have
	\begin{align*}
	| \S_\Q \S_\Q K(x, x') | \le \sqrt{ \S_\Q \S_\Q K(x, x) } \sqrt{ \S_\Q \S_\Q K(x', x') } ,
	\end{align*}
	establishing \cref{item: var stein 3}.
\end{proof}

\subsubsection{Proof of \Cref{prop:derivation_KSD}}

\begin{proof}
	From \cref{item: var stein 2} of \Cref{lem:ub_kd2}, for each $x \in \X$, $h \in \H$, we have
	\begin{align*}
	\S_\Q[h](x) & = \langle h(\cdot), \S_\Q K(x, \cdot) \rangle_{\H} .
	\end{align*}
	Taking the expectation of both sides,
	\begin{align}
	\E_{X \sim \P}\left[ \S_\Q[h](X) \right] = \E_{X \sim \P}\left[ \langle h(\cdot), \S_\Q K(X, \cdot) \rangle_{\H} \right] = \left\langle h(\cdot), {\E}_{X \sim \P}\left[ \S_\Q K(X, \cdot) \right] \right\rangle_\H . \label{eq:deriv_KSD_eq1}
	\end{align}
	Here since the inner product is continuous liner operator, the expectation and inner product can be exchanged if the function $x \mapsto \S_\Q K(x, \cdot)$ is \textit{Bochner $\P$-integrable} \citep[A.32]{Steinwart2008}.
	This is indeed the case, since from \cref{item: var stein 2} of \Cref{lem:ub_kd2} again, and Jensen's inequality,
	\begin{align*}
	{\E}_{X \sim \P}\left[ \| \S_\Q K(X, \cdot) \|_\H \right] & = {\E}_{X \sim \P}\left[ \sqrt{ \langle \S_\Q K(X, \cdot), \S_\Q K(X, \cdot) \rangle_\H } \right] \\
	& = {\E}_{X \sim \P}\left[ \sqrt{ \S_\Q \S_\Q K(X, X) } \right] \le \sqrt{ {\E}_{X \sim \P}\left[ \S_\Q \S_\Q K(X, X) \right] } < \infty 
	\end{align*}
	where the last term is finite by \Cref{asmp:derivation_KSD}.
	A standard argument based on the Cauchy--Schwarz inequality gives
	\begin{align}
	\sup_{\| h \|_{\H} \le 1} \left| \left\langle h(\cdot), \E_{X \sim \P}\big[ \S_\Q K(X, \cdot) \big] \right\rangle_\H \right| & = \left\| \E_{X \sim \P}\big[ \S_\Q K(X, \cdot) \big] \right\|_\H \nonumber \\
	& = \sqrt{ \big\langle \E_{X \sim \P}\left[ \S_\Q K(X, \cdot) \right], \E_{X' \sim \P}\left[ \S_\Q K(X', \cdot) \right] \big\rangle_\H } \nonumber \\
	& = \sqrt{ \E_{X, X' \sim \P}\left[ \big\langle \S_\Q K(X, \cdot), \S_\Q K(X', \cdot) \big\rangle_\H \right] } \nonumber \\
	& = \sqrt{ \E_{X,X' \sim \P}\left[ \S_\Q \S_\Q K(X, X') \right] } \label{eq:deriv_KSD_eq2}
	\end{align}
	where $X$ and $X'$ are independent, and we again appeal to Bochner $\P$-integrability to interchange expectation and inner product.
	Thus from \eqref{eq:deriv_KSD_eq1} and \eqref{eq:deriv_KSD_eq2} we have
	\begin{align*}
	\operatorname{KSD}^2(\Q \| \P) = \left( \sup_{\| h \|_{\H} \le 1} \Big| \E_{X \sim \P}\left[ \S_\Q[h](X) \right] \Big| \right)^2 = \E_{X,X' \sim \P}\left[ \S_\Q \S_\Q K(X, X') \right] ,
	\end{align*}
	as claimed.
\end{proof}

\subsubsection{Verifying \Cref{asmp:derivation_KSD} for the Langevin Stein Operator } \label{sec:verify_asmp_1}

This section demonstrates how to verify the assumption that $h \mapsto \S_\Q[h](x)$ is a continuous linear functional on $\H$ for each fixed $x \in \X$ in the case where $\S_\Q$ is the Langevin Stein operator \eqref{eq:stein_operator} for $\Q \in \mathcal{P}_{\text{S}}(\R^d)$.
Since a linear functional is continuous if and only if it is bounded, we aim to show that, for each fixed $x \in \X$, there exist a constant $C_x$ s.t. $| \S_\Q[h](x) | \le C_x \| h \|_\H$ for all $h \in \H$.

For each fixed $x \in \R^d$, the Langevin Stein operator $\S_\Q$ is given as
\begin{align*}
\S_\Q[h](x) = \nabla \log q(x) \cdot h(x) + \nabla \cdot h(x) .
\end{align*}
From the reproducing property $h(x) = \langle h, K(x, \cdot) \rangle_\H$ for any $h \in \H$, we have
\begin{align*}
\S_\Q[h](x) & = \nabla \log q(x) \cdot \langle h, K(x, \cdot) \rangle_\H + \nabla_x \cdot \langle h, K(x, \cdot) \rangle_\H \\
& = \langle h, K(x, \cdot) \nabla \log q(x) \rangle_\H + \langle h, \nabla_x \cdot K(x, \cdot) \rangle_\H
\end{align*}
where the order of inner product and other operators is exchangeable by the continuity of $\langle h, \cdot \rangle_{\H}: \H \to \R$ \citep[][Corollary~4.36]{Steinwart2008}.
Then by the Cauchy--Schwarz inequality, 
\begin{align*}
| \S_\Q[h](x) | & \le \Big( \| K(x, \cdot) \nabla \log q(x) \|_\H + \| \nabla_x \cdot K(x, \cdot) \|_\H \Big) \| h \|_\H \\
& = \left( \sqrt{ \nabla \log q(x) \cdot K(x, x) \nabla \log q(x) } + \sqrt{ \nabla \cdot ( \nabla \cdot K(x, x) ) } \right) \| h \|_\H =: C_x \| h \|_\H .
\end{align*}
where the first and second gradient of $\nabla \cdot ( \nabla \cdot K(x, x) )$ are taken each with respect to the first and second argument of $K$.
For the constant $C_x$ to exist, it is sufficient to require that $\nabla \log q(x)$, $K(x, x)$ and $\nabla \cdot ( \nabla \cdot K(x, x) )$ exist.
This is the case when, for example, $\Q \in \mathcal{P}_\S(\R^d)$ and $K \in C_b^{1,1}(\mathbb{R}^d \times \mathbb{R}^d ; \mathbb{R}^{d \times d} )$, as assumed in \cite{Gorham2017}.

\subsection{Proofs of Results in \Cref{sec:methodology}} \label{apx:proof_post_expfam}

\subsubsection{Proof of \Cref{prop:post_expfam}} \label{apx:proof_post_expfam_form}

\begin{proof}
	From \eqref{eq:langevin_k0}, $\S_{\P_\theta} \S_{\P_\theta} K$ is given by
	\begin{align*}
	\S_{\P_\theta} \S_{\P_\theta} K(x,x') & \stackrel{+C}{=} \underbrace{ \nabla \log p_\theta(x) \cdot K(x,x') \nabla \log p_\theta(x') }_{(*_1)} \\
	& \hspace{50pt} + \underbrace{ \nabla \log p_\theta(x) \cdot \left( \nabla_{x'} \cdot K(x, x') \right) }_{(*_2)} + \underbrace{ \nabla \log p_\theta(x') \cdot \left( \nabla_{x} \cdot K(x, x') \right) }_{(*_3)} ,
	\end{align*}
	where $\stackrel{+C}{=}$ indicates equality up to an additive term that is $\theta$-independent.
	The exponential family model in \eqref{eq:expfam} satisfies $\nabla \log p_\theta(x) = \nabla t(x) \eta(\theta) + \nabla b(x)$.
	Thus for term $(*_1)$ we have
	\begin{align}
    \sum_{i=1}^{n} \sum_{j=1}^{n} (*_1) & = \sum_{i=1}^{n} \sum_{j=1}^{n} ( \nabla t(x_i) \eta(\theta) ) \cdot K(x_i, x_j) \nabla t(x_j) \eta(\theta) + \nabla b(x_i) \cdot K(x_i, x_j) \nabla t(x_j) \eta(\theta) \nonumber \\
	& \hspace{90pt} + ( \nabla t(x_i) \eta(\theta) ) \cdot K(x_i, x_j) \nabla b(x_j) + \nabla b(x_i) \cdot K(x_i, x_j) \nabla b(x_j) \nonumber \\
	& \stackrel{+C}{=} \eta(\theta) \cdot \left(  \sum_{i=1}^{n} \sum_{j=1}^{n} \nabla t(x_i)^\top K(x_i, x_j) \nabla t(x_j) \right) \eta(\theta) \nonumber \\
	& \hspace{90pt} + \eta(\theta) \cdot \left( 2 \sum_{i=1}^{n} \sum_{j=1}^{n} \nabla t(x_i)^\top K(x_i, x_j) \nabla b(x_j) \right)  \label{eq: term1 final}
	\end{align}
	where the last equality follows from symmetry of $K$.
	For terms $(*_2)$ and $(*_3)$,
	\begin{align}
	\sum_{i=1}^{n} \sum_{j=1}^{n} (*_2) & =  \sum_{i=1}^{n} \sum_{j=1}^{n} ( \nabla t(x_i) \eta(\theta) ) \cdot ( \nabla_{x'} \cdot K(x_i, x_j) ) + \nabla b(x_i) \cdot ( \nabla_{x'} \cdot K(x_i, x_j) ) \nonumber \\
	& \stackrel{+C}{=} \eta(\theta) \cdot \left(  \sum_{i=1}^{n} \sum_{j=1}^{n} \nabla t(x_i)^\top ( \nabla_{x'} \cdot K(x_i, x_j) ) \right) , \label{eq: term2 final} \\
	\sum_{i=1}^{n} \sum_{j=1}^{n} (*_3) & =  \sum_{i=1}^{n} \sum_{j=1}^{n} ( \nabla t(x_i) \eta(\theta) ) \cdot ( \nabla_{x} \cdot K(x_i, x_j) ) + \nabla b(x_j) \cdot ( \nabla_{x} \cdot K(x_i, x_j) ) \nonumber \\
	& \stackrel{+C}{=} \eta(\theta) \cdot \left(  \sum_{i=1}^{n} \sum_{j=1}^{n} \nabla t(x_j)^\top ( \nabla_{x} \cdot K(x_i, x_j) ) \right) . \label{eq: term3 final}
	\end{align}
	From \Cref{eq:empirical_KSD}, the KSD-Bayes posterior is
	\begin{align*}
		\pi_n^D(\theta) & \propto \pi(\theta) \exp\left( - \beta n \left\{ \frac{1}{n^2} \sum_{i=1}^{n} \sum_{j=1}^{n} (*_1) + (*_2) + (*_3) \right\} \right) ,
	\end{align*}
	so we may collect together terms in \Cref{eq: term1 final,eq: term2 final,eq: term3 final} to obtain the expressions in \Cref{prop:post_expfam}.
\end{proof}

\subsection{Proofs of Results in \Cref{sec:prop_KSD}} \label{sec:ap_proof_4}

\subsubsection{Proof of \Cref{thm:pw} (a.s. Pointwise Convergence)} \label{apx:proof_pw}

\begin{proof}
	Let $f_n(\theta) := \operatorname{KSD}^2(\P_{\theta} \| \P_n)$ and $f(\theta) := \operatorname{KSD}^2(\P_{\theta} \| \P)$.
	Decomposing the double summation of $f_n(\theta)$ into the diagonal term ($i = j$) and non-diagonal term ($i \ne j$),
	\begin{align*}
	f_n(\theta) & = \frac{1}{n^2} \sum_{i=1}^{n} \S_{\P_\theta} \S_{\P_\theta} K(x_i, x_i) + \frac{1}{n^2} \sum_{i=1}^{n} \sum_{j \ne i}^{n} \S_{\P_\theta} \S_{\P_\theta}(x_i, x_j) \\
	& = \frac{1}{n} \underbrace{ \frac{1}{n} \sum_{i=1}^{n} \S_{\P_\theta} \S_{\P_\theta} K(x_i, x_i) }_{(*_a)} + \frac{n-1}{n} \underbrace{ \frac{1}{n (n-1)} \sum_{i=1}^{n} \sum_{j \ne i}^{n} \S_{\P_\theta} \S_{\P_\theta} K(x_i, x_i) }_{(*_b)} .
	\end{align*}
	Fix $\theta \in \Theta$.
	From the strong law of large number \cite[Theorem~2.5.10]{Durrett2010b},
	\begin{align*}
	(*_a) = \frac{1}{n} \sum_{i=1}^{n} \S_{\P_\theta} \S_{\P_\theta} K(x_i, x_i) \overset{a.s.}{\longrightarrow} \E_{X \sim \P}\left[ \S_{\P_\theta} \S_{\P_\theta} K(X, X) \right],
	\end{align*}
	provided that $ \E_{X \sim \P}\left[ | \S_{\P_\theta} \S_{\P_\theta} K(X, X) | \right] < \infty$.
	From the positivity of $\S_{\P_\theta} \S_{\P_\theta} K(x, x)$, we have $\E_{X \sim \P}\left[ | \S_{\P_\theta} \S_{\P_\theta} K(X, X) | \right] = \E_{X \sim \P}\left[ \S_{\P_\theta} \S_{\P_\theta} K(X, X) \right]$, which has been assumed to exist.
	The form of (b) is called an \emph{unbiased statistic} (or \emph{U-statistic} for short) and \cite{Hoeffding1961} proved the strong law of large numbers
	\begin{align*}
	(*_b) = \frac{1}{n (n-1)} \sum_{i=1}^{n} \sum_{j \ne i}^{n} \S_{\P_\theta} \S_{\P_\theta} K(x_i, x_j) \overset{a.s.}{\longrightarrow} \E_{X,X' \sim \P}\left[ \S_{\P_\theta} \S_{\P_\theta} K(X, X') \right],
	\end{align*}
	whenever $\E_{X,X' \sim \P}\left[ | \S_{\P_\theta} \S_{\P_\theta} K(X, X') | \right] < \infty$.
	From item (iii) of \Cref{lem:ub_kd2} and Jensen's inequality, we have $\E_{X,X' \sim \P}\left[ | \S_{\P_\theta} \S_{\P_\theta} K(X, X') | \right] \le \E_{X \sim \P}\left[ \S_{\P_\theta} \S_{\P_\theta} K(X, X) \right]$ where the right hand side is again assumed to exist.
	Therefore, since $1 / n \to 0$ and $(n - 1) / n  \to 1$,
	\begin{align*}
	f_n(\theta) = \frac{1}{n} (*_a) + \frac{n-1}{n} (*_b) \overset{a.s.}{\longrightarrow} \E_{X,X' \sim \P}\left[ \S_{\P_\theta} \S_{\P_\theta} K(X, X') \right] = f(\theta) ,
	\end{align*}
	where the argument holds for each fixed $\theta \in \Theta$.
\end{proof}

\subsubsection{Proof of \Cref{thm:uc} (a.s. Uniform Convergence)} \label{apx:proof_uc}

\begin{proof}
	Let $f_n(\theta) := \operatorname{KSD}^2(\P_{\theta} \| \P_n)$ and $f(\theta) := \operatorname{KSD}^2(\P_{\theta} \| \P)$.
	Recall that $\Theta \subset \R^p$ is bounded.
	Theorem 21.8 in \citet[]{Davidson1994} implies that $f_n \overset{a.s.}{\longrightarrow} f$ uniformly on $\B$ if and only if (a) $f_n \overset{a.s.}{\longrightarrow} f$ pointwise on $\B$ and (b) $\{ f_n \}_{n=1}^{\infty}$ is strongly stochastically equicontinuous on $\B$.
	The condition (a) is immediately implied by \Cref{thm:pw} and we hence show the condition (b) in the remainder.
	
	By \citet[Theorem~21.10]{Davidson1994}, $\{ f_n \}_{n=1}^{\infty}$ is strongly stochastically equicontinuous on $\B$ if there exists a stochastic sequence $\{ \mathcal{L}_n \}_{n=1}^{\infty}$, independent of $\theta$, s.t.
	\begin{align*}
	| f_n(\theta) - f_n(\theta') | & \le \mathcal{L}_n \| \theta - \theta' \|_2, \quad \forall \theta, \theta' \in \B \qquad \text{and} \qquad \limsup_{n \to \infty} \mathcal{L}_n < \infty \ \text{a.s.} 
	\end{align*}
	Since $f_n$ is continuously differentiable on $\Theta$, and $\Theta$ is assumed to be open and convex, the mean value theorem yields
	\begin{align*}
	| f_n(\theta) - f_n(\theta') | & \le \sup_{\theta \in \B} \| \nabla_{\theta} f_n(\theta) \|_2 \| \theta - \theta' \|_2, \quad \forall \theta, \theta' \in \B.
	\end{align*}
	\Cref{lem:deriv_KSD_2nd3rd_up} (the first of our auxiliary results, stated and proved in \Cref{sec: proof_pre_result_4}) implies that $\sup_{\theta \in \B} \| \nabla_{\theta} f_n(\theta) \|_2 < \infty$ a.s. for all sufficiently large $n$.
	Therefore, setting $\mathcal{L}_n = \sup_{\theta \in \B} \| \nabla_{\theta} f_n(\theta) \|_2$ concludes the proof.
\end{proof}

\subsubsection{Proof of \Cref{lem:sc_ksd} (Strong Consistency)} \label{apx:proof_sc_ksd}

The following result from real analysis will be required:
	
\begin{lemma} \label{lem:consis_det}
	Let $\Theta \subset \mathbb{R}^p$ be open and bounded. 
	Let $f_n: \Theta \rightarrow \mathbb{R}$ and $f: \Theta \rightarrow \mathbb{R}$ be continuous functions. Assume that (i) there exists an unique $\theta_* \in \Theta$ s.t. $f(\theta_*) < \inf_{\{ \theta \in \Theta : \| \theta - \theta_* \|_2 \ge \epsilon \}} f(\theta)$ for any $\epsilon > 0$, and (ii) $\sup_{\theta \in \Theta} | f_n(\theta) - f(\theta)| \rightarrow 0$ as $n \rightarrow \infty$.
	Let $\{ \theta_n \}_{n=1}^\infty$ be any sequence s.t. $\theta_n \in \argmin_{\theta \in \Theta} f_n(\theta)$ for all sufficiently large $n$.
	Then $\theta_n \rightarrow \theta_*$ as $n \rightarrow \infty$.
\end{lemma}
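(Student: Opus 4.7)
The plan is to proceed by contradiction using the classical M-estimator consistency argument, which has three ingredients: the well-separated minimum at $\theta_*$, the uniform convergence $f_n \to f$, and the minimising property of $\theta_n$.

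First I would suppose, for contradiction, that $\theta_n \not\to \theta_*$. Then there exist $\epsilon > 0$ and a subsequence (which I will still denote $\theta_n$) satisfying $\|\theta_n - \theta_*\|_2 \ge \epsilon$ for all $n$ in the subsequence. By hypothesis (i), the quantity
\begin{equation*}
\delta := \inf_{\{\theta \in \Theta \,:\, \|\theta - \theta_*\|_2 \ge \epsilon\}} f(\theta) - f(\theta_*)
\end{equation*}
is strictly positive. This is the ``well-separated minimum'' gap that drives the argument.

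Next I would invoke hypothesis (ii) to pick $N$ large enough that $\sup_{\theta \in \Theta} |f_n(\theta) - f(\theta)| < \delta/3$ for all $n \ge N$, and such that $\theta_n \in \argmin_{\theta \in \Theta} f_n(\theta)$ exists for $n \ge N$. Then for each such $n$ in the contradiction subsequence, the triangle inequality gives
\begin{equation*}
f_n(\theta_n) \ge f(\theta_n) - \tfrac{\delta}{3} \ge f(\theta_*) + \delta - \tfrac{\delta}{3} = f(\theta_*) + \tfrac{2\delta}{3},
\end{equation*}
where the middle inequality uses $\|\theta_n - \theta_*\|_2 \ge \epsilon$ together with the definition of $\delta$. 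On the other hand, since $\theta_* \in \Theta$,
\begin{equation*}
f_n(\theta_*) \le f(\theta_*) + \tfrac{\delta}{3}.
\end{equation*}
Combining the two bounds yields $f_n(\theta_n) \ge f_n(\theta_*) + \delta/3 > f_n(\theta_*)$, contradicting the fact that $\theta_n$ minimises $f_n$ on $\Theta$.

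The argument is essentially routine once the three ingredients are arranged correctly; I do not expect a serious obstacle. The only points requiring mild care are: (a) that $\Theta$ being merely open does not cause trouble because the existence of $\theta_n \in \argmin f_n$ is part of the hypothesis, so minima are assumed attained for large $n$; and (b) that hypothesis (i) implicitly guarantees the set $\{\theta \in \Theta : \|\theta - \theta_*\|_2 \ge \epsilon\}$ is nonempty for all small $\epsilon > 0$ (otherwise the infimum would be $+\infty$ trivially and $\delta > 0$ still holds). Continuity of $f_n$ and $f$ plays no explicit role beyond ensuring the objects in the statement are well defined.
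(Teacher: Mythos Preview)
Your proof is correct and follows essentially the same approach as the paper's: both combine the well-separated minimum gap from (i) with the uniform convergence (ii) to trap $f(\theta_n)$ close to $f(\theta_*)$. The only cosmetic difference is that the paper argues directly (for each $\epsilon>0$, eventually $\|\theta_n-\theta_*\|<\epsilon$) via the chain $f(\theta_n)<f_n(\theta_n)+\eta/2\le f_n(\theta_*)+\eta/2<f(\theta_*)+\eta$, whereas you phrase it as a contradiction along a subsequence with a $\delta/3$ split; the underlying inequality manipulation is the same.
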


\begin{proof}
	The following argument is similar to that used in \citet[Theorem~5.7]{Vaart1998} and \citet[Theorem~2.1]{Newey1994}.
	Fix $\eta > 0$ and consider $n$ sufficiently large that $\theta_n$ is well-defined.
	From (ii), for all sufficiently large $n$, we have the uniform bound $| f(\theta) - f_n(\theta)| < \eta / 2$ over $\theta \in \Theta$.
	Since $\theta_n$ is a minimiser of $f_n$, we therefore have $f(\theta_n) < f_n(\theta_n) + \eta / 2 < f_n(\theta_*) + \eta / 2 < f(\theta_*) + \eta$.
	Since $\eta > 0$ was arbitrary, we may take $\eta = \inf_{\{ \theta \in \Theta : \| \theta - \theta_* \|_2 \ge \epsilon \}} f(\theta) - f(\theta_*)$, where $\eta > 0$ from (i), to see that $f(\theta_n) < \inf_{\{ \theta \in \Theta : \| \theta - \theta_* \|_2 \ge \epsilon \}} f(\theta)$.
	Thus we have shown that $\theta_n \in \{ \theta \in \Theta : \| \theta - \theta_* \|_2 < \epsilon \}$ for all sufficiently large $n$.
	Since the argument holds for $\epsilon > 0$ arbitrarily small, the result is established.
\end{proof}

Now we can prove \Cref{lem:sc_ksd}:

\begin{proof}[Proof of \Cref{lem:sc_ksd}]
	Let $f_n(\theta) := \operatorname{KSD}^2(\P_{\theta} \| \P_n)$ and $f(\theta) := \operatorname{KSD}^2(\P_{\theta} \| \P)$.
	From \Cref{asmp:sc_cnd}, there exists an unique $\theta_* \in \Theta$ s.t. $f(\theta_*) < \inf_{\{ \theta \in \Theta : \| \theta - \theta_* \|_2 \ge \epsilon \}} f(\theta)$ for any $\epsilon > 0$, and $\theta_n \in \Theta$ minimises $f_n$ a.s. for all sufficiently large $n$.
	Since \Cref{asmp:an_cnd} ($r_{\max}=1$) hold, $f_n$ is continuous a.s. and $\sup_{\theta \in \Theta} | f_n(\theta) - f(\theta)| \overset{a.s.}{\rightarrow} 0$ by \Cref{thm:uc}.
	Thus the conditions of \Cref{lem:consis_det} are a.s. satisfied, from which it follows that $\theta_n \overset{a.s.}{\longrightarrow} \theta_*$.
\end{proof}

\subsubsection{Proof of \Cref{lem:an_ksd} (Asymptotic Normality)} \label{apx:proof_an_ksd}

\begin{proof}
	Let $f_n(\theta) := \operatorname{KSD}^2(\P_{\theta} \| \P_n)$ and $f(\theta) := \operatorname{KSD}^2(\P_{\theta} \| \P)$.
	It was assumed that, for any $h \in \H$ and $x \in \X$, the map $\theta \mapsto \S_{\P_\theta}[h](x)$ is three times continuously differentiable, from which it follows that $f_n$ is three times continuously differentiable as well.
	Since $\theta_n$ minimises $f_n$ for all sufficiently large $n$, we have $\nabla f_n(\theta_n) = 0$.
	Hence a second order Taylor expansion around $\theta_*$ yields
	\begin{align*}
	0 = \nabla f_n(\theta_n) = \nabla f_n(\theta_*) + \nabla^2 f_n(\theta_*) (\theta_n - \theta_*) + (\theta_n - \theta_*) \cdot \nabla^3 f_n( \theta_n' ) (\theta_n - \theta_*) 
	\end{align*}
	where $\theta_n' = \alpha \theta_* + (1-\alpha) \theta_n$ for some $\alpha \in [0,1]$.
	By transposing the terms properly and scaling the both side by $\sqrt{n}$, we have 
	\begin{align*}
	\sqrt{n}( \theta_n - \theta_* ) = \Big( \underbrace{ \nabla^2 f_n(\theta_*) }_{(*_1)} + \underbrace{ (\theta_n - \theta_*) \cdot \nabla^3 f_n( \theta_n' ) }_{(*_2)} \Big)^{-1} \Big( - \underbrace{ \sqrt{n} \nabla f_n(\theta_*) }_{(*_3)} \Big) .
	\end{align*}
	In the remainder, we show the convergence of $(*_1)$, $(*_2)$ and $(*_3)$, and apply the Slutsky's theorem to see the convergence in distribution of $\sqrt{n}( \theta - \theta_n )$.
	
	\vspace{5pt}
	\noindent \textbf{Term $(*_1)$:}
	From the auxiliary result \Cref{lem:deriv_KSD_1st2nd_con} in \Cref{sec: proof_pre_result_4}, we have that $\nabla^2 f_n(\theta_*) \overset{a.s.}{\to} \nabla^2 f(\theta_*) = H_*$ where $H_*$ is positive semi-definite.
	
	\vspace{5pt}
	\noindent \textbf{Term $(*_2)$:}
	From the Cauchy--Schwarz inequality and auxiliary result \Cref{lem:deriv_KSD_2nd3rd_up} in \Cref{sec: proof_pre_result_4},
	\begin{align*}
	\limsup_{n \rightarrow \infty} \big\| (\theta_n - \theta_*) \cdot \nabla^3 f_n( \theta_n' ) \big\|_2 & \le \limsup_{n \rightarrow \infty} \sup_{\theta \in \B} \| \nabla^3 f_n(\theta) \|_2 \| \theta_n - \theta_* \|_2 \\
	& \leq \underbrace{ \limsup_{n \rightarrow \infty} \sup_{\theta \in \B} \| \nabla^3 f_n(\theta) \|_2 }_{< \infty\ \text{a.s.}} \times \limsup_{n \rightarrow \infty} \| \theta_n - \theta_* \|_2 
	\end{align*}
	Since \Cref{lem:sc_ksd} implies that $\| \theta_n - \theta_* \|_2 \overset{a.s.}{\to} 0$, we have $(*_2) \overset{a.s.}{\to} 0$.
	
	\vspace{5pt}
	\noindent \textbf{Term $(*_3)$:}
	Let $F(x, x') := \nabla_\theta ( \S_{\P_{\theta}} \S_{\P_{\theta}} K(x, x') ) |_{\theta=\theta_*} \in \R^p$ and recall that $S(x, \theta_*) = \E_{X \sim \P}\left[ F(x, X) \right] \in \R^p$.
	Then
	\begin{align*}
	\sqrt{n} \nabla f_n(\theta_*) & = \sqrt{n} \left( \frac{1}{n^2} \sum_{i=1}^{n} F(x_i, x_i) + \frac{1}{n^2} \sum_{i=1}^{n} \sum_{j \ne i}^{n} F(x_i, x_j) \right) \\
	& = \frac{1}{\sqrt{n}} \underbrace{ \frac{1}{n} \sum_{i=1}^{n} F(x_i, x_i) }_{(*_a)} + \frac{n-1}{n} \underbrace{ \frac{\sqrt{n}}{n (n-1)} \sum_{i=1}^{n} \sum_{j \ne i}^{n} F(x_i, x_j) }_{(*_b)} .
	\end{align*}
	First, it follows from the strong law of large number \cite[Theorem 2.5.10]{Durrett2010b} that $(*_a) \overset{a.s.}{\to} \E_{X \sim \P}[ F(X, X) ]$ whenever $\E_{X \sim \P}[ \| F(X, X) \big) \|_2 ] < \infty$.
	Second, since $(*_b)$ is a U-statistic multiplied by $\sqrt{n}$, it follows from \citet[Theorem~12.3]{Vaart1998} that $(*_b) \overset{p}{\to} (1 / \sqrt{n}) \sum_{i=1}^{n} S(x_i, \theta_*)$ whenever $\E_{X, X' \sim \P}[ \| F(X, X') \|_2^2 ] < \infty$.
	(Here $\overset{p}{\to}$ denotes convergence in probability.)
	Both the required conditions indeed hold from the auxiliary result \Cref{lem:score_finite_fm} in \Cref{sec: proof_pre_result_4}.
	Thus we have 
	\begin{align*}
	\sqrt{n} \nabla f_n(\theta_*) & = \frac{1}{\sqrt{n}} (*_a) + \frac{n-1}{n} (*_b) \overset{p}{\longrightarrow} \frac{1}{\sqrt{n}} \sum_{i=1}^{n} S(x_i, \theta_*) .
	\end{align*} 
	This convergence in probability implies that $\sqrt{n} \nabla f_n(\theta_*)$ and $(1 / \sqrt{n}) \sum_{i=1}^{n} S(x_i, \theta_*)$ converge in distribution to the same limit.
	Therefore we may apply the central limit theorem for $(1 / \sqrt{n}) \sum_{i=1}^{n} S(x_i, \theta_*)$ to obtain the asymptotic distribution of $\sqrt{n} \nabla f_n(\theta_*)$.
	Again from \citet[Theorem~12.3]{Vaart1998}, we have
	\begin{align*}
	\frac{1}{\sqrt{n}} \sum_{i=1}^{n} S(x_i, \theta_*) \overset{d}{\longrightarrow} \mathcal{N}\left(0, J_* \right) , \qquad J_* = \E_{X \sim \P}\left[ S(X, \theta_*) S(X, \theta_*)^\top \right]
	\end{align*}
    whenever $\E_{X, X' \sim \P}\left[ \| F(X, X') \|_2^2 \right] < \infty$, which implies the existence of the covariance matrix $J_*$.
	Hence $\sqrt{n} \nabla f_n(\theta_*) \overset{d}{\rightarrow} \mathcal{N}\left(0, J_* \right)$.

	\vspace{5pt}
	Collecting together these results, we have shown that
	\begin{align*}
	(*_1) \overset{a.s.}{\longrightarrow} H_*, \qquad (*_2) \overset{a.s.}{\longrightarrow} 0, \qquad (*_3) \overset{d}{\longrightarrow} \mathcal{N}\left(0, J_* \right) .
	\end{align*}
	Since $H_*$ is guaranteed to be at least positive semi-definite, it is in fact strictly positive definite if $H_*$ is non-singular, as we assumed.
	Finally, Slutsky's theorem allows us to conclude that $\sqrt{n}( \theta - \theta_n ) \overset{d}{\rightarrow} \mathcal{N}\left(0, H_*^{-1} J_* H_*^{-1} \right)$ as claimed.
\end{proof}

\subsubsection{Verifying \Cref{asmp:an_cnd} for the Langevin Stein Operator} \label{ap: verify A3}

Here we compute the quantities involved in \Cref{asmp:an_cnd} for the Langevin Stein operator $\S_{\P_{\theta}}$ with $\P_{\theta} \in \mathcal{P}_{\text{S}}(\R^d)$. 
In this case,
\begin{align}
\partial^r \S_{\P_\theta}[h](x) = \partial^r \big( \nabla_{x} \log p_{\theta}(x) \cdot h(x) \big) + \partial^r \big( \nabla_{x} \cdot h(x) \big) = \big( \partial^r \nabla_{x} \log p_{\theta}(x) \big) \cdot h(x) . \label{eq:eg_deriv1_score}
\end{align}
The operator $\partial^r \S_{\P_\theta}$ in \eqref{eq:eg_deriv1_score} is therefore well-defined and $\theta \mapsto \partial^r \S_{\P_\theta}[h](x)$ is continuous whenever $\theta \mapsto \nabla_{x} \log p_{\theta}(x)$ is $r$-times continuously differentiable over $\B$.
For each fixed $x \in \X$, it is clear that $h \mapsto ( \partial^r \S_{\P_\theta} )[h](x)$ is a continuous linear functional on $\H$.
Then the term $( \partial^r \S_{\P_\theta} ) ( \partial^r \S_{\P_\theta} ) K(x, x)$ appearing in the final part of \Cref{asmp:an_cnd} takes the explicit form
\begin{align}
( \partial^r \S_{\P_\theta} ) ( \partial^r \S_{\P_\theta} ) K(x, x) = \big( \partial^r \nabla_{x} \log p_{\theta}(x) \big) \cdot K(x, x) \big( \partial^r \nabla_{x} \log p_{\theta}(x) \big) . \label{eq:eg_deriv1_kernel}
\end{align}
The regularity of \eqref{eq:eg_deriv1_kernel} therefore depends on $K$ and $\P_\theta$.
See \Cref{apx: expfam_asmp}, where \eqref{eq:eg_deriv1_kernel} is computed for an exponential family model.

\subsection{Proof of \Cref{thm:pc} (Posterior Consistency)} \label{apx:proof_pc}

The following preliminary lemma is required, which takes inspiration from \citet{Alquier2016,Cherief-Abdellatif2019}.
Let $f_n(\theta) = \operatorname{KSD}^2(\P_\theta \| \P_n)$ and $f(\theta) = \operatorname{KSD}^2(\P_\theta \| \P)$.

\begin{lemma} \label{lem:pc_lem}
	Suppose \Cref{asmp:sc_cnd} and \Cref{asmp:pc_cnd} hold.
	For all $\delta \in (0,1]$, with probability at least $1 - \delta$,
	\begin{align*}
	\int_{\Theta} f(\theta) \pi_{n}^{D}(\theta) \mathrm{d} \theta \le f(\theta_*) + \left( \alpha_1 + \alpha_2 + \frac{ 8 \sup_{\theta \in \Theta} \sigma(\theta) }{\delta} \right) \frac{1}{\sqrt{n}} .
	\end{align*}
	where the probability is taken with respect to realisations of the dataset $\{x_i\}_{i=1}^n \overset{i.i.d.}{\sim} \P$.
\end{lemma}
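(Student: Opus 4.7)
The plan is to use the variational characterisation of generalised Bayes from \eqref{eq:gen-posterior-opt}: $\pi_n^D$ minimises $\rho \mapsto n \int f_n \, \mathrm{d}\rho + \KL(\rho \| \pi)$ over probability measures on $\Theta$. Testing optimality against the localised prior $\rho_0$, obtained by restricting $\pi$ to $B_n(\alpha_1)$ and renormalising, and discarding the non-negative $\KL(\pi_n^D \| \pi)$ term, yields $\int f_n \, \mathrm{d}\pi_n^D \le \int f_n \, \mathrm{d}\rho_0 + \KL(\rho_0 \| \pi)/n$. The prior mass condition of \Cref{asmp:pc_cnd} gives $\KL(\rho_0 \| \pi) = -\log \pi(B_n(\alpha_1)) \le \alpha_2 \sqrt{n}$, so the KL contribution is at most $\alpha_2/\sqrt{n}$.

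The next step is to translate this $f_n$-bound into the desired $f$-bound via the telescoping identity
\begin{align*}
\int f \, \mathrm{d}\pi_n^D - f(\theta_*) & = \Big[\int f_n \, \mathrm{d}\pi_n^D - \int f_n \, \mathrm{d}\rho_0\Big] + \Big[\int f \, \mathrm{d}\rho_0 - f(\theta_*)\Big] \\
& \quad + \int (f_n - f) \, \mathrm{d}\rho_0 + \int (f - f_n) \, \mathrm{d}\pi_n^D.
\end{align*}
The first bracket is $\le \alpha_2/\sqrt{n}$ from the preceding step, and the second is $\le \alpha_1/\sqrt{n}$ directly from the definition of $B_n(\alpha_1)$. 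The third summand involves only the deterministic measure $\rho_0$, so Fubini combined with the V-statistic bias identity $\E[f_n(\theta)] - f(\theta) = \E \| T_n(\theta) - T(\theta) \|_\H^2 \le \sigma(\theta)/n$ --- obtained by writing $f$ and $f_n$ as squared RKHS norms of $T(\theta) := \E_\P[\S_{\P_\theta} K(X, \cdot)]$ and $T_n(\theta) := (1/n) \sum_{i} \S_{\P_\theta} K(x_i, \cdot)$ (cf. \Cref{lem:ub_kd2}) --- bounds its expectation by $\sup_\theta \sigma(\theta) / n$.

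The principal obstacle is the fourth summand, $\int (f - f_n) \, \mathrm{d}\pi_n^D$, because $\pi_n^D$ depends on the data and Fubini does not apply directly. My approach is to exploit the squared-norm identity $f - f_n = -2 \langle T, T_n - T \rangle_\H - \| T_n - T \|_\H^2$ and then apply Cauchy--Schwarz with respect to $\pi_n^D$; the resulting estimate involves $\sqrt{\int f \, \mathrm{d}\pi_n^D}$ together with an RKHS-valued Monte-Carlo error whose second moment is controlled by $\E \| T_n(\theta) - T(\theta) \|_\H^2 \le \sigma(\theta)/n$. Some care is then needed to absorb the reappearance of $\int f \, \mathrm{d}\pi_n^D$ back into the left-hand side via an AM--GM step --- this is where the numerical constant $8$ in the final bound originates --- after which the overall contribution is of order $\sup_\theta \sigma(\theta)/\sqrt{n}$ in expectation.

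Finally, I would apply Markov's inequality to the non-negative sum of the two data-dependent residuals: for non-negative $Y$ with $\E[Y] \le M$, one has $\mathbb{P}(Y > M/\delta) \le \delta$. This converts the $\mathcal{O}(\sup_\theta \sigma(\theta)/\sqrt{n})$ expectation bound into the $(8 \sup_\theta \sigma(\theta))/(\delta \sqrt{n})$ high-probability term, and combined with the deterministic $(\alpha_1 + \alpha_2)/\sqrt{n}$ contribution yields the claimed inequality.
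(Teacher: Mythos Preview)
Your decomposition and the first three steps are clean, but the argument breaks down at the fourth summand $\int (f - f_n)\,\mathrm{d}\pi_n^D$, and your proposed fix does not close the gap.

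After writing $f - f_n = -2\langle T, T_n - T\rangle_{\H} - \|T_n - T\|_{\H}^2$ and applying Cauchy--Schwarz/AM--GM, you are left with a term of the form $\int \|T_n(\theta) - T(\theta)\|_{\H}^2 \, \pi_n^D(\mathrm{d}\theta)$. You then invoke the pointwise bound $\E\|T_n(\theta) - T(\theta)\|_{\H}^2 \le \sigma(\theta)/n$, but this does not control the expectation of the integral, for exactly the reason you already identified: $\pi_n^D$ is data-dependent, so Fubini is unavailable. Your Cauchy--Schwarz step has merely shifted the dependence problem from $f - f_n$ to $\|T_n - T\|_{\H}^2$; it has not removed it. There is a secondary issue: absorbing $\lambda \int f\,\mathrm{d}\pi_n^D$ into the left-hand side multiplies the whole right-hand side, including $f(\theta_*)$, by $1/(1-\lambda)$, so you pick up an extra $\tfrac{\lambda}{1-\lambda} f(\theta_*)$ term that does not appear in the statement. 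Finally, your Markov step requires the ``two data-dependent residuals'' to be almost surely non-negative, but $\int(f_n - f)\,\mathrm{d}\rho_0$ is only non-negative in expectation, not pathwise.

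The paper avoids these difficulties by not attempting to control $f - f_n$ through moment bounds at all. Instead it first establishes the concentration inequality of \Cref{thm:1st_con_ksd2}, obtained via a Rademacher-complexity argument, giving $|f_n(\theta) - f(\theta)| \le 4\sup_\theta\sigma(\theta)/(\delta\sqrt{n})$ on a high-probability event with a $\theta$-free right-hand side. This inequality is then integrated against $\pi_n^D$ (to pass from $f$ to $f_n$) and against the localised prior $\rho_0$ (to pass back), with the variational identity \eqref{eq:gen-posterior-opt} sandwiched in between; the constant $8 = 4 + 4$ is the sum of the two applications. What makes this route work is that the deviation bound is already uniform in the sense that the right-hand side does not depend on $\theta$, so integrating it against \emph{any} probability measure on $\Theta$, data-dependent or not, reproduces the same bound. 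Your approach needs an analogous uniform control on $\|T_n(\theta) - T(\theta)\|_{\H}^2$, which you have not established.
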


\begin{proof}
	From the auxiliary result \Cref{thm:1st_con_ksd2} in \Cref{sec: proof_pre_result_4}, we have a concentration inequality 
	\begin{align}
	\mathbb{P}\left( | f_n(\theta) - f(\theta) | \ge \delta \right) & \le \frac{4 \sigma(\theta)}{\delta \sqrt{n}} \le \frac{4 \sup_{\theta \in \Theta} \sigma(\theta)}{\delta \sqrt{n}} \label{eq:pc_eq1}
	\end{align}
	for each $\theta \in \Theta$, where the probability is taken with respect to the samples $X_1, \ldots, X_n \overset{i.i.d.}{\sim} \P$.
	Taking the complement and re-scaling $\delta$, \eqref{eq:pc_eq1} is equivalent to
	\begin{align}
	\mathbb{P}\bigg( | f_n(\theta) - f(\theta) | \le \frac{4 \sup_{\theta \in \Theta} \sigma(\theta)}{\delta \sqrt{n}} \bigg) & \ge 1 - \delta. \label{eq:pc_eq2}
	\end{align}
	Notice that by virtue of the absolute value, the following inequalities hold simultaneously with probability at least $1-\delta$:
	\begin{align}
	f(\theta) & \le f_n(\theta) + \frac{4 \sup_{\theta \in \Theta} \sigma(\theta)}{\delta \sqrt{n}} . \label{eq:pc_eq2_a1} \\
	f_n(\theta) & \le f(\theta) + \frac{4 \sup_{\theta \in \Theta} \sigma(\theta)}{\delta \sqrt{n}} . \label{eq:pc_eq2_a2}
	\end{align}
	Taking an expectation with respect to the generalised posterior on both side of \eqref{eq:pc_eq2_a1} yields, with probability at least $1 - \delta$,
	\begin{align*}
	\int_\Theta f(\theta)  \pi_{n}^{D}(\theta) \mathrm{d} \theta & \le \int_\Theta f_n(\theta)  \pi_{n}^{D}(\theta) \mathrm{d} \theta + \frac{4 \sup_{\theta \in \Theta} \sigma(\theta)}{\delta \sqrt{n}}
	\end{align*}
	In order to apply the identity \eqref{eq:gen-posterior-opt} of \citet[Theorem~1]{Knoblauch2019}, we add the term $(1 / n) \operatorname{KL}(\pi_{n}^{D} \| \pi) \ge 0$ in the right hand side and see that, with probability at least $1 - \delta$,
	\begin{align*}
	\int_\Theta f(\theta)  \pi_{n}^{D}(\theta) \mathrm{d} \theta \le \frac{1}{n} \left\{ \int_\Theta n f_n(\theta)  \pi_{n}^{D}(\theta) \mathrm{d} \theta + \operatorname{KL}(\pi_{n}^{D} \| \pi) \right\} + \frac{4 \sup_{\theta \in \Theta} \sigma(\theta)}{\delta \sqrt{n}} .
	\end{align*}
	Then from the identity \eqref{eq:gen-posterior-opt}, the bracketed term on the right hand side is the solution to the following variational problem over $\mathcal{P}(\Theta)$:
	\begin{align}
	\int_\Theta f(\theta)  \pi_{n}^{D}(\theta) \mathrm{d} \theta & \le \frac{1}{n} \inf_{\rho \in \mathcal{P}(\Theta)} \left\{ \int_\Theta n f_n(\theta)  \rho(\theta) \mathrm{d} \theta + \text{KL}(\rho \| \pi) \right\} + \frac{4 \sup_{\theta \in \Theta} \sigma(\theta)}{\delta \sqrt{n}} \nonumber \\
	& = \inf_{\rho \in \mathcal{P}(\Theta)} \left\{ \int_\Theta f_n(\theta)  \rho(\theta) \mathrm{d} \theta + \frac{1}{n} \text{KL}(\rho \| \pi) \right\} + \frac{4 \sup_{\theta \in \Theta} \sigma(\theta)}{\delta \sqrt{n}}. \label{eq:pc_eq3}
	\end{align}
	Plugging \eqref{eq:pc_eq2_a2} in \eqref{eq:pc_eq3}, we have with probability at least $1 - \delta$,
	\begin{align}
	\int_\Theta f(\theta)  \pi_{n}^{D}(\theta) \mathrm{d} \theta & \le \inf_{\rho \in \mathcal{P}(\Theta)} \left\{ \int_\Theta f(\theta)  \rho(\theta) \mathrm{d} \theta + \frac{1}{n} \text{KL}(\rho \| \pi) \right\} + \frac{8 \sup_{\theta \in \Theta} \sigma(\theta)}{\delta \sqrt{n}}. \label{eq:pc_eq4}
	\end{align}
	Plugging the trivial bound $f(\theta) \le f(\theta_*) + | f(\theta) - f(\theta_*) |$ into \eqref{eq:pc_eq4}, we have
	\begin{align*}
	\eqref{eq:pc_eq4} \le f(\theta_*) + \inf_{\rho \in \mathcal{P}(\Theta)} \bigg\{ \int_\Theta \left| f(\theta) - f(\theta_*) \right|  \rho(\theta) \mathrm{d} \theta + \frac{1}{n} \text{KL}(\rho \| \pi) \bigg\} + \frac{8 \sup_{\theta \in \Theta} \sigma(\theta)}{\delta \sqrt{n}} .
	\end{align*}
	Notice that the infimum term can be upper bounded by any choice of $\rho \in \mathcal{P}(\Theta)$.
	Letting $\Pi(B_n) := \int_{B_n} \pi(\theta) \mathrm{d} \theta$, we take $\rho(\theta) =  \pi(\theta) / \Pi(B_n)$ for $\theta \in B_n$ and $\rho(\theta) = 0$ for $\theta \not\in B_n$.
	Then \Cref{asmp:pc_cnd} part (2) ensures that $\int_{B_n} | f(\theta) - f(\theta_*) |  \rho(\theta) \mathrm{d} \theta \leq \alpha_1/\sqrt{n}$ and that $\mathrm{KL}(\rho \| \pi) = \int_\Theta \log\left( \rho(\theta) / \pi(\theta) \right) \rho(\theta) \mathrm{d} \theta = \int_{B_n} - \log( \Pi(B_n) ) \pi(\theta) \mathrm{d} \theta / \Pi(B_n) = - \log \Pi(B_n) \le \alpha_2 \sqrt{n}$.
	Thus
	\begin{align}
	\int_\Theta f(\theta)  \pi_{n}^{D}(\theta) \mathrm{d} \theta \le f(\theta_*) + \left( \alpha_1 + \alpha_2 + \frac{8 \sup_{\theta \in \Theta} \sigma(\theta)}{\delta} \right) \frac{1}{\sqrt{n}} , \label{eq:pc_eq5}
	\end{align}
	with probability at least $1 - \delta$, as claimed.
\end{proof}

\vspace{5pt}
Now we turn to the proof of \Cref{thm:pc}:

\vspace{5pt}
\begin{proof}[Proof of \Cref{thm:pc}]
	Since $\theta_*$ uniquely minimise $f$,
	\begin{align*}
	f(\theta) - f(\theta_*) \ge 0, \quad \forall \theta \in \Theta \qquad \Longrightarrow \qquad \int_\Theta f(\theta)  \pi_{n}^{D}(\theta) \mathrm{d} \theta - f(\theta_*) \ge 0 .
	\end{align*}
	Thus, from \Cref{lem:pc_lem}, 
	\begin{align*}
	\mathbb{P}\left( \left| \int_\Theta f(\theta) \pi_{n}^{D}(\theta) \mathrm{d} \theta - f(\theta_*) \right| \le \left( \alpha_1 + \alpha_2 + \frac{8 \sup_{\theta \in \Theta} \sigma(\theta)}{\delta} \right) \frac{1}{\sqrt{n}} \right) \ge 1 - \delta .
	\end{align*}
	Applying the simplifying upper bound
	$$
	\alpha_1 + \alpha_2 + \frac{8 \sup_{\theta \in \Theta} \sigma(\theta)}{\delta} \leq \frac{\alpha_1 + \alpha_2 + 8 \sup_{\theta \in \Theta} \sigma(\theta)}{\delta} ,
	$$
	taking complement of the probability and performing a change of variables, we obtain the stated result.
\end{proof}

\subsection{Proof of \Cref{thm:lan} (Bernstein--von Mises)} \label{apx:proof_lan}

In this section we define the notation $f_n(\theta) := \operatorname{KSD}^2(\P_{\theta} \| \P_n)$ and $f(\theta) := \operatorname{KSD}^2(\P_{\theta} \| \P)$.
Similarly, denote $H_n := \nabla_{\theta}^2 f_n(\theta_n)$ and $H_* = \nabla_{\theta}^2 f(\theta_*)$.
Our aim is to verify the conditions of Theorem 4 in \citet{Miller2019}.
The following technical lemma lists and establishes the conditions that are required:

\begin{lemma} \label{lem:bvm_precnd}
	Suppose that \Cref{asmp:an_cnd} ($r_{\max}=3$), \Cref{asmp:sc_cnd}, and part (1) of \Cref{asmp:pc_cnd} hold.
	Assume that $H$ is nonsingular.
	Then the following statements almost surely hold:
	\begin{enumerate}
		\item the prior density $\pi$ is continuous at $\theta_*$ and $\pi(\theta_*) > 0$,
		\item $\theta_n \to \theta_*$, 
		\item the Taylor expansion $f_n(\theta) = f_n(\theta_n) + \frac{1}{2}(\theta - \theta_n) \cdot H_n (\theta - \theta_n) + r_n( \theta - \theta_n )$ holds on $\B$, where the remainder $r_n$ satisfies $| r_n( \vartheta ) | \le C \| \vartheta \|_2^3$ for all $\|\vartheta\|_2 \leq \epsilon$, all sufficiently large $n$ and some $C$ and $\epsilon > 0$,
		\item $H_n \to H_*$, where $H_n$ is symmetric and $H_*$ is positive definite,
		\item $\liminf_{n\to\infty} ( \inf_{\{\theta \in \Theta : \|\theta - \theta_n\|_2 \ge \epsilon \}} f_n(\theta) - f_n(\theta_n) ) > 0$ for any $\epsilon > 0$.
	\end{enumerate}
\end{lemma}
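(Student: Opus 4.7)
The plan is to verify the five listed conditions one at a time, drawing on the auxiliary convergence results of \Cref{sec:prop_KSD} together with the standing hypotheses of the lemma. Condition (1) is immediate from part (1) of \Cref{asmp:pc_cnd}, and condition (2) is immediate from \Cref{lem:sc_ksd}, since \Cref{asmp:an_cnd} ($r_{\max}=3$) trivially implies \Cref{asmp:an_cnd} ($r_{\max}=1$). So the substantive work lies in (3)--(5).

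For (4), I would split
\begin{align*}
H_n - H_* \;=\; \bigl[\nabla_\theta^2 f_n(\theta_n) - \nabla_\theta^2 f(\theta_n)\bigr] + \bigl[\nabla_\theta^2 f(\theta_n) - \nabla_\theta^2 f(\theta_*)\bigr],
\end{align*}
and argue that each bracketed term vanishes almost surely. The first term vanishes by a uniform a.s.\ convergence result for the Hessian, which is an exact analogue of \Cref{thm:uc} whose proof pattern --- pointwise convergence plus stochastic equicontinuity --- carries over verbatim using the auxiliary bounds in \Cref{lem:deriv_KSD_1st2nd_con} and \Cref{lem:deriv_KSD_2nd3rd_up}. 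The second term vanishes by continuity of $\nabla_\theta^2 f$ at $\theta_*$ combined with the consistency $\theta_n \to \theta_*$ from item (2). Symmetry of $H_n$ is inherited from being a Hessian; positive definiteness of $H_*$ is a standing assumption.

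For (3), note first that since $\Theta$ is open and $\theta_n \to \theta_* \in \Theta$, for all sufficiently large $n$ the minimiser $\theta_n$ lies in the interior of $\Theta$, so the first-order condition $\nabla_\theta f_n(\theta_n) = 0$ kills the linear term. Because $f_n \in C^3(\Theta)$ under \Cref{asmp:an_cnd} ($r_{\max}=3$), Taylor's theorem with integral remainder around $\theta_n$ yields the required expansion, with $r_n(\vartheta)$ bounded by the sup norm of $\nabla_\theta^3 f_n$ along the segment $[\theta_n, \theta_n + \vartheta]$. The crucial (and, in my view, main) obstacle is obtaining a cubic bound with a constant $C$ independent of $n$; I would secure this by invoking the a.s.\ bound $\limsup_{n} \sup_{\theta \in \Theta}\|\nabla_\theta^3 f_n(\theta)\|_2 < \infty$ from \Cref{lem:deriv_KSD_2nd3rd_up}, then fixing $\epsilon > 0$ small enough that an $\epsilon$-ball about $\theta_*$ (hence, eventually, about $\theta_n$) is contained in $\Theta$.

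For (5), set $\delta := \inf_{\{\theta \in \Theta \,:\, \|\theta - \theta_*\|_2 \geq \epsilon/2\}} f(\theta) - f(\theta_*)$, which is strictly positive by \Cref{asmp:sc_cnd}. The uniform convergence $\sup_{\theta \in \Theta}|f_n(\theta) - f(\theta)| \to 0$ from \Cref{thm:uc}, together with $\theta_n \to \theta_*$ (which, eventually, forces $\{\|\theta - \theta_n\|_2 \geq \epsilon\} \subset \{\|\theta - \theta_*\|_2 \geq \epsilon/2\}$), deliver $f_n(\theta) - f_n(\theta_n) \geq \delta/3$ on that set for all sufficiently large $n$, giving the claim. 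Once (1)--(5) are in hand, the Bernstein--von Mises conclusion of \Cref{thm:lan} follows by direct appeal to Theorem 4 of \cite{Miller2019}.
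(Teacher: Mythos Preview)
Your proposal is correct and follows essentially the same architecture as the paper's proof: parts (1) and (2) are dispatched exactly as you say, part (3) via Taylor's theorem with the first-order condition and the third-derivative bound from \Cref{lem:deriv_KSD_2nd3rd_up}, part (4) via \Cref{lem:deriv_KSD_1st2nd_con}, and part (5) via uniform convergence plus \Cref{asmp:sc_cnd}.

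Two minor tactical differences are worth noting. For part (4), you decompose $H_n - H_*$ as $[\nabla_\theta^2 f_n(\theta_n) - \nabla_\theta^2 f(\theta_n)] + [\nabla_\theta^2 f(\theta_n) - \nabla_\theta^2 f(\theta_*)]$ and then argue for a uniform Hessian convergence result to handle the first bracket; the paper instead uses the split $[\nabla_\theta^2 f_n(\theta_n) - \nabla_\theta^2 f_n(\theta_*)] + [\nabla_\theta^2 f_n(\theta_*) - \nabla_\theta^2 f(\theta_*)]$, controlling the first bracket by the mean-value theorem and the third-derivative bound, which avoids the need to state a separate uniform-Hessian result. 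Both routes work under \Cref{asmp:an_cnd} ($r_{\max}=3$); the paper's is marginally more economical. For part (5), your set-inclusion argument $\{\|\theta - \theta_n\|_2 \geq \epsilon\} \subset \{\|\theta - \theta_*\|_2 \geq \epsilon/2\}$ is arguably more explicit than the paper's one-line passage to the limit via $\liminf(a_n - b_n) \geq \liminf a_n - \limsup b_n$, but the content is the same.
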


\begin{proof}
	We sequentially prove each statement in the list.
	
	\vspace{5pt}
	\noindent \textbf{Part (1):} Directly assumed in \Cref{asmp:pc_cnd} part (1).
	
	\vspace{5pt}
	\noindent \textbf{Part (2):} \Cref{asmp:an_cnd} ($r_{\max}=3$) and \ref{asmp:sc_cnd} are sufficient for \Cref{lem:sc_ksd} and hence $\theta_n \overset{a.s.}{\to} \theta_*$.
	
	\vspace{5pt}
	\noindent \textbf{Part (3):} From \Cref{asmp:an_cnd} ($r_{\max}=3$), for all $h \in \H$ and $x \in \X$ the map $\theta \mapsto \S_{\P_\theta}[h](x)$ is three times continuously differentiable, meaning that $f_n$ is three times continuously differentiable on $\B$. 
	Hence a second order Taylor expansion gives
	\begin{align*}
	f_n(\theta) = f_n(\theta_n) + \nabla f_n(\theta_n) (\theta - \theta_n) + \frac{1}{2}(\theta - \theta_n) \cdot H_n (\theta - \theta_n) + r_n( \theta - \theta_n )
	\end{align*}
	where, for all sufficiently large $n$, $\nabla f_n(\theta_n) = 0$ was assumed and the mean value form of the remainder term $r_n$ in the Taylor expansion provides a bound
	\begin{align*}
	| r_n( \theta - \theta_n ) | \le \sup_{\theta \in \B} \| \nabla^3 f_n(\theta) \|_2 \| \theta - \theta_n \|_2^3 .
	\end{align*}
	Finally, $\limsup_{n \to \infty} \sup_{\theta \in \B} \| \nabla^3 f_n(\theta) \|_2 < \infty$ a.s. by the auxiliary \Cref{lem:deriv_KSD_2nd3rd_up} in \Cref{sec: proof_pre_result_4}.
	
	\vspace{5pt}
	\noindent \textbf{Part (4):} 
	$H_n$ is symmetric since the assumed regularity of $f_n$ allows the mixed second order partial derivatives of $f_n$ to be interchanged.
	The auxiliary \Cref{lem:deriv_KSD_1st2nd_con} in in \Cref{sec: proof_pre_result_4} establishes that $H_n \overset{a.s.}{\to} H_*$ where $H_*$ is positive semi-definite. 
	Thus, since we assumed $H_*$ is nonsingular, it follows that $H_*$ is positive definite.
	
	\vspace{5pt}
	\noindent \textbf{Part (5):} The inequality $\liminf_{n\to\infty}(a_n + b_n) \ge \liminf_{n\to\infty} a_n + \liminf_{n\to\infty} b_n$ holds for any sequences of $a_n, b_n \in \R$.
	Combining the property $\liminf_{n\to\infty}( - b_n ) = - \limsup_{n\to\infty} b_n$, we have that $\liminf_{n\to\infty}(a_n - b_n) \ge \liminf_{n\to\infty} a_n - \limsup_{n\to\infty} b_n$.
	Applying this inequality, 
	\begin{align*}
	\liminf_{n\to\infty} \left( \inf_{\{\theta \in \Theta : \|\theta - \theta_n\|_2 \ge \epsilon \}} f_n(\theta) - f_n(\theta_n) \right) \ge \liminf_{n\to\infty} \inf_{\{\theta \in \Theta : \|\theta - \theta_n\|_2 \ge \epsilon \}} f_n(\theta) - \limsup_{n\to\infty} f_n(\theta_n) =: (*)
	\end{align*}
	Since $f_n(\cdot) \overset{a.s.}{\to} f(\cdot)$ uniformly on $\B$ by \Cref{thm:uc} and $\theta_n \overset{a.s.}{\to} \theta_*$ by \Cref{lem:sc_ksd},
	\begin{align*}
	(*) & \overset{a.s.}{=} \inf_{\{\theta \in \Theta : \|\theta - \theta_*\|_2 \ge \epsilon \}} f(\theta) - f(\theta_*) > 0
	\end{align*}
	where the last inequality follows from \Cref{asmp:sc_cnd}.
\end{proof}

Now we turn to the main proof:

\vspace{5pt}
\begin{proof}[Proof of \Cref{thm:lan}]
	Our aim is to verify the conditions of Theorem 4 in \cite{Miller2019}.
	Note that this result in \cite{Miller2019} views $\{ f_n \}_{n=1}^{\infty}$ as a deterministic sequence; we therefore aim to show that the conditions of Theorem 4 in \cite{Miller2019} are a.s. satisfied by our random sequence $\{ f_n \}_{n=1}^{\infty}$.
	
	Recall that the generalised posterior has p.d.f. $\pi_{n}^{D}(\theta) \propto \exp\left( - n f_n(\theta) \right) \pi(\theta)$ defined on $\Theta \subset \R^p$. 
	This p.d.f. can be trivially extended to a p.d.f. on $\R^p$ by defining $\pi(\theta) = 0$ and (e.g.) $f_n(\theta) = \inf_{\theta \in \Theta} f_n(\theta) + 1$ for all $\theta \in \R^p \setminus \Theta$.
	This brings us into the setting of  \citet{Miller2019}.
	The assumptions of \citet[Theorem 4]{Miller2019} are precisely the list in the statement of \Cref{lem:bvm_precnd}, and the conclusion is that
	\begin{align}
	\int_{\R^p} \left| \hat{\pi}_n^D(\theta) - \frac{1}{\det( 2 \pi H_*^{-1}  )^{1/2}} \exp\left( - \frac{1}{2} \theta \cdot H_* \theta \right) \right| \mathrm{d} \theta \rightarrow 0 . \label{eq: miller a.s.}
	\end{align}
	Thus, since from \Cref{lem:bvm_precnd} the assumptions of \citet[Theorem 4]{Miller2019} are a.s. satisfied, the conclusion in \Cref{eq: miller a.s.} a.s. holds, as claimed.
\end{proof}

\subsection{Proof of Robustness Results} \label{apx:bias-robust}

\subsubsection{Proof of \Cref{lem:bias-robust_cnd}} \label{apx:proof_bias-robust_cnd}

\begin{proof}
	First of all, (17) of \cite{Ghosh2016} demonstrates that
	\begin{align*}
	\operatorname{PIF}(y, \theta, \P_n) & = \beta n \pi_n^L(\theta) \left( - \operatorname{D}L(y, \theta, \P_n) + \int_\Theta \operatorname{D}L(y, \theta', \P_n) \pi_n^L(\theta') \mathrm{d}\theta' \right) .
	\end{align*}
	By Jensen's inequality, we have an upper bounded 
	\begin{align*}
	\sup_{\theta \in \Theta} \sup_{y \in \X} | \operatorname{PIF}(y, \theta, \P_n) | & \le \beta n \sup_{\theta \in \Theta}  \pi_n^L(\theta) \left( \sup_{y \in \X} \left| \operatorname{D}L(y, \theta, \P_n) \right| + \int_\Theta  \sup_{y \in \X} \left| \operatorname{D}L(y, \theta', \P_n) \right| \pi_n^L(\theta') \mathrm{d}\theta' \right) .
	\end{align*}
	Recall that $\pi_n^L(\theta) = \pi(\theta) \exp( - \beta n L(\theta; \P_n) ) / Z$ where $0 < Z < \infty$ is the normalising constant.
	Thus we can obtain an upper bound $\pi_n^L(\theta) \le \pi(\theta) \exp( - \beta n \inf_{\theta \in \Theta} L_n(\theta; \P_n) ) / Z =: C \pi(\theta)$ for some constant $0 < C < \infty$, since $L_n(\theta; \P_n)$ is lower bounded by assumption and $n$ is fixed.
	From this upper bound, we have
	\begin{align}
	\sup_{\theta \in \Theta} \sup_{y \in \X} | \operatorname{PIF}(y, \theta, \P_n) | & \le \beta n C \sup_{\theta \in \Theta} \pi(\theta) \left( \sup_{y \in \X} \left| \operatorname{D}L(y, \theta, \P_n) \right| + C \int_{\Theta}  \sup_{y \in \X} \left| \operatorname{D}L(y, \theta', \P_n) \right| \pi(\theta') \mathrm{d} \theta' \right) \nonumber \\
	& \le \beta n C \sup_{\theta \in \Theta} \left( \pi(\theta) \sup_{y \in \X} \left| \operatorname{D}L(y, \theta, \P_n) \right| \right) + \nonumber \\
	& \hspace{90pt} \beta n C^2 \left( \sup_{\theta \in \Theta} \pi(\theta) \right)  \int_{\Theta}  \sup_{y \in \X} \left| \operatorname{D}L(y, \theta', \P_n) \right| \pi(\theta') \mathrm{d} \theta' . \nonumber
	\end{align}
	Since $\sup_{\theta \in \Theta} \pi(\theta) < \infty$ by assumption in the statement of \Cref{lem:bias-robust_cnd},  
	it follows that
	\begin{align*}
	\sup_{\theta \in \Theta} \left( \pi(\theta) \sup_{y \in \X} \left| \operatorname{D}L(y, \theta, \P_n) \right| \right) < \infty \quad \text{ and } \quad \int_{\Theta} \pi(\theta) \sup_{y \in \X} \left| \operatorname{D}L(y, \theta, \P_n) \right| \mathrm{d} \theta < \infty
	\end{align*}
	are sufficient conditions for $\sup_{\theta \in \Theta} \sup_{y \in \X} | \operatorname{PIF}(y, \theta, \P_n) | < \infty$, as claimed.
\end{proof}

\subsubsection{The Form of $\operatorname{D}L(y, \theta, \P_n)$ for KSD} \label{sec:GD_KSD}

The following lemma clarifies the form of $\operatorname{D}L(y, \theta, \P_n)$ for KSD:

\begin{lemma} \label{lem:bias-robust_GD}
	For $L(\theta; \P_{n,\epsilon,y}) = \operatorname{KSD}^2(\P_{\theta} \| \P_{n,\epsilon,y})$, we have
	\begin{align}
	\operatorname{D}L(y, \theta, \P_n) = 2  \E_{X \sim \P_n}\big[ \S_{\P_\theta} \S_{\P_\theta} K(X, y) \big] - 2 \E_{X,X' \sim \P_n}\big[ \S_{\P_\theta} \S_{\P_\theta} K(X, X') \big] . \label{eq: sensi bound}
	\end{align}
\end{lemma}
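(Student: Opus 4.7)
The plan is to apply the closed-form expression for KSD$^2$ established in \Cref{prop:derivation_KSD} to the contaminated empirical measure $\P_{n,\epsilon,y} = (1-\epsilon)\P_n + \epsilon \delta_y$, which will make the loss $L(\theta; \P_{n,\epsilon,y})$ an explicit polynomial in $\epsilon$ whose derivative at $\epsilon = 0$ is straightforward to read off.

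First I would invoke \Cref{prop:derivation_KSD} to write
\begin{align*}
L(\theta; \P_{n,\epsilon,y}) = \E_{X,X' \sim \P_{n,\epsilon,y}}\big[ \S_{\P_\theta} \S_{\P_\theta} K(X,X') \big] = \int\!\!\int \S_{\P_\theta} \S_{\P_\theta} K(x,x') \, \mathrm{d}\P_{n,\epsilon,y}(x) \, \mathrm{d}\P_{n,\epsilon,y}(x'),
\end{align*}
where $X,X'$ are independent draws from $\P_{n,\epsilon,y}$. I should note that the application of \Cref{prop:derivation_KSD} requires \Cref{asmp:derivation_KSD} to hold with $\P = \P_{n,\epsilon,y}$, which is automatic since the latter is a finite mixture of point masses and the relevant integrability condition reduces to finiteness of $\S_{\P_\theta}\S_{\P_\theta} K$ evaluated at the points $x_1,\dots,x_n,y$; this is guaranteed by Standing Assumption 2 together with the continuity of $\S_{\P_\theta}\S_{\P_\theta} K$ implicit in the setting.

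Next I would expand the product measure $\P_{n,\epsilon,y} \otimes \P_{n,\epsilon,y}$ into its four constituent parts, yielding
\begin{align*}
L(\theta; \P_{n,\epsilon,y}) = (1-\epsilon)^2 \E_{X,X' \sim \P_n}\big[\S_{\P_\theta}\S_{\P_\theta} K(X,X')\big] + 2 \epsilon(1-\epsilon) \E_{X \sim \P_n}\big[\S_{\P_\theta}\S_{\P_\theta} K(X,y)\big] + \epsilon^2 \, \S_{\P_\theta}\S_{\P_\theta} K(y,y),
\end{align*}
where the cross-term is combined using the symmetry $K(x,x') = K(x',x)$ of the kernel, which transfers to $\S_{\P_\theta}\S_{\P_\theta} K$ (this symmetry is a direct consequence of the inner-product representation \eqref{eq: various stein} in the proof of \Cref{lem:ub_kd2}).

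Finally, I would differentiate this quadratic-in-$\epsilon$ expression and evaluate at $\epsilon = 0$ to obtain
\begin{align*}
\operatorname{D}L(y,\theta,\P_n) = -2\, \E_{X,X' \sim \P_n}\big[\S_{\P_\theta}\S_{\P_\theta} K(X,X')\big] + 2\, \E_{X \sim \P_n}\big[\S_{\P_\theta}\S_{\P_\theta} K(X,y)\big],
\end{align*}
which is exactly \eqref{eq: sensi bound}. There is really no analytical obstacle here; the entire content of the lemma is bookkeeping around the quadratic expansion of the empirical KSD$^2$ in the contamination parameter. The only subtlety worth flagging is the need to justify using the closed-form of \Cref{prop:derivation_KSD} for the finitely-supported measure $\P_{n,\epsilon,y}$, which as noted above is immediate.
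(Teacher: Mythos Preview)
Your proposal is correct and follows essentially the same approach as the paper: expand $\operatorname{KSD}^2(\P_\theta \| \P_{n,\epsilon,y})$ as a quadratic in $\epsilon$ via the mixture structure of $\P_{n,\epsilon,y}$, then differentiate at $\epsilon=0$. You add a bit more justification than the paper does (verifying \Cref{asmp:derivation_KSD} for the finitely supported measure and noting the symmetry of $\S_{\P_\theta}\S_{\P_\theta}K$), but the substance is identical.
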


\begin{proof}
	From the definition of the $\epsilon$-contamination model as a mixture model, and using the symmetry of $K$, we have
	\begin{align*}
	\operatorname{KSD}^2(\P_{\theta} \| \P_{n,\epsilon,y}) & = {\E}_{X, X' \sim \P_{n,\epsilon,y}}\left[ \S_{\P_\theta} \S_{\P_\theta} K(X, X') \right] \\
	& = (1 - \epsilon)^2 \E_{X,X' \sim \P_n}\left[ \S_{\P_\theta} \S_{\P_\theta} K(X, X') \right] + 2 (1 - \epsilon) \epsilon \E_{X \sim \P_n}\left[ \S_{\P_\theta} \S_{\P_\theta} K(X, y) \right] \\
	& \hspace{40pt}  + \epsilon^2 \S_{\P_\theta} \S_{\P_\theta} K(y, y) .
	\end{align*}
	Direct differentiation then yields
	\begin{align*}
	\operatorname{D}L(y, \theta, \P_n) & = \frac{\mathrm{d}}{\mathrm{d} \epsilon} \operatorname{KSD}^2(\P_{\theta} \| \P_{n,\epsilon,y}) \bigg|_{\epsilon=0} \\
	& = 2  \E_{X \sim \P_n}\big[ \S_{\P_\theta} \S_{\P_\theta} K(X, y) \big] - 2  \E_{X,X' \sim \P_n}\big[ \S_{\P_\theta} \S_{\P_\theta} K(X, X') \big] ,
	\end{align*}
	as claimed.
\end{proof}

\subsubsection{Proof of \Cref{thm:bias-robust}} \label{sec:proof_bias-robust}

\begin{proof}
	From \Cref{lem:bias-robust_cnd} with $\X = \R^d$, it is sufficient to show that
	\begin{align*}
	\text{(i)}\ \sup_{\theta \in \Theta} \left( \pi(\theta) \sup_{y \in \R^d} \left| \operatorname{D}L(y, \theta, \P_n) \right| \right) < \infty \quad \text{ and } \quad \text{(ii)}\ \int_{\Theta} \sup_{y \in \R^d} \left| \operatorname{D}L(y, \theta, \P_n) \right| \pi(\theta) \mathrm{d} \theta < \infty .
	\end{align*}
	To establish (i) and (ii) we exploit the expression for $\operatorname{D}L(y, \theta, \P_n)$ in  \Cref{lem:bias-robust_GD}.
	This furnishes us with the bound
	\begin{align}
	\big| \operatorname{D}L(y, \theta, \P_n) \big| \le 2 \E_{X \sim \P_n}\big[ \underbrace{ | \S_{\P_\theta} \S_{\P_\theta} K(X, y) | }_{=:(*_1)} \big] + 2 \underbrace{ \E_{X,X' \sim \P_n}\big[ \S_{\P_\theta} \S_{\P_\theta} K(X, X') \big] }_{=:(*_2)} . \label{eq:bias-robust_GD_upb}
	\end{align}
	From \Cref{lem:ub_kd2}, $(*_1) \le \sqrt{ \S_{\P_{\theta}} \S_{\P_{\theta}} K(y, y) } \sqrt{ \S_{\P_{\theta}} \S_{\P_{\theta}} K(X, X) }$  and $(*_2) \le \E_{X \sim \P_n}[ \S_{\P_{\theta}} \S_{\P_{\theta}} K(X, X) ]$.
	Plugging these bounds into \eqref{eq:bias-robust_GD_upb} and using Jensen's inequality gives
	\begin{align}
	\eqref{eq:bias-robust_GD_upb} & \le 2 \sqrt{ \S_{\P_{\theta}} \S_{\P_{\theta}} K(y, y) } \sqrt{ \E_{X \sim \P_n}\left[ \S_{\P_{\theta}} \S_{\P_{\theta}} K(X, X) \right] } + 2 \E_{X \sim \P_n}\left[ \S_{\P_{\theta}} \S_{\P_{\theta}} K(X, X) \right] . \label{eq: robust interm bound}
	\end{align}
	Now, observing that 
	\begin{align}
	\E_{X \sim \P_n}\big[ \S_{\P_{\theta}} \S_{\P_{\theta}} K(X, X) \big] \le \E_{X \sim \P_n} \Big[ \sup_{y \in \R^d} ( \S_{\P_{\theta}} \S_{\P_{\theta}} K(y, y) ) \Big] = \sup_{y \in \R^d} \S_{\P_{\theta}} \S_{\P_{\theta}} K(y, y) \label{eq: sup y bound}
	\end{align}
	and taking a supremum over $y$ in \eqref{eq: robust interm bound}, we obtain the bound
	\begin{align}
	\sup_{y \in \R^d} \left| \operatorname{D}L(y, \theta, \P_n) \right| & \le 4 \sup_{y \in \R^d} \S_{\P_{\theta}} \S_{\P_{\theta}} K(y, y) . \label{eq:proof_pd_upb}
	\end{align}
	Therefore, from \eqref{eq:proof_pd_upb}, it suffices to verify the conditions
	\begin{align*}
	\text{(I)}\ \sup_{\theta \in \Theta} \left( \pi(\theta) \sup_{y \in \R^d} \S_{\P_{\theta}} \S_{\P_{\theta}} K(y, y) \right) < \infty \quad \text{ and } \quad \text{(II)}\ \int_{\Theta}  \sup_{y \in \R^d} \S_{\P_{\theta}} \S_{\P_{\theta}} K(y, y) \pi(\theta) \mathrm{d} \theta < \infty ,
	\end{align*}
	which imply the original conditions (i) and (ii).
	To this end, in the remainder we (a) exploit the specific form of $\S_{\P_{\theta}}$ to derive the an explicit upper bound on $\sup_{y \in \R^d} \S_{\P_{\theta}} \S_{\P_{\theta}} K(y, y)$, then (b) verify the conditions (I) and (II) based on this upper bound.
	
	\vspace{5pt}
	\noindent \textbf{Part (a):} 
	By the reproducing property of $K$, the definition of the diffusion Stein operator $\S_{\P_\theta}$, and the fact $(a_1 + a_2)^2 \le 2 ( a_1^2 + a_2^2 )$ for $a_1, a_2 \in \R$, we have the bound 
		\begin{align*}
	\S_{\P_{\theta}} \S_{\P_{\theta}} K(y, y) = \| \S_{\P_{\theta}} K(y, \cdot) \|_{\H}^2 & = \left\| \nabla_{y} \log p_{\theta}(y) \cdot K(y, \cdot) + \nabla_{y} \cdot K(y, \cdot) \right\|_{\H}^2 \\
	& \le 2 \| \nabla_{y} \log p_{\theta}(y) \cdot K(y, \cdot) \|_\H^2 + 2 \left\| \nabla_{y} \cdot K(y, \cdot) \right\|_{\H}^2 . 
	\end{align*}
	For the first term, the reproducing property of $K$ gives that
	\begin{align*}
	\| \nabla_{y} \log p_{\theta}(y) \cdot K(y, \cdot) \|_{\H}^2 & = \nabla_{y} \log p_{\theta}(y) \cdot K(y, y) \nabla_{y} \log p_{\theta}(y) ,
	\end{align*}
	while for the second term, and letting $R(x, x') := \nabla_{x} \cdot ( \nabla_{x'} \cdot K(x, x') )$, the reproducing property gives that
	\begin{align*}
	\left\| \nabla_{y} K(y, \cdot) \right\|_{\H}^2 & = \big\langle \nabla_{y} \cdot K(y, \cdot), \nabla_{y} \cdot K(y, \cdot) \big\rangle_{\H} = R(y, y) .
	\end{align*}
	Thus, taking the supremum with respect to $y \in \R^d$ yields the upper bound, 
	\begin{align*}
	\sup_{y \in \R^d} \S_{\P_{\theta}} \S_{\P_{\theta}} K(y, y) \le 2 \sup_{y \in \R^d} \big( \nabla_{y} \log p_{\theta}(y) \cdot K(y, y) \nabla_{y} \log p_{\theta}(y) \big) + 2 \sup_{y \in \R^d} R(y, y) .
	\end{align*}
	Since $K \in C_b^{1 \times 1}(\R^d \times \R^d)$ by assumption, it follows that $C_{MK} := \sup_{y \in \R^d} R(y, y) < \infty$.
	Thus we have arrived at
	\begin{align}
	\sup_{y \in \R^d} \S_{\P_{\theta}} \S_{\P_{\theta}} K(y, y) & \le 2 \gamma(\theta) + 2 C_{MK} , \label{eq: robust final bound}
	\end{align}
	where $\gamma(\theta)$ was defined in the statement of \Cref{thm:bias-robust}.
	
	\vspace{5pt}
	\noindent \textbf{Part (b):} 
	Now we are in a position to verify conditions (I) and (II).
	For condition (I), we use \eqref{eq: robust final bound} to obtain
	\begin{align*}
	\sup_{\theta \in \Theta} \Big( \pi(\theta) \sup_{y \in \R^d} \S_{\P_{\theta}} \S_{\P_{\theta}} K(y, y) \Big) \le 2 \sup_{\theta \in \Theta} \pi(\theta) \gamma(\theta) + 2 C_{MK} \sup_{\theta \in \Theta} \pi(\theta) 
	\end{align*}
	which is finite by assumption.
	Similarly, for condition (II), we use \eqref{eq: robust final bound} to obtain
	\begin{align*}
	\int_{\Theta}  \sup_{y \in \R^d} \S_{\P_{\theta}} \S_{\P_{\theta}} K(y, y) \pi(\theta) \mathrm{d} \theta \le 2 \int_{\Theta} \pi(\theta) \gamma(\theta) \mathrm{d} \theta + 2 C_{MK} \int_{\Theta} \pi(\theta) \mathrm{d} \theta ,
	\end{align*}
	which is also finite by assumption.
	This completes the proof.
\end{proof}

\subsection{Verifying \Cref{asmp:sc_cnd,asmp:an_cnd,asmp:pc_cnd}} \label{apx: expfam_asmp}

In this appendix we demonstrate how \Cref{asmp:sc_cnd,asmp:an_cnd,asmp:pc_cnd} can be verified for the exponential family model when the Langevin Stein operator is employed.
For simplicity, consider the case where the data dimension is $d = 1$, the parameter dimension is $p = 1$, and the conjugate prior $\pi(\theta) \propto \exp(- \theta^2 / 2)$ is used.
From \eqref{eq:expfam}, a canonical exponential family model with $\eta(\theta) = \theta$ and $\X = \R$ is given by
\begin{align*}
	p_\theta(x) = \exp( \theta \cdot t(x) - a(\theta) + b(x) )
\end{align*} 
where $t: \R \to \R$, $a: \Theta \to \R$ and $b: \R \to \R$. 
Accordingly, the log derivative is given by $\nabla \log p_\theta(x) = \nabla t(x) \theta + \nabla b(x)$.
Identical calculations to \Cref{prop:post_expfam} 
show that the KSD of the exponential family model with the Langevin Stein operator takes a quadratic form
\begin{align*}
\operatorname{KSD}^2(\P_{\theta} \| \P_n) = C_{1,n} \theta^2 + C_{2,n} \theta + C_{3,n} \quad \text{ and } \quad \operatorname{KSD}^2(\P_{\theta} \| \P) = C_{1} \theta^2 + C_{2} \theta + C_{3} .
\end{align*}
where $C_{i,n} = (1 / n^2)\sum_{i,j=1}^{n} c_i(x_i, x_j)$ and $C_{i} = \E_{X,X' \sim \P}[ c_i(X, X') ]$ and
\begin{align*}
c_1(x, x') & := \nabla t(x) \cdot K(x, x') \nabla t(x') \\
c_2(x, x') & := \nabla t(x) \cdot \big( \nabla_{x'} \cdot K(x, x') \big) + \nabla t(x') \cdot \big( \nabla_{x} \cdot K(x, x') \big) + 2 \nabla t(x) \cdot K(x, x') \nabla b(x') \\
c_3(x, x') & := b(x) \cdot K(x, x') b(x') + \nabla_{x} \cdot ( \nabla_{x'} \cdot K(x, x') ) \\
& \hspace{160pt} + b(x) \cdot ( \nabla_{x'} \cdot K(x, x') ) + b(x') \cdot ( \nabla_{x} \cdot K(x, x') ) .
\end{align*}
Note that $C_{1,n} > 0$ and $C_1 > 0$ if a positive definite kernel $K$ is used.

\vspace{5pt}
\noindent \textbf{Verifying \Cref{asmp:an_cnd} ($r_{\max}=3$):}
First, note that $H_* = \nabla_{\theta}^2 \operatorname{KSD}^2(\P_{\theta} \| \P) |_{\theta = \theta_*}$ is non-singular since $\nabla_{\theta} \operatorname{KSD}^2(\P_{\theta} \| \P) = \nabla_{\theta}^2 (C_1 \theta^2 + C_2 \theta + C_3) = 2 C_1 > 0$.
Now, as demonstrated in \Cref{sec:prop_KSD}, when $\S_{\P_\theta}$ is the Langevin Stein operator, we have $( \partial^r \S_{\P_\theta} )[h](x) = ( \partial^r \nabla_{x} \log p_{\theta}(x) ) \cdot h(x)$ and $h \mapsto ( \partial^r \S_{\P_\theta} )[h](x)$ is a continuous linear functional on $\H$ for each fixed $x \in \X$.
In the exponential family case, the map $\theta \mapsto \nabla_{x} \log p_{\theta}(x)$ is infinitely differentiable over $\B$ since it is polynomial, leading to
\begin{align*}
	\partial^1 \nabla_{x} \log p_{\theta}(x) = \nabla t(x), \qquad \partial^2 \nabla_{x} \log p_{\theta}(x) = 0, \qquad \partial^3 \nabla_{x} \log p_{\theta}(x) = 0 .
\end{align*}
It is then clear that $\E_{X \sim \P}[ \sup_{\theta \in \B} ( ( \partial^r \S_{\P_\theta} ) ( \partial^r \S_{\P_\theta} ) K(X, X) ) ] < \infty$ for $r = 2, 3$.
For $r = 1$, 
\begin{align}
	\E_{X \sim \P}\left[ \sup_{\theta \in \B} \big( ( \partial^1 \S_{\P_\theta} ) ( \partial^1 \S_{\P_\theta} ) K(X, X) \big) \right] = \E_{X \sim \P}\left[ \nabla t(X) \cdot K(X, X) \nabla t(X) \right] . \label{eq: b8_eq1}
\end{align}
For the remaining term in \Cref{asmp:an_cnd}, by essentially same calculations as \Cref{prop:post_expfam},
\begin{align}
	\E_{X \sim \P}\left[ \sup_{\theta \in \B} \big( \S_{\P_\theta} \S_{\P_\theta} K(X, X) \big) \right] & = \E_{X \sim \P}\left[ \sup_{\theta \in \B} \big( c_1(X, X) \theta^2 + c_2(X, X) \theta + c_3(X, X) \big) \right] \nonumber \\
	& \le \E_{X \sim \P}\left[ c_1(X, X) \sup_{\theta \in \B} \theta^2 + c_2(X, X) \sup_{\theta \in \B} \theta + c_3(X, X) \right] \label{eq: b8_eq2}
\end{align}
Since $\Theta$ is a bounded set in $\R$, it is clear that $\sup_{\theta \in \B} \theta < \infty$.
The finiteness of \eqref{eq: b8_eq1} and \eqref{eq: b8_eq2} can therefore be interpreted as finite moment conditions involving $t$, $b$, $K$ and $\P$.

\vspace{5pt}
\noindent \textbf{Verifying \Cref{asmp:sc_cnd}:}
If both $\operatorname{KSD}^2(\P_{\theta} \| \P_n)$ and $\operatorname{KSD}^2(\P_{\theta} \| \P)$ are of quadratic form with $C_{1,n} > 0$ and $C_1 > 0$, the estimator $\theta_n$ exists and the minimiser $\theta_*$ is unique over $\R$.
It depends on $C_1, C_2, C_3$ whether $\theta_*$ is contained in $\Theta$, but if we are free to select $\Theta$ then we may select it such that $\theta^* \in \Theta$.
Since $C_1 > 0$, the \emph{well-separated} property of $\theta_*$ is automatically satisfied; i.e. $\operatorname{KSD}(\P_{\theta_*}, \P) < \inf_{\{\theta \in \Theta : \|\theta - \theta_*\|_2 \ge \epsilon \}} \operatorname{KSD}(\P_\theta,\P)$ for all $\epsilon > 0$.

\vspace{5pt}
\noindent \textbf{Verifying \Cref{asmp:pc_cnd}:}
Part (1) is immediately satisfied since the prior density $\pi(\theta) \propto \exp(- \theta^2 / 2)$ is continuous and positive on $\Theta$.
For part (2), we first have
\begin{align*}
	| \operatorname{KSD}^2(\P_{\theta} \| \P) - \operatorname{KSD}^2(\P_{\theta_*} \| \P) | = | C_1 \theta^2 + C_2 \theta - C_1 \theta_*^2 - C_2 \theta_* | = C_1 \left| ( \theta + Z_2 )^2 - Z_1 \right|
\end{align*}
where $Z_1 := C_2^2 / (4 C_1^2) + \theta_*^2 + (C_2 / C_1) \theta_* $ and $Z_2 := C_2 / (2 C_1)$ by completing the square.
By the simple calculation, the set $B_n(\alpha_1) = \{ \theta \in \Theta : | \operatorname{KSD}^2(\P_{\theta} \| \P) - \operatorname{KSD}^2(\P_{\theta_*} \| \P) | \le \alpha_1 / \sqrt{n} \}$ is then given by
\begin{align*}
	B_n(\alpha_1) = \left\{ \theta \in \Theta : - \left( \frac{\alpha_1}{C_1 \sqrt{n}} + Z_1 \right)^{\frac{1}{2}} - Z_2 \le \theta \le \left( \frac{\alpha_1}{C_1 \sqrt{n}} + Z_1 \right)^{\frac{1}{2}} - Z_2 \right\}
\end{align*}
While it is difficult to derive an explicit inequality between $\Pi(B_n)$ and $\exp(- \alpha_2 \sqrt{n})$, since it requires division into cases according to the values of $C_1,C_2,C_3$, $\theta_*$, and the set $\Theta$, the explicit form of $B_n$ renders it straightforward to numerically determine which values for $\alpha_1 > 0$ and $\alpha_2 > 0$ ensure that $\Pi(B_n) \ge \exp(- \alpha_2 \sqrt{n})$ holds for all $n \in \mathbb{N}$.

\vspace{5pt}
\noindent \textbf{Quantities $S_n(x, \theta)$ and $J_n$:}
Here we provide the explicit form of $S_n(x, \theta)$ and $J_n$ used to determine the value of $\beta$ for exponential family model.
From the definition,
\begin{align*}
S_n(x, \theta) & = \frac{1}{n} \sum_{i = 1}^{n} \nabla_{\theta}\big( \S_{\P_\theta} \S_{\P_\theta} K(x, x_i) \big) = 2 \left( \frac{1}{n} \sum_{i = 1}^{n} c_1(x, x_i) \right) \theta + \left( \frac{1}{n} \sum_{i = 1}^{n} c_2(x, x_i) \right) .
\end{align*}
Let $c_{1,n}(x) := (1 / n) \sum_{i = 1}^{n} c_1(x, x_i)$ and $c_{2,n}(x) := (1 / n) \sum_{i = 1}^{n} c_2(x, x_i)$. 
From the definition,
\begin{align*}
J_n & = \frac{1}{n} \sum_{i = 1}^{n} S_n(x_i, \theta_n) S_n(x_i, \theta_n)^\top = \frac{1}{n} \sum_{i = 1}^{n} \big( 2 c_{1,n}(x_i) \theta_n + c_{2,n}(x_i) \big) \big( 2 c_{1,n}(x_i) \theta_n + c_{2,n}(x_i) \big)^\top .
\end{align*}
Together with $H_n = C_{1,n}$, the default choice of $\beta$ is given by \eqref{eq:beta_choice_approx} in \Cref{subsec: choice of k}.


\section{Auxiliary Theoretical Results} \label{sec: proof_pre_result_4}

In \Cref{sec:appendix} we exploited a number of auxiliary results, the details of which are now provided. 
Recall that Standing Assumptions 1 and 2 continue to hold throughout.

\subsection{Derivative Bounds} \label{sec: proof_pre_result_4_pre}

Our auxiliary results mainly concern moments of derivative quantities, and the aim of \Cref{sec: proof_pre_result_4_pre} is to establish the main bounds that will be used.
Recall that $\partial^1$, $\partial^2$ and $\partial^3$ denote the partial derivatives $(\partial / \partial \theta_h)$, $(\partial^2 / \partial \theta_h \partial \theta_k)$ and $(\partial^3 / \partial \theta_h \partial \theta_k \partial \theta_l)$ respectively.
For the proofs in \Cref{sec: proof_pre_result_4_pre}, we make the index explicit by re-writing them as $\partial_{(h)}^1$, $\partial_{(h,k)}^2$ and $\partial_{(h,k,l)}^3$.
For $x \in \X$ and $(h,k,l) \in \{ 1, \ldots, p \}^3$, we define
\begin{align*}
m^0(x) & := \sup_{\theta \in \B} \sqrt{ \S_{\P_{\theta}} \S_{\P_{\theta}} K(x, x) } , \qquad
m^1(x)  := \sup_{\theta \in \B} \sqrt{ \sum_{h=1}^{p} ( \partial_{(h)}^1 \S_{\P_{\theta}} ) ( \partial_{(h)}^1 \S_{\P_{\theta}} ) K(x, x) }, \\
m^2(x) & := \sup_{\theta \in \B} \sqrt{ \sum_{h,k=1}^{p} ( \partial_{(h,k)}^2 \S_{\P_{\theta}} ) ( \partial_{(h,k)}^2 \S_{\P_{\theta}} ) K(x, x) }, \\
m^3(x) & := \sup_{\theta \in \B} \sqrt{ \sum_{h,k,l=1}^{p} ( \partial_{(h,k,l)}^3 \S_{\P_{\theta}} ) ( \partial_{(h,k,l)}^3 \S_{\P_{\theta}} ) K(x, x) } .
\end{align*}
where we continue to use the convention that the first and second operator in expressions such as $( \partial_{(h)}^1 \S_{\P_{\theta}} ) ( \partial_{(h)}^1 \S_{\P_{\theta}} ) K(x, x')$ are respectively applied to the first and second argument of $K$.
Further define
\begin{align*}
M^1(x, x') & := m^1(x) m^0(x') + m^0(x) m^1(x') , \\
M^2(x, x') & := m^2(x) m^0(x') + 2 m^1(x) m^1(x') + m^0(x) m^2(x') , \\
M^3(x, x') & := m^3(x) m^0(x') + 3 m^2(x) m^1(x') + 3 m^1(x) m^2(x') + m^0(x) m^3(x') .
\end{align*}
Based on these quantities, we now provide three technical results, \Cref{lem:sk_deriv_upb}, \Cref{lem:sk_deriv_upb_conv} and \Cref{lem:sk_deriv_exp_bnd}.

\vspace{5pt}
\begin{lemma} \label{lem:sk_deriv_upb}
	Suppose \Cref{asmp:an_cnd} ($r_{max}=3$) holds.
	For each $r = 1, 2, 3$, and for any $x, x' \in \X$,
	\begin{align}
	\sup_{\theta \in \B} \big\| \nabla_{\theta}^r \big( \S_{\P_{\theta}} \S_{\P_\theta} K(x, x') \big) \big\|_2 & \le M^r(x, x') . \label{eq: deriv bd ap c}
	\end{align}
	If instead \Cref{asmp:an_cnd} ($r_{max} = 1$) holds, then \eqref{eq: deriv bd ap c} holds for $r = 1$.
\end{lemma}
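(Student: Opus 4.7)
The approach hinges on the Hilbert-space identity established in the proof of \Cref{lem:ub_kd2},
\begin{align*}
\S_{\P_\theta}\S_{\P_\theta}K(x,x') \;=\; \langle \S_{\P_\theta}K(x,\cdot),\,\S_{\P_\theta}K(x',\cdot)\rangle_{\H},
\end{align*}
combined with the multi-index Leibniz rule. First I would derive an analogue of \eqref{eq: various stein} for derivative operators: by parts (1)--(2) of \Cref{asmp:an_cnd}, for each multi-index $(h_1,\ldots,h_r)$ with $r\le r_{\max}$ the map $h\mapsto(\partial_{h_1}\cdots\partial_{h_r}\S_{\P_\theta})[h](x)$ is a continuous linear functional on $\H$, and its Riesz representer is precisely $(\partial_{h_1}\cdots\partial_{h_r}\S_{\P_\theta})K(x,\cdot)\in\H$, with squared norm equal to $(\partial_{h_1}\cdots\partial_{h_r}\S_{\P_\theta})(\partial_{h_1}\cdots\partial_{h_r}\S_{\P_\theta})K(x,x)$. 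The continuity of $\langle\cdot,\cdot\rangle_{\H}$ then justifies commuting $\partial_\theta$ with the inner product.

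Writing $\partial_S := \prod_{i\in S}\partial_{h_i}$ for $S\subseteq\{1,\ldots,r\}$, the multi-index Leibniz rule applied to the Hilbert-space representation yields
\begin{align*}
\partial_{h_1}\cdots\partial_{h_r}\bigl(\S_{\P_\theta}\S_{\P_\theta}K(x,x')\bigr) \;=\; \sum_{S\subseteq\{1,\ldots,r\}}\bigl\langle(\partial_S\S_{\P_\theta})K(x,\cdot),\,(\partial_{S^c}\S_{\P_\theta})K(x',\cdot)\bigr\rangle_{\H}.
\end{align*}
I would bound each summand by Cauchy--Schwarz in $\H$ and then control $\|\nabla_\theta^r(\cdot)\|_2 = \bigl(\sum_{(h_1,\ldots,h_r)}(\cdot)^2\bigr)^{1/2}$ by, first, applying Minkowski's inequality to separate the $2^r$ Leibniz terms, and second, invoking the elementary identity that $\sum_{(h_1,\ldots,h_r)} a_S^2\,b_{S^c}^2$ factorises as $\bigl(\sum a_S^2\bigr)\bigl(\sum b_{S^c}^2\bigr)$ whenever $a_S$ depends only on the indices in $S$ and $b_{S^c}$ only on those in $S^c$. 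After taking the supremum over $\theta$, each resulting contribution becomes $m^{|S|}(x)\,m^{|S^c|}(x')$, and grouping subsets by cardinality produces the binomial multiplicity $\binom{r}{k}$ appearing in the definition of $M^r(x,x')$.

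The final clause requiring only \Cref{asmp:an_cnd} with $r_{\max}=1$ is immediate, since the $r=1$ expansion involves only derivatives of $\S_{\P_\theta}$ of orders $0$ and $1$. The principal obstacle is purely combinatorial bookkeeping: verifying that the $2^r$ Leibniz terms, after Cauchy--Schwarz and Minkowski, assemble into exactly the $r+1$ blocks with coefficients $\binom{r}{k}$ matching $M^r$. For $r=2$ this groups four terms as $1,2,1$, and for $r=3$ eight terms as $1,3,3,1$; in both cases the argument is structurally identical to the $r=1$ case and reduces to careful index manipulation rather than new analytic input.
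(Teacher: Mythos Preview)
Your proposal is correct and follows essentially the same route as the paper: the Hilbert-space identity $\S_{\P_\theta}\S_{\P_\theta}K(x,x')=\langle\S_{\P_\theta}K(x,\cdot),\S_{\P_\theta}K(x',\cdot)\rangle_\H$, the Leibniz rule under the inner product (justified via continuity), Cauchy--Schwarz on each summand, and Minkowski plus the index-factorisation to assemble the $m^s$ terms into $M^r$. The only presentational difference is that you handle all $r$ at once via multi-index notation, whereas the paper writes out $r=1$ explicitly and then appeals to the analogous argument for $r=2,3$.
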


\begin{proof}
	We first derive the upper bound for $r = 1$ and then apply the same argument for the remaining upper bound for $r = 2$ and $r = 3$.
	By the definition of $\nabla_{\theta}$,
	\begin{align}
	\sup_{\theta \in \B} \Big\| \nabla_{\theta} \big( \S_{\P_{\theta}} \S_{\P_\theta} K(x, x') \big) \Big\|_2 & = \sup_{\theta \in \B} \sqrt{ \sum_{h=1}^{p} \left( \partial_{(h)}^1 \big( \S_{\P_{\theta}} \S_{\P_\theta} K(x, x') \big) \right)^2 } . \label{eq:sk_deriv_norm}
	\end{align}
	By \Cref{lem:ub_kd2} and Standing Assumption 2, we have $\S_{\P_{\theta}} K(x, \cdot) \in \H$ for any $x \in \X$ and 
	\begin{align}
	(*_1) := \partial_{(h)}^1 \Big( \S_{\P_{\theta}} \S_{\P_\theta} K(x, x') \Big) & = \partial_{(h)}^1 \Big( \left\langle \S_{\P_{\theta}} K(x, \cdot), \S_{\P_{\theta}} K(x', \cdot) \right\rangle_{\H} \Big) . \label{eq:sk_deriv_inp}
	\end{align}
	From \Cref{asmp:an_cnd} ($r_{max}=1$), the operator $( \partial_{(h)}^1 \S_{\P_{\theta}} )$ exists over $\B$ and satisfies the preconditions of \Cref{lem:ub_kd2}.
	Hence, by setting $\S_\Q = ( \partial_{(h)}^1 \S_{\P_{\theta}} )$ in \Cref{lem:ub_kd2}, we have that $( \partial_{(h)}^1 \S_{\P_{\theta}} ) K(x, \cdot) \in \H$ for each $x \in \X$.
	Let $f_\theta(\cdot) = \S_{\P_{\theta}} K(x, \cdot)$ and $g_\theta(\cdot) = \S_{\P_{\theta}} K(x', \cdot)$.
	Then the following product rule holds:
	\begin{align}
	\partial_{(h)}^1 \langle f_{\theta}, g_{\theta} \rangle_{\H} & = \langle \partial_{(h)}^1 f_{\theta}, g_{\theta} \rangle_{\H} + \langle f_{\theta}, \partial_{(h)}^1 g_{\theta} \rangle_{\H} , \label{eq:formula_deriv_inp}
	\end{align}
	which is verified from definition of differentiation as a limit and continuity of the inner product.
	Note that $\partial_{(h)} f_\theta(\cdot) = ( \partial_{(h)}^1 \S_{\P_{\theta}} ) K(x, \cdot) \in \H$ and $\partial_{(h)} g_\theta(\cdot) = ( \partial_{(h)}^1 \S_{\P_{\theta}} ) K(x', \cdot) \in \H$.
	Therefore by \eqref{eq:formula_deriv_inp} and the Cauchy--Schwarz inequality,
	\begin{align*}
	(*_1) & = \left\langle \partial_{(h)}^1 \S_{\P_{\theta}} K(x, \cdot), \S_{\P_{\theta}} K(x', \cdot) \right\rangle_{\H} + \left\langle \S_{\P_{\theta}} K(x, \cdot), \partial_{(h)}^1 \S_{\P_{\theta}} K(x', \cdot) \right\rangle_{\H} \\
	& \le \underbrace{ \left\| ( \partial_{(h)}^1 \S_{\P_{\theta}} ) K(x, \cdot) \right\|_\H }_{(*_a)} \underbrace{\vphantom{\left|\partial_{(h)}^1\right\|_\H} \left\| \S_{\P_{\theta}} K(x', \cdot) \right\|_\H }_{(*_b)} + \underbrace{\vphantom{\left|\partial_{(h)}^1\right\|_\H} \left\| \S_{\P_{\theta}} K(x, \cdot) \right\|_\H }_{(*_c)} \underbrace{ \left\| ( \partial_{(h)}^1 \S_{\P_{\theta}} ) K(x', \cdot) \right\|_\H }_{(*_d)} .
	\end{align*}
	For the original term \eqref{eq:sk_deriv_norm}, by the triangle inequality,
	\begin{align*}
	\sup_{\theta \in \B} \sqrt{ \sum_{h=1}^{p} (*_1)^2 } & \le \sup_{\theta \in \B} \sqrt{ \sum_{h=1}^{p} \Big( (*_a) (*_b) + (*_c) (*_d) \Big)^2 } \le \sup_{\theta \in \B} \sqrt{ \sum_{h=1}^{p} (*_a)^2 (*_b)^2 } + \sup_{\theta \in \B} \sqrt{ \sum_{h=1}^{p} (*_c)^2 (*_d)^2 } .
	\end{align*}
	For the term $(*_a)$, expanding the norm yields that
	\begin{align*}
	(*_a)^2 & = \left\langle ( \partial_{(h)}^1 \S_{\P_{\theta}} ) K(x, \cdot), ( \partial_{(h)}^1 \S_{\P_{\theta}} ) K(x, \cdot) \right\rangle_{\H} = ( \partial_{(h)}^1 \S_{\P_{\theta}} ) ( \partial_{(h)}^1 \S_{\P_{\theta}} ) K(x, x) .
	\end{align*}
	A similar argument applied to $(*_b)^2$, $(*_c)^2$ and $(*_d)^2$ leads to the overall bound
	\begin{align*}
	\sup_{\theta \in \B} \big\| \nabla_{\theta} \big( \S_{\P_{\theta}} \S_{\P_\theta} K(x, x') \big) \big\|_2 & \le m^1(x) m^0(x') + m^0(x) m^1(x') = M^1(x, x') .
	\end{align*}
	
	The upper bounds for $r = 2$ and $r = 3$ are obtained by an analogous argument.
	Indeed, from the definition of $\nabla_{\theta}^2$ and $\nabla_{\theta}^3$,
	\begin{align*}
	\sup_{\theta \in \B} \left\| \nabla_{\theta}^2 \big( \S_{\P_{\theta}} \S_{\P_\theta} K(x, x') \big) \right\|_2 & = \sup_{\theta \in \B} \sqrt{ \sum_{h,k=1}^{p} \left( \partial_{(h,k)}^2 \big( \S_{\P_{\theta}} \S_{\P_\theta} K(x, x') \big) \right)^2 } =: (*'') , \\
	\sup_{\theta \in \B} \left\| \nabla_{\theta}^3 \big( \S_{\P_{\theta}} \S_{\P_\theta} K(x, x') \big) \right\|_2 & = \sup_{\theta \in \B} \sqrt{ \sum_{h,k,l=1}^{p} \left( \partial_{(h,k,l)}^3 \big( \S_{\P_{\theta}} \S_{\P_\theta} K(x, x') \big) \right)^2 } =: (*''') .
	\end{align*}
	From \Cref{asmp:an_cnd} ($r_{max}=3$), the operators $( \partial_{(h,k)}^2 \S_{\P_{\theta}} )$ and $( \partial_{(h,k,l)}^3 \S_{\P_{\theta}} )$ exist over $\B$ and satisfy the preconditions of \Cref{lem:ub_kd2}.
	Hence from \Cref{lem:ub_kd2}, $\partial_{(h,k)}^2 f_\theta(\cdot) = ( \partial_{(h,k)}^2 \S_{\P_{\theta}} ) K(x, \cdot) \in \H$ and $\partial_{(h,k,l)}^3 f_\theta(\cdot) = ( \partial_{(h,k,l)}^3 \S_{\P_{\theta}} ) K(x, \cdot) \in \H$ for any $x \in \X$, and in turn $\partial_{(h,k)}^2 g_\theta(\cdot) \in \H$ and $\partial_{(h,k,l)}^3 g_\theta(\cdot) \in \H$.
	Repeated application of the product rule \eqref{eq:formula_deriv_inp} gives that
	\begin{align*}
	\partial_{(h,k)}^2 \langle f_{\theta}, g_{\theta} \rangle_{\H} = & \langle \partial_{(h,k)}^2 f_{\theta}, g_{\theta} \rangle_{\H} + \langle \partial_{(h)}^1 f_{\theta}, \partial_{(k)}^1 g_{\theta} \rangle_{\H} + \langle \partial_{(k)}^1 f_{\theta}, \partial_{(h)}^1 g_{\theta} \rangle_{\H} + \langle f_{\theta}, \partial_{(h,k)}^2 g_{\theta} \rangle_{\H} ,\\
	\partial_{(h,k,l)}^3 \langle f_{\theta}, g_{\theta} \rangle_{\H} = & \langle \partial_{(h,k,l)}^3 f_{\theta}, g_{\theta} \rangle_{\H} + \langle \partial_{(h,k)}^2 f_{\theta}, \partial_{(l)}^1 g_{\theta} \rangle_{\H} + \langle \partial_{(h,l)}^2 f_{\theta}, \partial_{(k)}^1 g_{\theta} \rangle_{\H} + \langle \partial_{(k,l)}^2 f_{\theta}, \partial_{(h)} g_{\theta} \rangle_{\H}  \nonumber \\
	& + \langle \partial_{(h)}^1 f_{\theta}, \partial_{(k,l)}^2 g_{\theta} \rangle_{\H} + \langle \partial_{(k)}^1 f_{\theta}, \partial_{(h,l)}^2 g_{\theta} \rangle_{\H} + \langle \partial_{(l)}^1 f_{\theta}, \partial_{(h,k)}^2 g_{\theta} \rangle_{\H} + \langle f_{\theta}, \partial_{(h,k,l)}^3 g_{\theta} \rangle_{\H} . 
	\end{align*}
	Following the same argument as the preceding upper bound for $r = 1$, the triangle inequality and Cauchy--Schwarz imply that
	\begin{align*}
	(*'') & \le m^2(x) m^0(x') + m^1(x) m^1(x') + m^1(x) m^1(x') + m^0(x) m^2(x') \\
	& = m^2(x) m^0(x') + 2 m^1(x) m^1(x') + m^0(x) m^2(x') = M^2(x, x') , \\
	(*''') & \le m^3(x) m^0(x') + m^2(x) m^1(x') + m^2(x) m^1(x') + m^2(x) m^1(x') \\
	& \qquad  + m^1(x) m^2(x') + m^1(x) m^2(x') + m^1(x) m^2(x') + m^0(x) m^3(x') \\
	& = m^3(x) m^0(x') + 3 m^2(x) m^1(x') + 3 m^1(x) m^2(x') + m^0(x) m^3(x') = M^3(x, x') ,
	\end{align*}
	which are the claimed upper bounds for the cases $r = 2$ and $r = 3$.
\end{proof}

\vspace{5pt}
\begin{lemma} \label{lem:sk_deriv_exp_bnd}
	Suppose \Cref{asmp:an_cnd} ($r_{max}=3$) holds.
	For $r = 0, 1, 2, 3$, $\E_{X \sim \P}[ | m^r(X) | ] < \infty$ and $\E_{X \sim \P}[ | m^r(X) |^2 ] < \infty$. 
	For $r = 1, 2, 3$, $\E_{X, X' \sim \P}[ | M^r(X, X') | ] < \infty$ and $\E_{X \sim \P}[ | M^r(X, X) | ] < \infty$.
	If instead \Cref{asmp:an_cnd} ($r_{max}=1$) holds, these results hold for $0 \leq r \leq 1$.
\end{lemma}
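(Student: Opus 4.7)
The plan is to deduce all four claimed moment bounds from a single observation: each $m^r(x)^2$ is bounded by a \emph{finite} sum whose terms are controlled by the hypothesis $\E_{X \sim \P}[\sup_{\theta \in \B}(\partial^r \S_{\P_\theta})(\partial^r \S_{\P_\theta}) K(X,X)] < \infty$ supplied by \Cref{asmp:an_cnd}. Everything else then follows from Cauchy--Schwarz and (for $M^r(X,X')$) independence.

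First I would bound the second moments of the $m^r$. Since the supremum of a finite sum is at most the sum of the suprema,
\begin{align*}
m^r(x)^2 \;\le\; \sum_{h_1,\dots,h_r=1}^{p} \sup_{\theta \in \B} \left( \partial_{(h_1,\dots,h_r)}^r \S_{\P_\theta} \right)\left( \partial_{(h_1,\dots,h_r)}^r \S_{\P_\theta} \right) K(x,x),
\end{align*}
and integrating against $\P$ the right-hand side is a finite sum of quantities that are finite by \Cref{asmp:an_cnd}, so $\E_{X \sim \P}[m^r(X)^2] < \infty$ for every $0 \le r \le r_{\max}$. Because $m^r \ge 0$, Jensen's inequality (equivalently Cauchy--Schwarz against the constant $1$) yields $\E_{X \sim \P}[m^r(X)] \le \sqrt{\E_{X \sim \P}[m^r(X)^2]} < \infty$.

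Next I would handle $\E_{X, X' \sim \P}[|M^r(X,X')|]$. By construction, $M^r(x,x')$ is a finite nonnegative linear combination of terms of the form $m^i(x) m^j(x')$ with $i+j = r$ and $0 \le i,j \le r \le r_{\max}$. Since $X$ and $X'$ are independent draws from $\P$, each such term satisfies $\E[m^i(X) m^j(X')] = \E[m^i(X)]\, \E[m^j(X')] < \infty$ by the previous step, so $\E[|M^r(X,X')|] < \infty$. For $\E_{X \sim \P}[|M^r(X,X)|]$ the diagonal evaluation removes independence, but each term $m^i(X) m^j(X)$ is controlled by Cauchy--Schwarz:
\begin{align*}
\E_{X \sim \P}[m^i(X) m^j(X)] \;\le\; \sqrt{\E_{X \sim \P}[m^i(X)^2] \, \E_{X \sim \P}[m^j(X)^2]} \;<\; \infty,
\end{align*}
again using the second-moment bound from the first step. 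Summing the finitely many such terms gives $\E_{X \sim \P}[|M^r(X,X)|] < \infty$.

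If only \Cref{asmp:an_cnd} ($r_{\max}=1$) holds, exactly the same argument applies but the conclusion for $m^r$ is restricted to $r \in \{0,1\}$ and for $M^r$ to $r = 1$. There is no real obstacle in this proof; the only subtlety is remembering that $m^r(x)^2$ contains a supremum \emph{of a sum} which must be dominated by the sum of suprema before \Cref{asmp:an_cnd} can be applied term-by-term.
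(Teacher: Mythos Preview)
Your proof is correct and follows essentially the same approach as the paper's own argument: bound $m^r(x)^2$ by a finite sum of suprema to invoke \Cref{asmp:an_cnd}, deduce first moments from second moments, then handle $M^r(X,X')$ via independence and $M^r(X,X)$ via Cauchy--Schwarz on each cross term. The only cosmetic difference is that you observe the general binomial structure of $M^r$ as a nonnegative combination of $m^i(x)m^j(x')$ with $i+j=r$, whereas the paper expands the cases $r=1,2,3$ explicitly.
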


\begin{proof}
	First, note that positivity of $m^r(\cdot)$ and $M^r(\cdot)$ implies that the absolute value signs can be neglected.
	Moreover, from Jensen's inequality $( \E_{X \sim \P}[ m^r(X) ] )^2 \le \E_{X \sim \P}[ m^r(X)^2 ]$. 
	Thus it is sufficient to show that (a) $\E_{X \sim \P}[ m^r(X)^2 ] < \infty$, (b) $\E_{X, X' \sim \P}[ M^r(X, X') ] < \infty$ and (c) $\E_{X \sim \P}[ M^r(X, X) ] < \infty$.
	
	\vspace{5pt}
	\noindent \textbf{Part (a):}
	The argument is analogous for each $r = 0,1,2,3$ and we present it with $r = 3$.
	The bounded follows from Jensen's inequality and the triangle inequality:
	\begin{align*}
	\E_{X \sim \P}\left[ m^3(X)^2 \right] & \le \E_{X \sim \P}\left[ \sup_{\theta \in \B} \sum_{h,k,l = 1}^{p} ( \partial_{(h,k,l)}^3 \S_{\P_{\theta}} ) ( \partial_{(h,k,l)}^3 \S_{\P_{\theta}} ) K(X, X) \right] \\
	& \le \sum_{h,k,l = 1}^{p} \E_{X \sim \P}\left[ \sup_{\theta \in \B} \big( ( \partial_{(h,k,l)}^3 \S_{\P_{\theta}} ) ( \partial_{(h,k,l)}^3 \S_{\P_{\theta}} ) K(X, X) \big) \right]
	\end{align*}
	where the terms in the sum are finite by \Cref{asmp:an_cnd} ($r_{max}=3$).
	
	\vspace{5pt}
	\noindent \textbf{Part (b):}
	Since $X, X'$ are independent in the expectation $\E_{X, X' \sim \P}[ M^r(X, X') ]$, it is clear from the definition of $M^r$ that $\E_{X, X' \sim \P}[ M^r(X, X') ]$ exists if the expectation of each term $m^s(X)$, $s \leq r$, exists.
	Thus by part (a), $\E_{X, X' \sim \P}[ M^r(X, X') ] < \infty$ for $r = 1, 2, 3$.
	
	\vspace{5pt}
	\noindent \textbf{Part (c):}
	From the definition of $M^r(x, x)$ for $r = 1, 2, 3$,
	\begin{align*}
	\E_{X \sim \P}[ M^1(X, X) ] & = 2 \E_{X \sim \P}[ m^1(X) m^0(X) ] , \\
	\E_{X \sim \P}[ M^2(X, X) ] & = 2 \E_{X \sim \P}[ m^2(X) m^0(X) ] + 2 \E_{X \sim \P}[ m^1(X) m^1(X) ] , \\
	\E_{X \sim \P}[ M^3(X, X) ] & = 2 \E_{X \sim \P}[ m^3(X) m^0(X) ] + 6 \E_{X \sim \P}[ m^2(X) m^1(X) ] .
	\end{align*}
	Applying the Cauchy Schwartz inequality for each term
	\begin{align*}
	\E_{X \sim \P}[ M^1(X, X) ] & \le 2 \sqrt{ \E_{X \sim \P}[ m^1(X)^2 ] } \sqrt{ \E_{X \sim \P}[ m^0(X)^2 ] } , \\
	\E_{X \sim \P}[ M^2(X, X) ] & \le 2 \sqrt{ \E_{X \sim \P}[ m^2(X)^2 ] } \sqrt{ \E_{X \sim \P}[ m^0(X)^2 ] } + 2 \sqrt{ \E_{X \sim \P}[ m^1(X)^2 ] } \sqrt{ \E_{X \sim \P}[ m^1(X)^2 ] } , \\
	\E_{X \sim \P}[ M^3(X, X) ] & \le 2 \sqrt{ \E_{X \sim \P}[ m^3(X)^2 ] } \sqrt{ \E_{X \sim \P}[ m^0(X)^2 ] } + 6 \sqrt{ \E_{X \sim \P}[ m^2(X)^2 ] } \sqrt{ \E_{X \sim \P}[ m^1(X)^2 ] } .
	\end{align*}
	Since each of the latter expectations is finite by part (a), $\E_{X \sim \P}[ M^r(X, X) ] < \infty$ for $r = 1, 2, 3$. 
	
	\vspace{5pt}
	Inspection of the proof reveals that these results hold for $r = 0,1$ if instead \Cref{asmp:an_cnd} ($r_{max}=1$) holds.
\end{proof}

\vspace{5pt}
\begin{lemma} \label{lem:sk_deriv_upb_conv}
	Suppose \Cref{asmp:an_cnd} ($r_{max}=3$) holds.
	Then, for $r = 1, 2, 3$,
	\begin{align}
		\frac{1}{n^2} \sum_{i=1}^{n} \sum_{j=1}^{n} M^r(x_i, x_j) \overset{a.s}{\longrightarrow} \E_{X, X' \sim \P}[ M^r(X, X') ] < \infty . \label{eq: deriv bd sum b}
	\end{align}
	If instead \Cref{asmp:an_cnd} ($r_{max}=1$) holds, then \eqref{eq: deriv bd sum b} holds for $r = 1$.
\end{lemma}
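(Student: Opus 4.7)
The plan is to exploit the key structural feature that every term appearing in $M^r(x,x')$ is a product of the form $m^s(x)\, m^t(x')$, so the double sum factorises into a product of two single sums that can be handled by the strong law of large numbers.

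First I would write
\begin{align*}
M^r(x,x') \;=\; \sum_{(s,t)} c_{s,t}^{(r)}\, m^s(x)\, m^t(x')
\end{align*}
for appropriate nonnegative coefficients $c_{s,t}^{(r)}$ (for instance $c_{1,0}^{(1)}=c_{0,1}^{(1)}=1$; $c_{2,0}^{(2)}=c_{0,2}^{(2)}=1$, $c_{1,1}^{(2)}=2$; and the analogous coefficients for $r=3$ read off directly from the definitions). By linearity of the double sum,
\begin{align*}
\frac{1}{n^2}\sum_{i=1}^n\sum_{j=1}^n M^r(x_i,x_j)
\;=\; \sum_{(s,t)} c_{s,t}^{(r)} \left(\frac{1}{n}\sum_{i=1}^n m^s(x_i)\right)\!\left(\frac{1}{n}\sum_{j=1}^n m^t(x_j)\right).
\end{align*}

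Next I would invoke the strong law of large numbers (Theorem~2.5.10 of \citealt{Durrett2010b}, as used in the proof of \Cref{thm:pw}) on each of the two factors. The finite-mean hypothesis required is $\E_{X\sim\P}[m^s(X)]<\infty$ for $0\le s\le r$, which is exactly \Cref{lem:sk_deriv_exp_bnd}. This gives $\tfrac{1}{n}\sum_i m^s(x_i) \overset{a.s.}{\to} \E_{X\sim\P}[m^s(X)]$, and likewise for $t$. Since there are only finitely many pairs $(s,t)$ in the sum, the null sets can be unioned harmlessly, and a.s. convergence of each factor passes to a.s. convergence of the product and hence of the linear combination.

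Finally, using independence of $X,X'\sim\P$ in the limiting expression, I recognise
\begin{align*}
\sum_{(s,t)} c_{s,t}^{(r)}\, \E_{X\sim\P}[m^s(X)]\, \E_{X'\sim\P}[m^t(X')]
\;=\; \E_{X,X'\sim\P}\!\left[\sum_{(s,t)} c_{s,t}^{(r)} m^s(X) m^t(X')\right]
\;=\; \E_{X,X'\sim\P}[M^r(X,X')],
\end{align*}
which is finite by part (b) of \Cref{lem:sk_deriv_exp_bnd}. This establishes \eqref{eq: deriv bd sum b} for $r=1,2,3$. Under the weaker hypothesis \Cref{asmp:an_cnd} ($r_{\max}=1$), only the moments $\E[m^0(X)]$ and $\E[m^1(X)]$ are guaranteed finite, but these are precisely what is needed for the $r=1$ case, so the same argument yields the stated conclusion. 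There is no real obstacle here, since the factorisation of $M^r$ into a rank-one structure bypasses any U-statistic or V-statistic subtleties; the whole burden of the argument is carried by \Cref{lem:sk_deriv_exp_bnd}.
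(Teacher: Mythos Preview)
Your proposal is correct and follows essentially the same approach as the paper: both exploit the separable structure $M^r(x,x')=\sum_{(s,t)} c_{s,t}^{(r)} m^s(x)m^t(x')$ to factorise the double average into a product of single averages, then apply the strong law of large numbers to each factor using the moment bounds from \Cref{lem:sk_deriv_exp_bnd}. The paper writes out the $r=1$ case explicitly and then states the analogous limits for $r=2,3$, whereas you handle all three cases at once via the coefficient notation, but there is no substantive difference.
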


\begin{proof}
	The proof is based on the strong law of large numbers,
	the sufficient conditions for which are provided by \Cref{lem:sk_deriv_exp_bnd}, which shows that $\E_{X \sim \P}\left[ | m^r(X) | \right] < \infty$ for $r = 0, 1, 2, 3$ under \Cref{asmp:an_cnd} ($r_{max}=3$).
	Then the strong law of large numbers \cite[Theorem 2.5.10]{Durrett2010b} yields that $(1/n) \sum_{i = 1}^{n} m^r(x_i) \overset{a.s.}{\to} \E_{X \sim \P}\left[ m^r(X) \right] =: (*_r)$ for $r = 0, 1, 2, 3$.
	Then, from the definition of $M^1$,
	\begin{align*}
		& \lim_{n \to \infty} \frac{1}{n^2} \sum_{i=1}^{n} \sum_{j=1}^{n} M^1(x_i, x_j) = \lim_{n \to \infty} \frac{1}{n^2} \sum_{i=1}^{n} \sum_{j=1}^{n} \Big( m^1(x_i) m^0(x_j) + m^0(x_i) m^1(x_j) \Big) \\
		& \hspace{5pt} = \lim_{n \to \infty} \frac{1}{n} \sum_{i=1}^{n} m^1(x_i) \times \lim_{n \to \infty} \frac{1}{n} \sum_{j=1}^{n} m^0(x_j) + \lim_{n \to \infty} \frac{1}{n} \sum_{i=1}^{n} m^0(x_i) \times \lim_{n \to \infty} \frac{1}{n} \sum_{j=1}^{n} m^1(x_j) .
	\end{align*}
	Since each limit in the right hand side converges a.s. to either $(*_0)$ or $(*_1)$, so that 
	\begin{align*}
	\frac{1}{n^2} \sum_{i=1}^{n} \sum_{j=1}^{n} M^1(x_i, x_j) \overset{a.s.}{\longrightarrow} & \E_{X \sim \P}[ m^1(X) ] \times \E_{X \sim \P}[ m^0(X) ] + \E_{X \sim \P}[ m^0(X) ] \times \E_{X \sim \P}[ m^1(X) ] \\
	= & \E_{X, X' \sim \P}[ m^1(X) m^0(X') + m^0(X) m^1(X') ] = \E_{X, X' \sim \P}[ M^1(X, X') ] ,
	\end{align*}
	where $X, X'$ are independent.
	An analogous argument holds for $M^2(x_i, x_j)$ and $M^3(x_i, x_j)$, giving that
	\begin{align*}
	\frac{1}{n^2} \sum_{i=1}^{n} \sum_{j=1}^{n} M^2(x_i, x_j) & \overset{a.s.}{\longrightarrow} (*_2) (*_0) + 2 (*_1) (*_1) + (*_0) (*_2) = \E_{X, X' \sim \P}[ M^2(X, X') ] , \\
	\frac{1}{n^2} \sum_{i=1}^{n} \sum_{j=1}^{n} M^3(x_i, x_j) & \overset{a.s.}{\longrightarrow} (*_3) (*_0) + 3 (*_2) (*_1) + 3 (*_1) (*_2) + (*_0) (*_3) = \E_{X, X' \sim \P}[ M^3(X, X') ] .
	\end{align*}
	Inspection of the proof reveals that \eqref{eq: deriv bd sum b} still holds for $r = 1$ if \Cref{asmp:an_cnd} ($r_{max}=1$) holds instead.
\end{proof}

\subsection{Proof of Auxiliary Results} \label{sec:proof_deriv_KSD_2nd3rd_up}

Throughout this section we let $f_n(\theta) := \operatorname{KSD}^2(\P_{\theta} \| \P_n)$ and $f(\theta) := \operatorname{KSD}^2(\P_{\theta} \| \P)$.
Similarly to $\nabla_{\theta}^2$, we let $\nabla_{\theta}^3 := \nabla_{\theta} \otimes \nabla_{\theta} \otimes \nabla_{\theta}$ denote the tensor product $\otimes$ where each component is given by $\partial_{h,k,l}^3$.
For a matrix $a \in \R^{p \times p}$ and tensor $b \in \R^{p \times p \times p}$, denote their Euclidean norms by $\| a \|_2$ and $\| b \|_2$.

\begin{lemma}[Derivatives a.s. Bounded] \label{lem:deriv_KSD_2nd3rd_up}
	Suppose \Cref{asmp:an_cnd} ($r_{max}=3$) holds.
	Then $\limsup_{n \to \infty} \sup_{\theta \in \B} \left\| \nabla_{\theta}^r f_n(\theta) \right\|_2 < \infty$ a.s. for $r = 1, 2, 3$.
	If instead \Cref{asmp:an_cnd} ($r_{max}=1$) holds, then the result holds for $r = 1$.
\end{lemma}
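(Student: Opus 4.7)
The plan is to bound $\sup_{\theta \in \B} \|\nabla_\theta^r f_n(\theta)\|_2$ by a sample average of the dominating functions $M^r$ introduced in \Cref{sec: proof_pre_result_4_pre}, and then apply the strong law of large numbers via \Cref{lem:sk_deriv_upb_conv}.

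First, I would use the explicit representation
\[
f_n(\theta) = \frac{1}{n^2} \sum_{i=1}^{n} \sum_{j=1}^{n} \S_{\P_\theta} \S_{\P_\theta} K(x_i, x_j)
\]
and interchange the derivative $\nabla_\theta^r$ with the finite double sum, yielding
\[
\nabla_\theta^r f_n(\theta) = \frac{1}{n^2} \sum_{i=1}^{n} \sum_{j=1}^{n} \nabla_\theta^r \bigl( \S_{\P_\theta} \S_{\P_\theta} K(x_i, x_j) \bigr).
\]
This interchange is trivially justified because the sum is finite and, under \Cref{asmp:an_cnd}, each summand $\theta \mapsto \S_{\P_\theta} \S_{\P_\theta} K(x_i, x_j)$ is $r$-times continuously differentiable on $\B$ (this smoothness was used already in the proof of \Cref{lem:sk_deriv_upb}).

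Next, by the triangle inequality for the Euclidean (tensor) norm, followed by the pointwise bound from \Cref{lem:sk_deriv_upb}, we obtain
\[
\sup_{\theta \in \B} \bigl\| \nabla_\theta^r f_n(\theta) \bigr\|_2 \le \frac{1}{n^2} \sum_{i=1}^{n} \sum_{j=1}^{n} \sup_{\theta \in \B} \bigl\| \nabla_\theta^r \bigl( \S_{\P_\theta} \S_{\P_\theta} K(x_i, x_j) \bigr) \bigr\|_2 \le \frac{1}{n^2} \sum_{i=1}^{n} \sum_{j=1}^{n} M^r(x_i, x_j) .
\]
Under \Cref{asmp:an_cnd} ($r_{\max}=3$), \Cref{lem:sk_deriv_upb_conv} ensures that for each $r = 1, 2, 3$ the right-hand side converges almost surely to $\E_{X,X' \sim \P}[M^r(X, X')]$, which is finite by \Cref{lem:sk_deriv_exp_bnd}. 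Hence $\limsup_{n \to \infty} \sup_{\theta \in \B} \|\nabla_\theta^r f_n(\theta)\|_2 < \infty$ almost simultaneously for $r = 1, 2, 3$. For the weaker hypothesis \Cref{asmp:an_cnd} ($r_{\max}=1$), the same argument works but only \Cref{lem:sk_deriv_upb} and \Cref{lem:sk_deriv_upb_conv} in the case $r=1$ are available, giving the stated conclusion only for $r = 1$.

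There is no real obstacle here beyond chaining the already-established auxiliary results; the slightly delicate point is ensuring the norm-of-a-sum bound uses the correct tensor norm for $r = 2, 3$, but this is handled by \Cref{lem:sk_deriv_upb}, which is phrased directly in terms of $\|\nabla_\theta^r (\cdot)\|_2$ for $r = 1, 2, 3$.
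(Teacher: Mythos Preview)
Your proposal is correct and essentially identical to the paper's own proof: interchange the derivative with the finite double sum, apply the triangle inequality together with \Cref{lem:sk_deriv_upb} to bound by $\frac{1}{n^2}\sum_{i,j} M^r(x_i,x_j)$, and then invoke \Cref{lem:sk_deriv_upb_conv} for the almost-sure limit. (Minor typo: ``almost simultaneously'' should read ``almost surely''.)
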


\begin{proof}
	First of all, for finite $n$ we have
	\begin{align*}
	\nabla_{\theta}^r f_n(\theta) = \nabla_{\theta}^r \frac{1}{n^2} \sum_{i = 1}^{n} \sum_{j = 1}^{n} \S_{\P_{\theta}} \S_{\P_\theta} K(x_i, x_j) = \frac{1}{n^2} \sum_{i = 1}^{n} \sum_{j = 1}^{n} \nabla_{\theta}^r \big( \S_{\P_{\theta}} \S_{\P_\theta} K(x_i, x_j) \big) .
	\end{align*}
	From the triangle inequality and \Cref{lem:sk_deriv_upb}, we further have
	\begin{align*}
	\sup_{\theta \in \B} \| \nabla_{\theta}^r f_n(\theta) \|_2 = \frac{1}{n^2} \sum_{i = 1}^{n} \sum_{j = 1}^{n} \sup_{\theta \in \B} \left\| \nabla_{\theta}^r \big( \S_{\P_{\theta}} \S_{\P_\theta} K(x_i, x_j) \big) \right\|_2 \le \frac{1}{n^2} \sum_{i = 1}^{n} \sum_{j = 1}^{n} M^r(x_i, x_j) .
	\end{align*}
	It follows from \Cref{lem:sk_deriv_upb_conv} that $(1 / n^2) \sum_{i = 1}^{n} \sum_{j = 1}^{n} M^r(x_i, x_j) \overset{a.s.}{\longrightarrow} \E_{X, X' \sim \P}[ M^r(X, X') ] < \infty$.
	Therefore, a.s. $\limsup_{n \to \infty} \sup_{\theta \in \B} \left\| \nabla_{\theta}^r f_n(\theta) \right\|_2 < \infty$.
	Inspection of the proof reveals that the argument still holds for $r = 1$ if \Cref{asmp:an_cnd} ($r_{max}=1$) holds instead.
\end{proof}

\vspace{5pt}
\begin{lemma}[A.S. Convergence of Derivatives] \label{lem:deriv_KSD_1st2nd_con} 
	Suppose \Cref{asmp:an_cnd} ($r_{max}=3$) and \ref{asmp:sc_cnd} hold.
	Then we have $\nabla_{\theta}^r f_n(\theta_*) \overset{a.s.}{\to} \nabla_{\theta}^r f(\theta_*)$ for $r = 1, 2, 3$.
	Let $H_n := \nabla_{\theta}^2 f_n(\theta_n)$ and $H_* := \nabla_{\theta}^2 f(\theta_*)$.
	We further have $H_n \overset{a.s.}{\to} H_*$, where $H_n$ and $H_*$ are symmetric and $H_*$ is semi positive definite.
\end{lemma}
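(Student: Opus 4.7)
The plan is to treat the three claims in turn, exploiting the auxiliary derivative bounds from \Cref{lem:sk_deriv_upb,lem:sk_deriv_exp_bnd,lem:sk_deriv_upb_conv} established earlier in the appendix. For the pointwise convergence $\nabla_{\theta}^{r} f_{n}(\theta_{*}) \overset{a.s.}{\to} \nabla_{\theta}^{r} f(\theta_{*})$ at $r=1,2,3$, I would mirror the decomposition used in \Cref{thm:pw}: first interchange the (finite) double sum with the derivative $\nabla_{\theta}^{r}$ (justified by \Cref{asmp:an_cnd} with $r_{\max}=3$), and then split the V-statistic into its diagonal and off-diagonal components, writing
\begin{align*}
\nabla_{\theta}^{r} f_{n}(\theta_{*})
= \frac{1}{n} \underbrace{\frac{1}{n}\sum_{i=1}^{n} \nabla_{\theta}^{r}\bigl(\S_{\P_{\theta}}\S_{\P_{\theta}} K(x_{i},x_{i})\bigr)\big|_{\theta=\theta_{*}}}_{(\star_{a})}
+ \frac{n-1}{n} \underbrace{\frac{1}{n(n-1)}\sum_{i\ne j} \nabla_{\theta}^{r}\bigl(\S_{\P_{\theta}}\S_{\P_{\theta}} K(x_{i},x_{j})\bigr)\big|_{\theta=\theta_{*}}}_{(\star_{b})}.
\end{align*}
The termwise bound $\|\nabla_{\theta}^{r}(\S_{\P_{\theta}}\S_{\P_{\theta}} K(x,x'))\|_{2}\le M^{r}(x,x')$ from \Cref{lem:sk_deriv_upb}, together with $\E_{X\sim\P}[M^{r}(X,X)]<\infty$ and $\E_{X,X'\sim\P}[M^{r}(X,X')]<\infty$ from \Cref{lem:sk_deriv_exp_bnd}, provides the integrability needed to apply the classical SLLN to $(\star_{a})$ and the SLLN for U-statistics \citep[Hoeffding;][]{Hoeffding1961} to $(\star_{b})$. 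The $1/n$ prefactor kills the diagonal contribution and $(n-1)/n\to1$ delivers $\E_{X,X'\sim\P}[\nabla_{\theta}^{r}(\S_{\P_{\theta}}\S_{\P_{\theta}} K(X,X'))|_{\theta=\theta_{*}}]=\nabla_{\theta}^{r} f(\theta_{*})$, with the final interchange of derivative and expectation justified by dominated convergence using the same $M^{r}$ bound.

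For $H_{n}\overset{a.s.}{\to}H_{*}$, I would use the triangle-inequality decomposition
\begin{align*}
\| H_{n} - H_{*} \|_{2} \le \sup_{\theta\in\B} \| \nabla_{\theta}^{2} f_{n}(\theta) - \nabla_{\theta}^{2} f(\theta) \|_{2} + \| \nabla_{\theta}^{2} f(\theta_{n}) - \nabla_{\theta}^{2} f(\theta_{*}) \|_{2} .
\end{align*}
The second summand tends a.s. to $0$ by continuity of $\theta\mapsto\nabla_{\theta}^{2} f(\theta)$ on $\B$ (which inherits continuity from \Cref{asmp:an_cnd}) combined with $\theta_{n}\overset{a.s.}{\to}\theta_{*}$ from \Cref{lem:sc_ksd}. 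The first summand is handled by reproducing the strong stochastic equicontinuity argument used in \Cref{thm:uc}, but applied to $\nabla_{\theta}^{2} f_{n}$ instead of $f_{n}$: the mean value theorem on the open convex $\B$ gives a Lipschitz bound $\|\nabla_{\theta}^{2} f_{n}(\theta)-\nabla_{\theta}^{2} f_{n}(\theta')\|_{2} \le \sup_{\theta\in\B}\|\nabla_{\theta}^{3} f_{n}(\theta)\|_{2}\,\|\theta-\theta'\|_{2}$, and by \Cref{lem:deriv_KSD_2nd3rd_up} (for $r=3$) this Lipschitz constant is a.s. bounded for all sufficiently large $n$; combined with pointwise convergence of $\nabla_{\theta}^{2} f_{n}$ already proved above, \citet[Theorem 21.8, 21.10]{Davidson1994} yields a.s. uniform convergence on $\B$.

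Finally, for the structural properties: symmetry of $H_{n}$ and $H_{*}$ follows from equality of mixed partials (Clairaut/Schwarz), which is guaranteed because $f_{n}$ and $f$ are three times continuously differentiable on $\B$ under \Cref{asmp:an_cnd} ($r_{\max}=3$); positive semi-definiteness of $H_{*}=\nabla_{\theta}^{2} f(\theta_{*})$ is the second-order necessary condition at the interior minimiser $\theta_{*}$ guaranteed by \Cref{asmp:sc_cnd} (with $\Theta$ open). I expect the main obstacle to be the uniform convergence step for $\nabla_{\theta}^{2} f_{n}$, since it requires upgrading both the pointwise convergence machinery and the equicontinuity argument of \Cref{thm:uc} simultaneously; this is precisely what forces the stronger assumption $r_{\max}=3$ (for a Lipschitz bound on $\nabla_{\theta}^{2} f_{n}$ via control of $\nabla_{\theta}^{3} f_{n}$) rather than $r_{\max}=1$.
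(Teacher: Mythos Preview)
Your proposal is correct. Parts (a) and (c) match the paper's proof essentially verbatim: the V-statistic is split into diagonal and U-statistic pieces, the $M^{r}$ bounds from \Cref{lem:sk_deriv_upb,lem:sk_deriv_exp_bnd} supply the integrability for SLLN and Hoeffding's U-statistic SLLN, dominated convergence interchanges $\nabla_{\theta}^{r}$ with the expectation, and Clairaut plus the second-order necessary condition at $\theta_{*}$ give symmetry and positive semi-definiteness.

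For part (b), however, the paper uses the \emph{other} triangle-inequality decomposition:
\begin{align*}
\|H_{n}-H_{*}\|_{2} \le \underbrace{\|\nabla_{\theta}^{2} f_{n}(\theta_{n}) - \nabla_{\theta}^{2} f_{n}(\theta_{*})\|_{2}}_{(**_{1})} + \underbrace{\|\nabla_{\theta}^{2} f_{n}(\theta_{*}) - \nabla_{\theta}^{2} f(\theta_{*})\|_{2}}_{(**_{2})}.
\end{align*}
Here $(**_{2})$ is exactly the pointwise convergence already established in part (a), and $(**_{1})$ is handled by a single application of the mean value theorem to $\nabla_{\theta}^{2} f_{n}$, yielding $(**_{1}) \le \sup_{\theta\in\B}\|\nabla_{\theta}^{3} f_{n}(\theta)\|_{2}\,\|\theta_{n}-\theta_{*}\|_{2} \overset{a.s.}{\to} 0$ by \Cref{lem:deriv_KSD_2nd3rd_up} and \Cref{lem:sc_ksd}. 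Your route instead splits at $\nabla_{\theta}^{2} f(\theta_{n})$, which forces you to establish a.s.\ \emph{uniform} convergence of $\nabla_{\theta}^{2} f_{n}$ over $\B$ (via the equicontinuity argument of \Cref{thm:uc} upgraded to second derivatives) together with continuity of $\nabla_{\theta}^{2} f$. Both arguments use the same ingredients ($\sup_{\theta}\|\nabla_{\theta}^{3} f_{n}\|_{2}<\infty$ a.s.\ and $\theta_{n}\overset{a.s.}{\to}\theta_{*}$), but the paper's decomposition is more economical since it never needs uniform convergence or pointwise convergence at any $\theta\neq\theta_{*}$. Your approach does buy a stronger intermediate conclusion (uniform convergence of $\nabla_{\theta}^{2} f_{n}$), though that is not used elsewhere in the paper.
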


\begin{proof}
    The proof is structured as follows: First we show (a) $\nabla_{\theta}^r f_n(\theta_*) \overset{a.s.}{\to} \nabla_{\theta}^r f(\theta_*)$ for $r = 1, 2, 3$.
	Then we show (b) $H_n \overset{a.s.}{\to} H_*$.
	Finally we show (c) $H_n$ is symmetric and $H_*$ is semi-positive definite.
	
	\vspace{5pt}
	\noindent \textbf{Part (a):}
	The argument here is analogous to that used to prove \Cref{thm:pw}, based on the decomposition
	\begin{align*}
	\nabla_{\theta}^r f_n(\theta) = \nabla_{\theta}^r \frac{1}{n^2} \sum_{i = 1}^{n} \sum_{j = 1}^{n} \S_{\P_{\theta}} \S_{\P_\theta} K(x_i, x_j) = \frac{1}{n^2} \sum_{i = 1}^{n} \sum_{j = 1}^{n} \nabla_{\theta}^r \big( \S_{\P_{\theta}} \S_{\P_\theta} K(x_i, x_j) \big) .
	\end{align*}
	Let $F(x, x') := \nabla_{\theta}^r \big( \S_{\P_{\theta}} \S_{\P_\theta} K(x, x') \big)$ to see that 
	\begin{align*}
	\nabla_{\theta}^r f_n(\theta) & = \frac{1}{n} \underbrace{ \frac{1}{n} \sum_{i = 1}^{n} F(x_i, x_i) }_{ (*_1) } + \frac{n - 1}{n} \underbrace{ \frac{1}{n (n - 1)} \sum_{i = 1}^{n} \sum_{j \ne i}^{n} F(x_i, x_j) }_{ (*_2) } .
	\end{align*}
	It follows from the strong law of large number \cite[Theorem~2.5.10]{Durrett2010b} that $(*_1) \overset{a.s.}{\to} \E_{X \sim \P}[ F(X, X) ]$ provided $E_{X \sim \P}[ \| F(X, X) \|_2 ] < \infty$. 
	Similarly, it follows from the strong law of large number for U-statistics \citep{Hoeffding1961} that $(*_2) \overset{a.s.}{\to} \E_{X,X' \sim \P}[ F(X, X') ]$ provided $E_{X, X' \sim \P}[ \| F(X, X') \|_2 ] < \infty$. 
	Both the required conditions holds by \Cref{lem:sk_deriv_exp_bnd} and the fact that $\| F(x, x') \|_2 \le \sup_{\theta \in \Theta} \| \nabla_{\theta}^r ( \S_{\P_\theta} \S_{\P_\theta} K(x, x') ) \|_2 \le M^r(x, x')$ from \Cref{lem:sk_deriv_upb}.
	Thus
	\begin{align*}
	\nabla_{\theta}^r f_n(\theta) & \overset{a.s.}{\longrightarrow} \E_{X,X' \sim \P}[ F(X, X') ] = \E_{X,X' \sim \P}[ \nabla_{\theta}^r \big( \S_{\P_{\theta}} \S_{\P_\theta} K(x_i, x_j) \big) ] .
	\end{align*}
	Since $\E_{X, X' \sim \P}[ \| F(X, X') \|_2 ] < \infty$, we may apply the dominated convergence theorem to interchange expectation and differentiation:
	\begin{align*}
	\E_{X,X' \sim \P}[ \nabla_{\theta}^r ( \S_{\P_{\theta}} \S_{\P_\theta} K(X, X') ) ] = \nabla_{\theta}^r \E_{X,X' \sim \P}[ \S_{\P_{\theta}} \S_{\P_\theta} K(X, X') ] = \nabla_{\theta}^r f(\theta) .
	\end{align*}
	Therefore, setting $\theta = \theta_*$, we conclude that $\nabla_{\theta}^r f_n(\theta_*) \overset{a.s.}{\to} \nabla_{\theta}^r f(\theta_*)$.
	
	\vspace{5pt}
	\noindent \textbf{Part (b):} 
	First of all, by the triangle inequality,
	\begin{align*}
	\left\| \nabla_{\theta}^2 f_n(\theta_n) - \nabla_{\theta}^2 f(\theta_*) \right\|_2 \le \underbrace{ \left\| \nabla_{\theta}^2 f_n(\theta_n) - \nabla_{\theta}^2 f_n(\theta_*) \right\|_2 }_{(**_1)} + \underbrace{ \left\| \nabla_{\theta}^2 f_n(\theta_*) - \nabla_{\theta}^2 f(\theta_*) \right\|_2 }_{(**_2)} .
	\end{align*}
	By the mean value theorem applied to $(**_1)$ and \Cref{lem:deriv_KSD_2nd3rd_up} (i.e. $\lim_{n \to \infty} \sup_{\theta \in \B} \| \nabla_{\theta}^3 f_n(\theta) \|_2 < \infty$ a.s.), there a.s. exists a constant $0 < C < \infty$ s.t., for all sufficiently large $n$,
	\begin{align*}
	(**_1) = \left\| \nabla_{\theta}^2 f_n(\theta_n) - \nabla_{\theta}^2 f_n(\theta_*) \right\|_2 & \le \sup_{\theta \in \B} \| \nabla_{\theta}^3 f_n(\theta) \|_2 \| \theta_n - \theta_* \|_2 \le C \| \theta_n - \theta_* \|_2 .
	\end{align*}
	Then applying \Cref{lem:sc_ksd} (i.e. $\| \theta_n - \theta_* \|_2 \overset{a.s.}{\to} 0$), we have $(**_1) \overset{a.s.}{\to} 0$.
	Further the preceding part (a) implied that $(**_2) \overset{a.s.}{\to} 0$.
	Therefore, we conclude that $\nabla_{\theta}^2 f_n(\theta_n) \overset{a.s.}{\to} \nabla_{\theta}^2 f(\theta_*)$.
	 
	\vspace{5pt}
	\noindent \textbf{Part (c):} 
	Since $f_n$ is twice continuously differentiable over $\B$ by assumption, commutation of two partial derivatives $\partial_{(h)} \partial_{(k)} f_n(\theta) = \partial_{(k)} \partial_{(h)} f_n(\theta)$ holds over $\B$ by the Clairaut's theorem.
	Therefore the $(h,k)$-th entry and $(k,h)$-th entry of $H_n = \nabla_{\theta}^2 f_n(\theta_n)$ are equal. 
	An analogous argument applies to $H_* = \nabla_{\theta}^2 f(\theta_*)$, so that both $H_n$ and $H_*$ are symmetric.
	Furthermore, the Hessian $H_*$ is semi positive definite since $\theta_*$ is the minimiser of $f$ from \Cref{asmp:sc_cnd}.
\end{proof}

\vspace{5pt}
\begin{lemma}[Moment Condition for Asymptotic Normality] \label{lem:score_finite_fm} 
	Suppose that \Cref{asmp:an_cnd} ($r_{max}=3$) holds. 
	Let $F(x, x') := \nabla_{\theta} ( \S_{\P_{\theta}} \S_{\P_{\theta}} K(x, x') )$ for any fixed $\theta \in \Theta$.
	Then we have $\E_{X,X' \sim \P}\left[ \left\| F(X, X') \right\|_2^2 \right] < \infty$ and $\E_{X \sim \P}\left[ \left\| F(X, X) \right\|_2 \right] < \infty$.
\end{lemma}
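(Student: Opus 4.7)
The plan is to reduce both moment bounds to the scalar quantities $m^0, m^1$ defined in \Cref{sec: proof_pre_result_4_pre} and then invoke \Cref{lem:sk_deriv_exp_bnd}. First I would invoke \Cref{lem:sk_deriv_upb} (which requires only \Cref{asmp:an_cnd} with $r_{\max}=1$, hence certainly holds under $r_{\max}=3$) to obtain the pointwise envelope
\begin{align*}
\| F(x, x') \|_2 \;\le\; \sup_{\theta \in \Theta} \big\| \nabla_\theta \big( \S_{\P_\theta} \S_{\P_\theta} K(x, x') \big) \big\|_2 \;\le\; M^1(x, x') \;=\; m^1(x) m^0(x') + m^0(x) m^1(x'),
\end{align*}
valid for all $x, x' \in \X$.

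For the second claim, specialising this envelope to $x = x'$ yields $\| F(X, X) \|_2 \le M^1(X, X)$, so $\E_{X \sim \P}[\|F(X,X)\|_2] \le \E_{X \sim \P}[M^1(X,X)]$, and the latter is finite by the part (c) argument in \Cref{lem:sk_deriv_exp_bnd}. For the first claim, squaring the envelope and applying the elementary inequality $(a + b)^2 \le 2(a^2 + b^2)$ gives
\begin{align*}
\| F(X, X') \|_2^2 \;\le\; 2 m^1(X)^2 m^0(X')^2 + 2 m^0(X)^2 m^1(X')^2.
\end{align*}
Taking expectation and exploiting the independence of $X$ and $X'$ under $\P \otimes \P$, I obtain
\begin{align*}
\E_{X, X' \sim \P}\big[ \|F(X,X')\|_2^2 \big] \;\le\; 2 \E_{X \sim \P}[m^1(X)^2] \E_{X \sim \P}[m^0(X)^2] + 2 \E_{X \sim \P}[m^0(X)^2] \E_{X \sim \P}[m^1(X)^2],
\end{align*}
which is finite since each factor is finite by the part (a) argument in \Cref{lem:sk_deriv_exp_bnd} (valid here because $r_{\max} = 3$ in \Cref{asmp:an_cnd}).

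There is no serious obstacle: all of the heavy lifting has already been packaged into \Cref{lem:sk_deriv_upb} and \Cref{lem:sk_deriv_exp_bnd}. The only subtlety is to split the squared mixed product using $(a+b)^2 \le 2(a^2 + b^2)$ so that the expectation factorises by independence, rather than attempting to bound $\E[M^1(X, X')^2]$ directly through a Cauchy--Schwarz step that would require $\E[m^r(X)^4]$ type moments not supplied by our assumptions.
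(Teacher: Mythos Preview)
Your proposal is correct and follows essentially the same route as the paper: bound $\|F(x,x')\|_2$ by $M^1(x,x')$ via \Cref{lem:sk_deriv_upb}, then reduce both moment conditions to the finiteness of $\E[m^r(X)^2]$ established in \Cref{lem:sk_deriv_exp_bnd}. The only cosmetic difference is that the paper expands $M^1(X,X')^2$ directly (arriving at the same bound $4\E[m^1(X)^2]\E[m^0(X)^2]$, implicitly using Cauchy--Schwarz on the cross term) whereas you use $(a+b)^2 \le 2(a^2+b^2)$; your closing remark about needing fourth moments via Cauchy--Schwarz is therefore slightly off, but it does not affect the argument.
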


\begin{proof}
	First of all, it follows from \Cref{lem:sk_deriv_upb} that for any $x, x' \in \X$,
	\begin{align*}
	\| F(x, x') \|_2 \le \sup_{\theta \in \B} \| \nabla_{\theta} \big( \S_{\P_{\theta}} \S_{\P_{\theta}} K(x, x') \big) \|_2 \le M^1(x, x') .
	\end{align*}
	Thus for the first moment we have $\E_{X \sim \P}\left[ \| F(X, X) \|_2 \right] \le \E_{X \sim \P}[ M^1(X, X) ] < \infty$ from \Cref{lem:sk_deriv_exp_bnd}.
	For the second moment, $\E_{X,X' \sim \P}\left[ \left\| F(X, X') \right\|_2^2 \right] \le \E_{X,X' \sim \P}\left[ M^1(X, X')^2 \right] =: (*)$.
	By definition,
	\begin{align*}
	(*) & = \E_{X,X' \sim \P}\left[ \big( m^1(X) m^0(X') + m^0(X) m^1(X') \big)^2 \right] = 4 \E_{X \sim \P}\left[ m^1(X)^2 \right] \E_{X \sim \P}\left[ m^0(X)^2 \right] .
	\end{align*}
	Each of these latter expectations is finite by \Cref{lem:sk_deriv_exp_bnd}, which completes the proof.
\end{proof}

\vspace{5pt}
\begin{theorem}[Concentration Inequality for KSD] \label{thm:1st_con_ksd2} 
	Let $\sigma(\theta) :=  {\E}_{X \sim \P}[ \S_{\P_{\theta}} \S_{\P_{\theta}} K(X, X) ]$.
	Then
	\begin{align*}
	\mathbb{P}\left( \left| f_n(\theta) - f(\theta) \right| \ge \delta \right) \le \frac{4 \sigma(\theta)}{\delta \sqrt{n}}, \quad \forall \theta \in \Theta ,
	\end{align*}
	where the probability is with respect to realisations of the dataset $\{ x_i \}_{i=1}^{n} \stackrel{i.i.d.}{\sim} \P$.
\end{theorem}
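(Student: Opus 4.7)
The plan is to exploit the Hilbert space representation of KSD developed in \Cref{lem:ub_kd2} in order to convert the bound on $|f_n(\theta) - f(\theta)|$ into a bound on the deviation of an i.i.d.\ sample mean of $\H$-valued random variables from its expectation. Concretely, by \Cref{lem:ub_kd2} we may write $\S_{\P_\theta} \S_{\P_\theta} K(x, x') = \langle \S_{\P_\theta} K(x, \cdot), \S_{\P_\theta} K(x', \cdot) \rangle_\H$, and Bochner integrability of $x \mapsto \S_{\P_\theta} K(x, \cdot)$ (as established in the proof of \Cref{prop:derivation_KSD}, using $\E_{X \sim \P}[\| \S_{\P_\theta} K(X, \cdot) \|_\H] \le \sqrt{\sigma(\theta)} < \infty$) allows expectation and inner product to be interchanged. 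Defining $\mu := \E_{X \sim \P}[ \S_{\P_\theta} K(X, \cdot) ] \in \H$ and $\mu_n := \frac{1}{n} \sum_{i=1}^n \S_{\P_\theta} K(x_i, \cdot) \in \H$, it follows that $f(\theta) = \| \mu \|_\H^2$ and $f_n(\theta) = \| \mu_n \|_\H^2$.

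Next, I would use the elementary identity $\|\mu_n\|^2 - \|\mu\|^2 = \langle \mu_n - \mu, \mu_n + \mu \rangle_\H$ together with Cauchy--Schwarz and the triangle inequality to obtain the pointwise bound
\begin{align*}
|f_n(\theta) - f(\theta)| \le 2 \| \mu \|_\H \, \| \mu_n - \mu \|_\H + \| \mu_n - \mu \|_\H^2.
\end{align*}
Taking expectations and using Jensen's inequality, this gives $\E[ |f_n(\theta) - f(\theta)| ] \le 2 \|\mu\|_\H \sqrt{ \E \| \mu_n - \mu \|_\H^2 } + \E \| \mu_n - \mu \|_\H^2$. The standard identity for the variance of an i.i.d.\ sample mean in a Hilbert space yields
\begin{align*}
\E \| \mu_n - \mu \|_\H^2 = \frac{1}{n} \E_{X \sim \P}\| \S_{\P_\theta} K(X,\cdot) - \mu \|_\H^2 \le \frac{1}{n} \E_{X \sim \P}\| \S_{\P_\theta} K(X,\cdot) \|_\H^2 = \frac{\sigma(\theta)}{n},
\end{align*}
where the final equality uses $\| \S_{\P_\theta} K(X,\cdot) \|_\H^2 = \S_{\P_\theta} \S_{\P_\theta} K(X,X)$ from \Cref{lem:ub_kd2}. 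Moreover, $\| \mu \|_\H^2 = f(\theta) \le \sigma(\theta)$ by the Cauchy--Schwarz-type bound in \Cref{lem:ub_kd2}\,(iii) combined with Jensen's inequality.

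Combining these ingredients, and using $1/n \le 1/\sqrt{n}$ for $n \ge 1$, yields $\E[ |f_n(\theta) - f(\theta)| ] \le 2 \sigma(\theta)/\sqrt{n} + \sigma(\theta)/n \le 3 \sigma(\theta)/\sqrt{n} \le 4 \sigma(\theta)/\sqrt{n}$. A final application of Markov's inequality,
\begin{align*}
\mathbb{P}\bigl( | f_n(\theta) - f(\theta) | \ge \delta \bigr) \le \frac{ \E[ |f_n(\theta) - f(\theta)| ] }{\delta} \le \frac{4 \sigma(\theta)}{\delta \sqrt{n}},
\end{align*}
delivers the claimed concentration inequality. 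There is no significant obstacle in this proof; the only mildly delicate point is ensuring that $\mu \in \H$ is well defined as a Bochner integral, which reduces to the finite-moment condition already built into \Cref{asmp:derivation_KSD} and invoked in the proof of \Cref{prop:derivation_KSD}.
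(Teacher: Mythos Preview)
Your proof is correct and takes a genuinely different, more elementary route than the paper's. The paper factors $|a^2 - b^2| = (a+b)|a-b|$ with $a = \operatorname{KSD}(\P_\theta\|\P_n)$ and $b = \operatorname{KSD}(\P_\theta\|\P)$, applies Markov together with Cauchy--Schwarz to split $\E[(a+b)|a-b|]$ into $\sqrt{\E[(a+b)^2]}\sqrt{\E[(a-b)^2]}$, and then controls $\E[(a-b)^2]$ via a symmetrisation (Rademacher complexity) argument. You instead work directly with the Hilbert-space embeddings $\mu,\mu_n \in \H$, use the identity $\|\mu_n\|^2 - \|\mu\|^2 = \langle \mu_n - \mu, \mu_n + \mu\rangle_\H$, and bound $\E\|\mu_n - \mu\|_\H^2$ by the elementary variance-of-a-sample-mean identity in Hilbert space. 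This avoids the Rademacher machinery entirely and in fact yields the slightly sharper constant $3$ before you relax to $4$. The paper's route has the minor conceptual advantage of separating the ``scale'' term $(a+b)$ from the ``deviation'' term $|a-b|$, which connects to the broader empirical-process viewpoint, but for this particular lemma your argument is shorter and uses fewer tools.
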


\begin{proof}
	Since $| a^2 - b^2 | = | (a + b) (a - b) | = ( a + b ) | a - b |$ for all $a, b \in [0,\infty)$, we have the bound
	\begin{align*}
	\underbrace{ \left| \operatorname{KSD}^2(\P_\theta \| \P_n) - \operatorname{KSD}^2(\P_\theta \| \P) \right| }_{=:(*)} = \underbrace{ \left( \operatorname{KSD}(\P_\theta \| \P_n) + \operatorname{KSD}(\P_\theta \| \P) \right) }_{=:(*_1)} \underbrace{ \left| \operatorname{KSD}(\P_\theta \| \P_n) - \operatorname{KSD}(\P_\theta \| \P) \right| }_{=:(*_2)}.
	\end{align*}
	In what follows we use $\E$ to denote an expectation with respect to the dataset $\{x_i\}_{i=1}^n \stackrel{i.i.d.}{\sim} \P$.
	Applying Markov's inequality followed by Cauchy--Schwarz, we have
	\begin{align}
	\mathbb{P}( (*) \ge \delta ) \le \frac{1}{\delta}  \E[(*)] = \frac{1}{\delta}  \E[ (*_1) (*_2) ] \le \frac{1}{\delta} \sqrt{  \E[ (*_1)^2 ] } \sqrt{  \E[  (*_2)^2 ] } . \label{eq:prf1_eq0}
	\end{align}
	To conclude the proof, we bound the two expectations one the right hand side.
	
	\vspace{5pt}
	\noindent \textbf{Bounding $\E[ (*_1)^2 ]$:} 
	From the fact that $(a + b)^2 \leq 2 (a^2 + b^2)$ for $a, b \in \R$,
	\begin{align*}
	\E[ (*_1)^2 ]  \le 2 \E\left[ \operatorname{KSD}^2(\P_\theta \| \P_n) + \operatorname{KSD}^2(\P_\theta \| \P) \right] = 2 \Big( \E\left[ \operatorname{KSD}^2(\P_\theta \| \P_n) \right] + \operatorname{KSD}^2(\P_\theta \| \P) \Big) .
	\end{align*}
	The preconditions of \Cref{lem:ub_kd2} holds due to Standing Assumption 2.
	Thus from \Cref{lem:ub_kd2} part (iii), together with Jensen's inequality, we have the two bounds $\operatorname{KSD}^2(\P_\theta \| \P_n) \le (1 / n) \sum_{i = 1}^{n} \S_{\P_{\theta}} \S_{\P_{\theta}} K(x_i, x_i)$ and $\operatorname{KSD}^2(\P_\theta \| \P) \le  {\E}_{X \sim \P}[ \S_{\P_{\theta}} \S_{\P_{\theta}} K(X, X) ]$.
	Plugging these into the previous inequality, and exploiting independence of $x_i$ and $x_j$ whenever $i \neq j$, we have
	\begin{align*}
	\E[ (*_1)^2 ] & \le 2 \left( \E\left[ \frac{1}{n} \sum_{i = 1}^{n} \S_{\P_{\theta}} \S_{\P_{\theta}} K(x_i, x_i) \right] +  {\E}_{X \sim \P}[ \S_{\P_{\theta}} \S_{\P_{\theta}} K(X, X) ] \right) \\
	& = 2 \Big( \E_{X \sim \P}[ \S_{\P_{\theta}} \S_{\P_{\theta}} K(X, X) ] + \E_{X \sim \P}[ \S_{\P_{\theta}} \S_{\P_{\theta}} K(X, X) ] \Big) = 4 \sigma(\theta) ,
	\end{align*}
	where existence of $\sigma(\theta)$ for all $\theta \in \Theta$ is ensured by Standing Assumption 2.

	\vspace{5pt}
	\noindent \textbf{Bounding $\E[ (*_2)^2 ]$:} From the fact $|\sup_x | f(x) | - \sup_y | g(y) | | \leq \sup_x |f(x) - g(x)|$ for functions $f$ and $g$, the term $(*_2)$ is upper bounded by
	\begin{align*}
	(*_2) & = \left| \sup_{\|h\|_\H \leq 1} \left| \frac{1}{n} \sum_{i=1}^{n} \S_{\P_\theta}[h](x_i) \right| - \sup_{\|h\|_\H \leq 1} \Bigg| \E_{X \sim \P}[ \S_{\P_\theta}[h](X) ] \Bigg| \right| \\
	& \le \sup_{\|h\|_\H \leq 1} \left | \frac{1}{n} \sum_{i=1}^{n} \S_{\P_\theta}[h](x_i) - \E_{X \sim \P}[ \S_{\P_\theta}[u](X) ] \right| = \sup_{f \in \F} \left| \frac{1}{n} \sum_{i=1}^{n} f(x_i) - \E_{X \sim \P}[ f(X) ] \right| . 
	\end{align*}
	where $\F := \{ \S_{\P_\theta}[h] \mid  \| h \|_{\H} \le 1 \}$. We can see from this expression that standard arguments in the context of Rademacher complexity theory can be applied.
	Noting that $| \cdot |^2$ is a convex function, Proposition 4.11 in \citet[][]{Wainwright2019} gives that
	\begin{align*}
	\E\left[ (*_2)^2 \right] & \le  \E\left[ \left( \sup_{f \in \F} \left| \frac{1}{n} \sum_{i=1}^{n} f(x_i) - \E_{X \sim \P}[ f(X) ] \right| \right)^2 \right] \le \E \E_\epsilon\left[ 2^2 \left( \sup_{f \in \F} \left| \frac{1}{n} \sum_{i=1}^{n} \epsilon_i f(x_i) \right| \right)^2 \right] 
	\end{align*}
	where $\{ \epsilon_i \}_{i=1}^{n}$ are independent random variables taking values in $\{-1, +1\}$ with equiprobability $1 / 2$ and $\E_{\epsilon}$ is the expectation over $\{ \epsilon_i \}_{i=1}^{n}$.
	From the essentially same derivation as \Cref{prop:derivation_KSD}, the following equality holds:
	\begin{multline*}
	\sup_{f \in \F} \left| \frac{1}{n} \sum_{i=1}^{n} \epsilon_i f(x_i) \right| = \sup_{\|h\|_\H \leq 1} \left| \frac{1}{n} \sum_{i=1}^{n} \epsilon_i \S_{\P_\theta}[h](x_i) \right| = \sup_{\|h\|_\H \leq 1} \left| \left\langle h, \frac{1}{n} \sum_{i=1}^{n} \epsilon_i \S_{\P_\theta} K(x_i, \cdot) \right\rangle_{\H} \right| \\
	= \left\| \frac{1}{n} \sum_{i=1}^{n} \epsilon_i \S_{\P_\theta} K(x_i, \cdot) \right\|_{\H} 
	= \sqrt{ \frac{1}{n^2} \sum_{i=1}^n \sum_{j=1}^n \epsilon_i \epsilon_j \S_{\P_{\theta}} \S_{\P_{\theta}} K(x_i, x_j) } .
	\end{multline*}
	Plugging this equality into the upper bound of $ \E\left[ (*_2)^2 \right]$, we have
	\begin{align*}
	\E\left[ (*_2)^2 \right] & \le 4 \E \E_\epsilon\left[ \frac{1}{n^2} \sum_{i=1}^n \sum_{j=1}^n \epsilon_i \epsilon_j \S_{\P_{\theta}} \S_{\P_{\theta}} K(x_i, x_j) \right] \\ 
	& = 4 \E \left[ \frac{1}{n^2} \sum_{i=1}^{n} \S_{\P_{\theta}} \S_{\P_{\theta}}[ K(X_i, X_i) ] \right] = \frac{4}{n}  {\E}_{X \sim \P}[ \S_{\P_{\theta}} \S_{\P_{\theta}} K(X, X) ] = \frac{4 \sigma(\theta)}{n}.
	\end{align*}
	
	\vspace{5pt}
	\noindent \textbf{Bounding $\E[ (*)^2 ]$:} 
	Returning to \eqref{eq:prf1_eq0}, we have the overall bound
	\begin{align*}
	\mathrm{P}( (*) \ge \delta ) \le \frac{\sqrt{4 \sigma(\theta)} \sqrt{4 \sigma(\theta)}}{\delta \sqrt{n}} \le \frac{4 \sigma(\theta)}{\delta \sqrt{n}}
	\end{align*}
	as claimed.
\end{proof}


\color{black}

\section{Additional Empirical Results} 
\label{sec: emp appendix}

This appendix contains additional empirical results referred to in the main text.
\Cref{subsec: sensitivity kernel choice} investigates the sensitivity of the generalised posterior to the choice of parameters employed in the kernel $K$.
\Cref{subsec: sampling disn beta} investigates the sampling distribution of $\beta$, controlling the scale of the generalised posterior, when estimated using the approach proposed in \Cref{subsec: beta setting}.
An extended discussion of the choice of weighting function, $M$, and the associated trade-off between statistical efficiency and robustness, is contained in \Cref{subsec: eff robust}.
A comparison of KSD-Bayes with other generalised Bayesian procedures developed for \textit{tractable} likelihood is presented in \Cref{subsec: gen bayes comparison}.
Finally, the use of KSD-Bayes in the context of discrete state spaces is demonstrated in \Cref{subsec: other-space}.

\subsection{Sensitivity to Kernel Parameters}
\label{subsec: sensitivity kernel choice}

\begin{figure}[t!]
\centering
\includegraphics[width = 0.8\textwidth,clip,trim = 3cm 9.8cm 3cm 9cm]{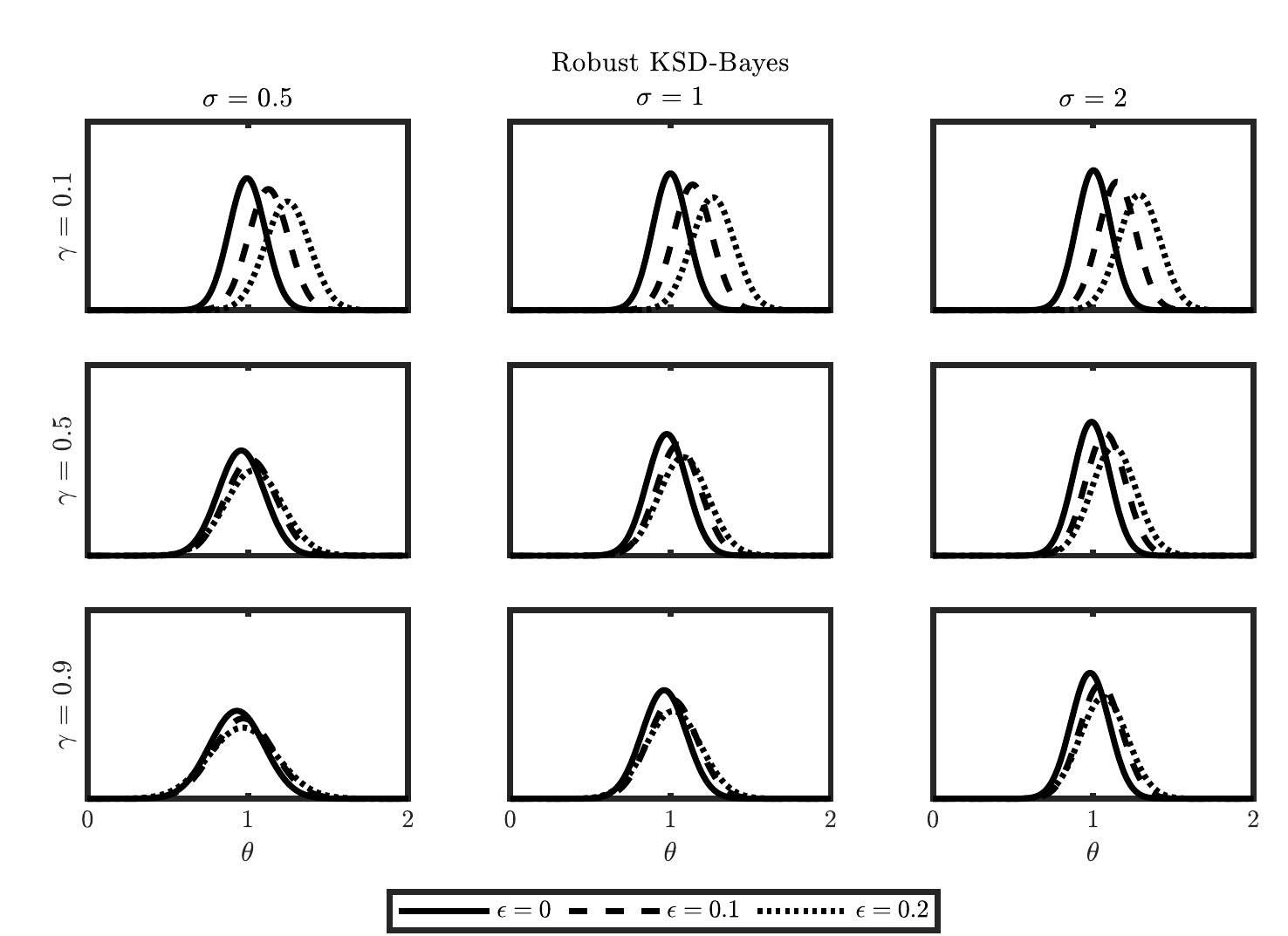}
\caption{\textcolor{black}{
Sensitivity to kernel parameters:
Kernels of the form \eqref{eq: param ker ap}, with length-scale parameter $\sigma$ and exponent $\gamma$, are considered in the context of the normal location model in \Cref{sec:experiment-nl}.
The settings $\sigma \approx 1$, $\gamma = 0.5$ (central panel) were used in the main text.
The true parameter value is $\theta = 1$, while a proportion $\epsilon$ of the data were contaminated by noise of the form $\mathcal{N}(y,1)$.
Here $y=10$ is fixed and $\epsilon \in \{0,0.1,0.2\}$ are considered.
}}
\label{fig: kernel sens Gauss}
\end{figure}

The kernel $K$ that we recommend as a default in \Cref{subsec: default SO and K} has no degrees of freedom to be specified (with the exception of the weighting function $M$, whose choice is further explored in \Cref{subsec: eff robust}).
Nevertheless, it is interesting to ask whether the generalised posterior is sensitive to our recommended choice of kernel.
To this end, we considered the family of kernels of the form
\begin{align}
K(x,x') = \left( 1 + \sigma^{-2} \|x-x'\|_2^2 \right)^{-\gamma} \times I_d \label{eq: param ker ap}
\end{align}
where $\sigma > 0$ and $\gamma \in (0,1)$.
Our recommended kernel sets $\sigma$ equal to a regularised version of the sample standard deviation of the dataset and $\gamma = 1/2$.
To investigate how the generalised KSD-Bayes posterior depends on the choice of $\sigma$ and $\gamma$, we re-ran the normal location model experiment from \Cref{sec:experiment-nl} using values $\sigma \in \{0.5,1,2\}$ and $\gamma \in \{0.1,0.5,0.9\}$.
To limit scope, we consider the performance of the robust version of KSD-Bayes from \Cref{sec:experiment-nl}, with weight function $M(x) = (1+x^2)^{-1/2}$, in the case where the contaminant is fixed to $y=10$ and the proportion of contamination is varied in $\epsilon \in \{0,0.1,0.2\}$.
Results in \Cref{fig: kernel sens Gauss} indicate that the generalised posterior is insensitive to $\sigma$, with almost identical output for each value of $\sigma$ considered.
The results for $\gamma \in \{0.5,0.9\}$ were almost identical, but the generalised posterior appeared to be less robust to contamination when $\gamma = 0.1$.
These results support the default choices recommended in the main text ($\sigma \approx 1$, $\gamma = 0.5$) and provide reassurance that the generalised posterior is not overly sensitive to how these values are specified.

\subsection{Sampling Distribution of $\beta$}
\label{subsec: sampling disn beta}

An important component of the KSD-Bayes method is the use of a data-adaptive $\beta$, as specified in \Cref{subsec: beta setting}.
In this appendix the sampling distribution of this data-adaptive $\beta$ is investigated.
Of particular interest are (1) the extent to which $\beta$ varies at small sample sizes, and (2) how the behaviour of $\beta$ changes when the data-generating model is mis-specified.
To investigate, we considered multiple independent realisations of the dataset in the context of the normal location model from \Cref{sec:experiment-nl}, collecting the corresponding estimates of $\beta$ together into box plots, so that the sampling distribution of $\beta$ can be visualised.
To limit scope, we consider the performance of the standard version of KSD-Bayes from \Cref{sec:experiment-nl} (i.e. with weight function $M(x) = 1$), in the case where the contaminant is fixed to $y=10$ and the proportion of contamination is varied in $\epsilon \in \{0,0.1,0.2\}$.
The dataset sizes $n \in \{10,50,100\}$ were considered.
Results in \Cref{fig: beta sampling} show that, in the case $\epsilon = 0$ where the model is well-specified, the value $\beta = 1$ is typically selected.
This value ensures that the scale of the KSD-Bayes posterior matches that of the standard posterior in this example, so that the approach used to select $\beta$ can be considered successful.
In the mis-specified regimes $\epsilon \in \{0.1,0.2\}$, with small $n$ the estimation of an appropriate weight $\beta$ is expected to be difficult and indeed the default choice of $\beta = 1$ in \eqref{eq:beta_choice_approx} is automatically adopted.
At larger values of $n$ it is possible to reliably estimate a weight $\beta < 1$ and this weight is seen to be smaller on average when data are more contaminated.
These results support our recommended approach to selecting $\beta$ in \eqref{eq:beta_choice_approx}.

\begin{figure}[t!]
\centering
\includegraphics[width = 0.8\textwidth,clip,trim = 1.5cm 11cm 2cm 12.9cm]{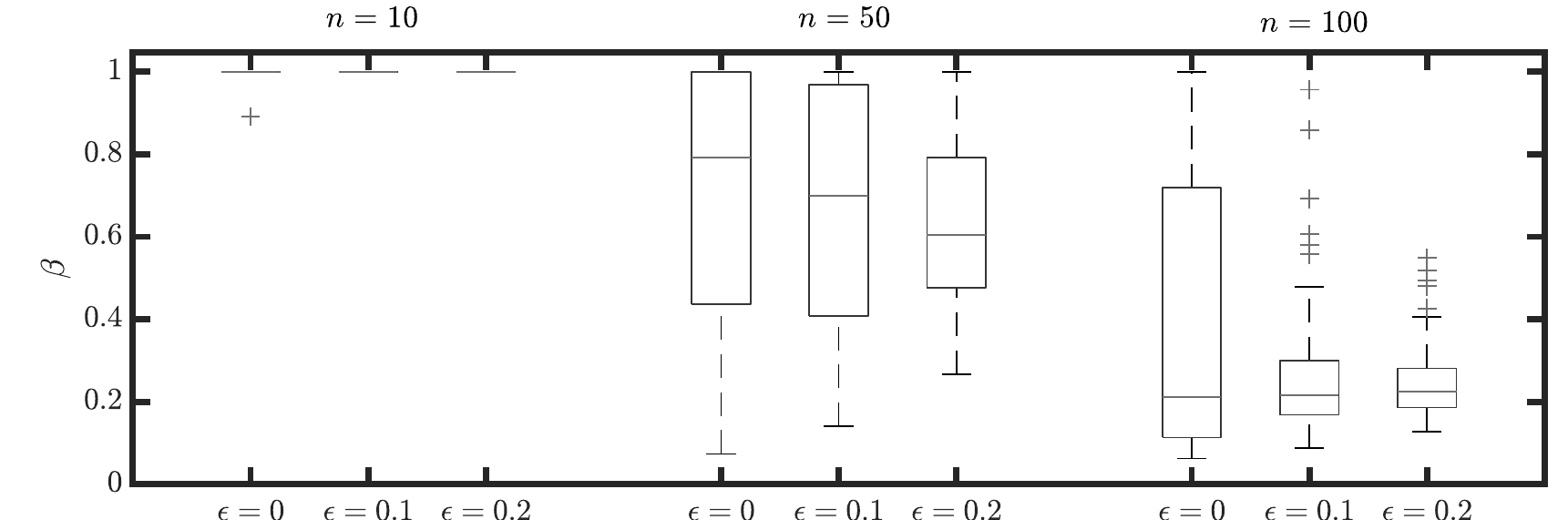}
\caption{\textcolor{black}{
Sampling distribution of $\beta$:
Box plots are used to summarise the sampling distribution of $\beta$ in the context of the normal location model in \Cref{sec:experiment-nl}.
The sample size $n$ and the contamination proportion $\epsilon$ were each varied.
}}
\label{fig: beta sampling}
\end{figure}

\subsection{Efficiency/Robustness Trade-Off}
\label{subsec: eff robust}

There is a well-known trade-off between statistical efficiency and robustness to model mis-specification, as exemplified by the data-agnostic statistician who is robust by not learning from data.
Minimum distance estimation, which can be considered the frequentist analogue of generalised Bayesian inference, can strike an attractive balance between these competing goals \citep[see e.g.][]{lindsay1994efficiency,basu2019statistical}.
In \Cref{sec:robustness} it was demonstrated that global bias-robustness can be achieved using KSD-Bayes through the inclusion of an appropriate weighting function $M$ in the kernel, and in \Cref{sec:experiment} it was demonstrated that KSD-Bayes can learn from data whilst being bias-robust.
However, it remains to investigate the extent to which statistical efficiency is lost in KSD-Bayes, compared to standard Bayesian inference, in the case where the data-generating model is correctly specified.
In this appendix we return to the normal location model of \Cref{sec:experiment-nl} and explore the effect of the choice of weighting function $M$ on the efficiency of the inferences that are produced.

Recall from \Cref{thm:bias-robust} that KSD-Bayes is globally bias-robust if there is a function $\gamma : \Theta \rightarrow \mathbb{R}$ such that
\begin{align}
	\sup_{y \in \R^d} \Big( \nabla_{y} \log p_{\theta}(y) \cdot K(y, y) \nabla_{y} \log p_{\theta}(y) \Big) \le \gamma(\theta) \label{eq: sup term}
\end{align}
where $\sup_{\theta \in \Theta} | \pi(\theta) \gamma(\theta) | < \infty$ and $\int_\Theta \pi(\theta) \gamma(\theta) \mathrm{d} \theta < \infty$.
For our recommended kernel $K$ in \eqref{eq: suggested kernel}, the expression on the left hand side of \eqref{eq: sup term} reduces to
$$
\sup_{y \in \mathbb{R}^d} \| M(y)^\top \nabla_y \log p_\theta(y) \|_2^2 .
$$
For the normal location model in \Cref{sec:experiment-nl} we have $\nabla_y \log p_\theta(y) = \theta - y$ and thus, with our recommended kernel from \Cref{eq: suggested kernel}, we have
\begin{align}
	\| M(y)^\top \nabla_y \log p_\theta(y) \|_2^2
	= (y - \theta)^2 M(y)^2 . \label{eq: robustness condition nl}
\end{align}
In order that \eqref{eq: robustness condition nl} is bounded over $y \in \mathbb{R}$ we require $M(y)$ to decay at the rate $\mathcal{O}(|y|^{-1})$ as $|y| \rightarrow \infty$.
This decay is achieved, for example, by functions of the form 
\begin{align}
M(y) = \left( \frac{a^2}{a^2+(y-b)^2} \right)^{c/2} \label{eq: more general m}
\end{align} 
for any $a \neq 0$, $b \in \mathbb{R}$ and any $c \geq 1$, although of course there are infinitely many other such functions that could be considered.
The particular value $c = 1$, which we considered in \Cref{sec:experiment-nl} of the main text and consider here in the sequel, represents the smallest value of $c$ for which \eqref{eq: robustness condition nl} is bounded over $y \in \R$.
For this choice we have that \eqref{eq: robustness condition nl} is maximised by $y = \theta \pm \sqrt{a^2 + (\theta - b)^2}$ and
\begin{align*}
	\sup_{y \in \R^d} (y - \theta)^2 M(y)^2 = \frac{[a^2 + (\theta - b)^2] a^2}{a^2 + [\theta - b \pm \sqrt{a^2 + (\theta - b)^2}]^2} \leq a^2 + (\theta - b)^2 =: \gamma(\theta) .
\end{align*}
For this bound $\gamma(\theta)$, all conditions of \Cref{thm:bias-robust} are satisfied.
The aim in what follows is to investigate how the performance of KSD-Bayes depends on the specific choices of $a$ and $b$ and in \eqref{eq: more general m}.

To limit scope, we consider performance in the case where the contaminant is fixed to $y=10$ and the proportion of contamination is varied in $\epsilon \in \{0,0.1,0.2\}$.
The dataset sizes was fixed at $n =100$ as per the main text.
Recall from \Cref{sec:experiment-nl} of the main text that the choices $a = 1$, $b=0$ lead to statistical efficiency comparable to that of standard Bayesian inference.
Results in \Cref{fig: eff robust trade} show that $a = 0.1$ led to almost total robustness to contamination at the expense of inefficient estimation, with the spread of the generalised posterior approximately twice as large as the case where $a = 1$.
The setting $a = 10$ causes the generalised posterior to approximate the non-robust KSD-Bayes approach with $M \equiv 1$, as would be expected from inspection of \eqref{eq: more general m}.
The generalised posterior was somewhat insensitive to $b$, though we note that the choice $b = -5$ conferred additional robustness at the expense of efficiency, while the choice $b = 5$ sacrificed both robustness and efficiency, in both cases relative to $b=0$.
These results broadly support the choices of $a = 1$ and $b = 0$ for this inference problem, as we considered in the main text.

\begin{figure}[t!]
\centering
\includegraphics[width = 0.8\textwidth,clip,trim = 1.5cm 9.8cm 1.5cm 9cm]{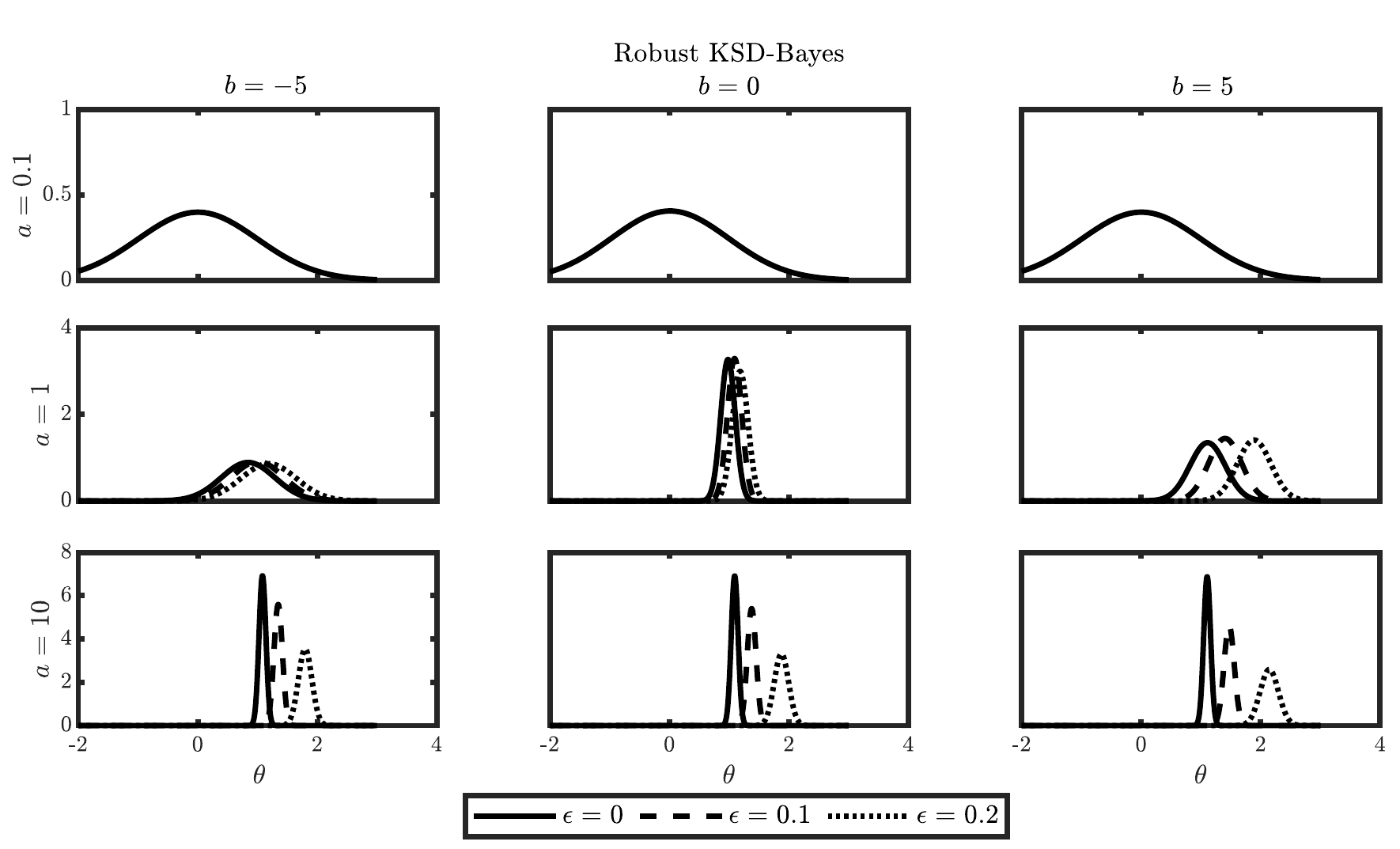}
\caption{\textcolor{black}{
Efficiency/robustness trade-off:
Weight functions of the form \eqref{eq: more general m}, with length-scale parameter $a$ and location parameter $b$, are considered in the context of the normal location model in \Cref{sec:experiment-nl}.
The settings $a = 1$, $b = 0$ (central panel) were used in the main text.
The true parameter value is $\theta = 1$, while a proportion $\epsilon$ of the data were contaminated by noise of the form $\mathcal{N}(y,1)$.
Here $y=10$ is fixed and $\epsilon \in \{0,0.1,0.2\}$ are considered.
}}
\label{fig: eff robust trade}
\end{figure}

\subsection{Comparison with Robust Generalised Bayesian Procedures}
\label{subsec: gen bayes comparison}

This paper presented a generalised Bayesian approach to inference for models that involve an intractable likelihood.
However, several generalised Bayesian approaches exist for \textit{tractable} likelihood and it is interesting to ask how the performance of KSD-Bayes compares to these existing approaches in the case of a tractable likelihood.
To this end, we return to the normal location model of \Cref{sec:experiment-nl}, which has a tractable likelihood, and consider two distinct generalised Bayesian procedures that have been developed in this context; the \textit{power posterior} approach of \citet{holmes2017assigning} and the \textit{MMD-Bayes} approach of  \citet{Cherief-Abdellatif2019}.
These approaches are representative of two of the main classes of robust statistical methodology; data-adaptive scaling parameters $\beta$ and minimum discrepancy methods.
Both approaches are briefly recalled:

\paragraph{Power Posteriors}

Motivated by the \textit{coherence} argument of \citet{Bissiri2016}, the authors \citet{holmes2017assigning} consider a generalised posterior of the form, for some $\beta > 0$,
\begin{align*}
    \pi_n(\theta) \propto \pi(\theta) \exp\left\{ \beta \sum_{i=1}^n \log p_\theta(x_i) \right\} ,
\end{align*}
which we call a \textit{power posterior} \citep[e.g. following][]{friel2008marginal}.
To select an appropriate value for $\beta$, with the intention to ``allow for Bayesian learning under model misspecification'', the authors first introduce the function
\begin{align*}
    \Delta(x) = \int_{\Theta} \pi(\theta) \; \| \partial^1 \log p_\theta(x) \|_2^2 \; \mathrm{d}\theta ,
\end{align*}
where we recall that, in our notation, $\partial^1 = (\partial_{\theta_1},\dots,\partial_{\theta_p})$.
Then the authors set
\begin{align}
    \beta = \left\{ \frac{\int_{\X} p_{\hat{\theta}_n}(x) \Delta(x) \mathrm{d}x}{ \frac{1}{n} \sum_{i=1}^n \Delta(x_i) } \right\}^{\frac{1}{2}} , \label{eq: Holmes beta}
\end{align}
where $\hat{\theta}_n$ is a maximiser of the likelihood.
The motivation for \eqref{eq: Holmes beta} is quite involved, so we refer the reader to \citet{holmes2017assigning} for further background.
The authors prove that $\beta \rightarrow 1$ in probability when the model is well-specified \citep[][Lemma 2.1]{holmes2017assigning}, and present empirical evidence of robustness when the model is mis-specified.

For the normal location model of \Cref{sec:experiment-nl} we can compute $\partial^1 \log p_\theta(x) = x - \theta$, $\Delta(x) = 1 + x^2$, $\hat{\theta}_n = \frac{1}{n} \sum_{i=1}^n x_i$, and $\int_{\X} p_{\hat{\theta}_n}(x) \Delta(x) \mathrm{d}x = 2 + (\hat{\theta}_n)^2$, leading to the recommended weight
\begin{align*}
    \beta = \left\{ \frac{ 2 + \left( \frac{1}{n} \sum_{i=1}^n x_i \right)^2 }{ 1 + \frac{1}{n} \sum_{i=1}^n x_i^2 } \right\}^{\frac{1}{2}} 
\end{align*}
and an associated generalised posterior that is again Gaussian with mean $(\frac{\beta n}{1 + \beta n}) ( \frac{1}{n} \sum_{i=1}^n x_i )$ and variance $\frac{1}{1 + \beta n}$.

\paragraph{MMD-Bayes}

An analogue of KSD-Bayes for tractable likelihood is provided by the MMD-Bayes approach of \citet{Cherief-Abdellatif2019}, where a \textit{maximum mean discrepancy} (MMD) is employed in place of KSD.
In identical notation to that used in \Cref{subsec: sd bayes}, the MMD-Bayes generalised posterior is defined, for some $\beta > 0$, as
\begin{align}
	\pi_n^D(\theta) \propto \pi(\theta) \exp\left\{- \beta n \operatorname{MMD}^2(\P_{\theta} , \P_n) \right\}
    \label{eq: mmd bayes def}
\end{align}
where, for a given reproducing kernel Hilbert space $\mathcal{H}$ with reproducing kernel $k : \X \times \X \rightarrow \R$, the MMD between distributions $\P$ and $\Q$ on $\X$ is defined as
\begin{align*}
    \operatorname{MMD}(\P , \Q) = \| \mu_\P - \mu_\Q \|_\H ,
\end{align*}
where the Bochner intergals $\mu_\P(\cdot) = \int_{\X} k(\cdot,x) \mathrm{d}\P(x)$ and $\mu_\Q(\cdot) = \int_{\X} k(\cdot,x) \mathrm{d}\Q(x)$ are the \textit{kernel mean embeddings} of $\P$ and $\Q$ in $\H$.
The authors prove a generalisation bound for MMD-Bayes \citep[][Theorem 1]{Cherief-Abdellatif2019}, which they interpret as showing ``the MMD-Bayes posterior distribution is
robust to misspecification''.
The authors do not recommend a default choice of $\beta$ in the main text\footnote{\color{black} \citet{Cherief-Abdellatif2019} absorbed the $n$ factor in \eqref{eq: mmd bayes def} into their definition of $\beta$, but for convenience of the reader we have adjusted the presentation of MMD-Bayes to match that used for KSD-Bayes in the main text.}, but in private correspondence they recommend $\beta = O(1)$, and we use $\beta = 1$ as a default.
The kernel $k(x,y) = \exp(-\|x-y\|_2^2/d)$ was used in our experiment, following Appendix F in \citet{Cherief-Abdellatif2019}.

For the normal location model of \Cref{sec:experiment-nl} we can compute the kernel mean embeddings $\mu_{\P_\theta}(x) = \sqrt{\frac{1}{3}} \exp(-\frac{1}{3}(x - \theta)^2)$, $\mu_{\P_n}(x) = \frac{1}{n} \exp(- (x - x_i)^2)$, obtaining an overall expression for the MMD:
\begin{align*}
    \operatorname{MMD}(\P_\theta , \P_n)^2 = \frac{1}{3} \exp\left( - \frac{\theta^2}{6} \right) - \frac{2}{n} \sum_{i=1}^n \sqrt{\frac{1}{3}} \exp\left( - \frac{(\theta - x_i)^2}{3} \right) + \frac{1}{n^2} \sum_{i,j = 1}^n \exp\left( - (x_i-x_j)^2 \right)
\end{align*}
The un-normalised density associated with this generalised posterior can be pointwise evaluated; we do this over a fine grid to approximate the normalisation constant in the experiments that we report.

\paragraph{Results}

The experiment of \Cref{sec:experiment-nl} was conducted using the power posterior and MMD-Bayes methods just described, with results shown in \Cref{fig: other methods}.
Power posteriors exhibited similar performance to (non-robust) KSD-Bayes (i.e. with $M \equiv 1$; see \Cref{fig:normal location model} in the main text), and was therefore less robust to contamination compared with robust KSD-Bayes (i.e. with $M(x) = (1+x^2)^{-1/2}$).
MMD-Bayes generalised posteriors provided similar performance to robust KSD-Bayes in this experiment, albeit exhibiting greater spread.
The spread of the MMD-Bayes generalised posterior might be improved if a data-adaptive learning rate $\beta$ is used, but such an approach was not proposed in \citet{Cherief-Abdellatif2019}.

\begin{figure}[t!]
\centering
\includegraphics[width = \textwidth,clip,trim = 0cm 10.2cm 0.5cm 10cm]{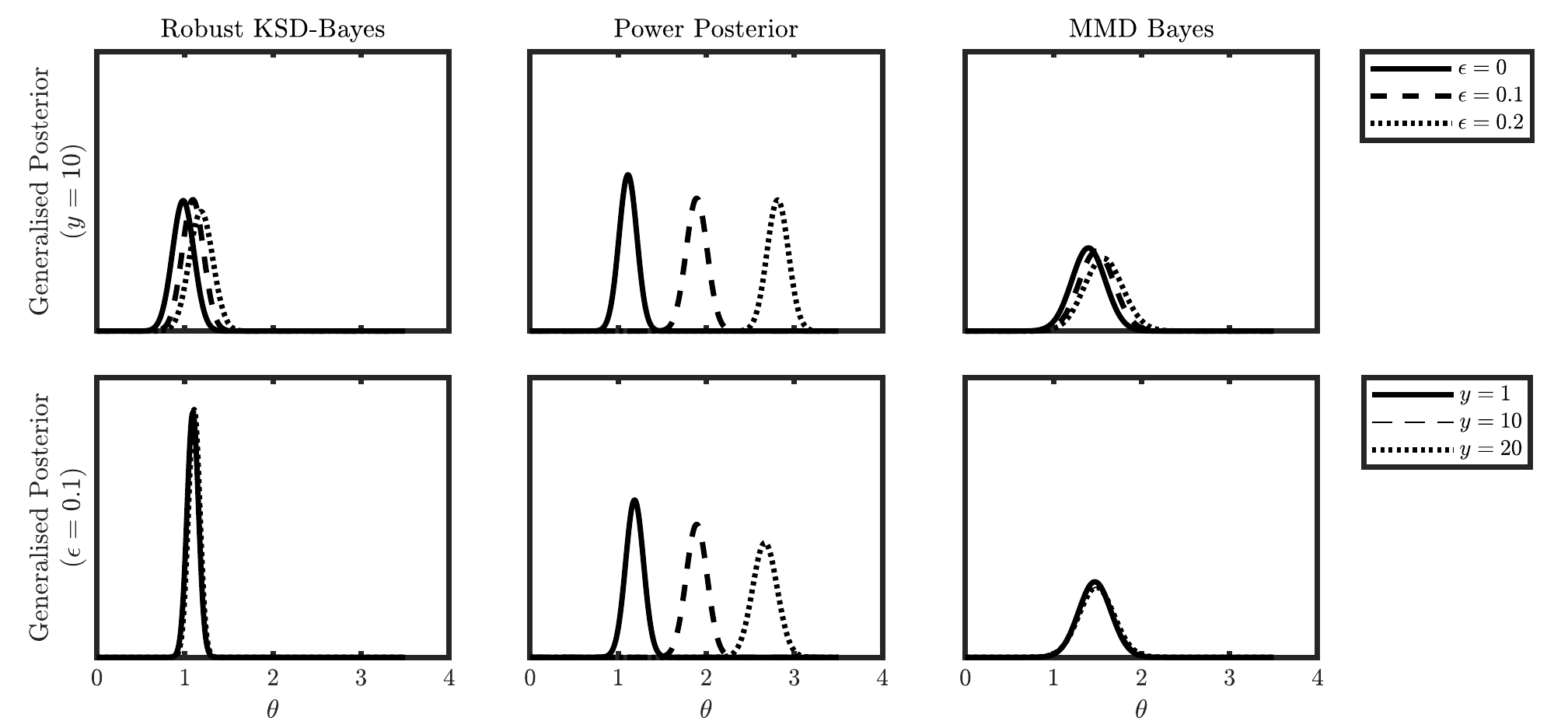}
\caption{\textcolor{black}{
Comparison with robust generalised Bayesian procedures:
Robust KSD-Bayes (this paper), \textit{power posterior} \citep{holmes2017assigning} and \textit{MMD-Bayes} \citep{Cherief-Abdellatif2019} approaches are considered in the context of the normal location model in \Cref{sec:experiment-nl}.
The true parameter value is $\theta = 1$, while a proportion $\epsilon$ of the data were contaminated by noise of the form $\mathcal{N}(y,1)$.
In the top row $y=10$ is fixed and $\epsilon \in \{0,0.1,0.2\}$ are considered, while in the bottom row $\epsilon = 0.1$ is fixed and $y \in \{1,10,20\}$ are considered.
}}
\label{fig: other methods}
\end{figure}

\subsection{Application to Discrete Data}
\label{subsec: other-space}

This section illustrates how KSD-Bayes may be applied to an intractable discrete-space model; note that the theoretical results in \Cref{sec:prop_KSD} and \Cref{sec:pc_bvm} cover both the discrete and continuous data context.
For demonstration purposes we consider a simple Ising model $\P_\theta$ on a vectorised $10 \times 10$ lattice $\X = \{-1,1\}^{100}$, with a \textit{temperature} parameter $\theta \in (0, \infty)$, whose density is
\begin{align}
	p_\theta(x) & \propto \exp\Big( \theta^{-1} \sum_{(i,j) \in E} x_{(i)} x_{(j)} \Big), \label{eq:disc_model}
\end{align}
where $x = \{ x_{(1)}, \dots, x_{(100)} \}$ and $E$ is a index set containing all pairs of adjacent states in the lattice $\mathcal{X}$.
The classical Ising model describes the statistical mechanics of molecular magnetic dipoles, with $\theta$ controlling the intensity of interaction between each adjacent magnetic dipole.
To construct a Stein operator in this setting we follow \cite{Yang2018}, defining the \emph{difference} operators $\nabla^+$ and $\nabla^-$ for a function $h: \{-1,1\}^{d} \to \R$ as
\begin{align*}
	\nabla^+ h(x) = \left[ \begin{array}{c}
		h(x^{+,1}) - h(x)\\
		\vdots\\
		h(x^{+,d}) - h(x)\\
	\end{array} \right] \quad \text{ and } \quad \nabla^- h(x) = \left[ \begin{array}{c}
		h(x^{-,1}) - h(x) \\
		\vdots\\
		h(x^{-,d}) - h(x)\\
	\end{array} \right] 
\end{align*}
where $x^{(+,i)}$ and $x^{(-,i)}$ are vectors whose $i$-th coordinate is $1$ if $x_{(i)} = -1$ and $-1$ if $x_{(i)} = 1$, with all other coordinates identical to their values in $x$.
The difference operators can be extend to act element-wise on vector-valued functions $h: \{0,1\}^{d} \to \R^d$, so that $\nabla^+ h(x)$ and $\nabla^- h(x)$ are $d \times d$ matrices whose $i$-th columns are given, respectively, by $\nabla^+ h_i(x)$ and $\nabla^- h_i(x)$.
Further, for a vector-valued function $h: \{0,1\}^{d} \to \R^d$ we let $\nabla^- \cdot h(x) = \sum_{i=1}^{d} h_i(x^{(-,i)}) - h_i(x) = \operatorname{Tr}( \nabla^- h(x) ) \in \R$.
The operator $\nabla^- \cdot$ will be applied to a matrix-valued kernel $K$; in the same manner as the divergence operator in continuous domain, $\nabla^- \cdot h(x)$ takes a value in $\R^d$ for a matrix-valued function $h: \{0,1\}^{d} \to \R^{d \times d}$ where $[ \nabla^- \cdot h(x) ]_i = \sum_{j=1}^{d} h_{ij}(x^{(-,j)}) - h_{ij}(x)$.
The Stein operator $\S_{\P_\theta}$ we consider in this example is as follows:
\begin{align*}
	\S_{\P_\theta} h (x) = \frac{\nabla^+ p_\theta(x)}{p_\theta(x)} \cdot h(x) + \nabla^- \cdot h(x) 
\end{align*}
For an empirical distribution $\P_n$ associated to a dataset $\{ x_i \}_{i=1}^{n}$, the corresponding $\operatorname{KSD}$ i.e. $\operatorname{KSD}^2(\P_\theta, \P_n) = (1 / n^2) \sum_{i=1}^{n} \sum_{j=1}^{n} \S_{\P_\theta} \S_{\P_\theta} K(x_i, x_j)$ is based on
\begin{multline*}
	\S_{\P_\theta} \S_{\P_\theta} K(x, x') =  \frac{\nabla^+ p_\theta(x)}{p_\theta(x)} \cdot K(x, x')  \frac{\nabla^+ p_\theta(x')}{p_\theta(x')} + \operatorname{tr}\big( \nabla^-_{x} \cdot \nabla^-_{x'} \cdot K(x, x') \big) \\
	+ \frac{\nabla^+ p_\theta(x)}{p_\theta(x)} \cdot \big( \nabla^-_{x'} \cdot K(x, x') \big) + \big( \nabla^-_{x} \cdot K(x, x') \big) \cdot \frac{\nabla^+ p_\theta(x')}{p_\theta(x' )} .
\end{multline*}
where $\nabla^-_{x'}$ denotes an action of the operator $\nabla^- \cdot$ with respect to the argument $x'$ and likewise for $\nabla^-_{x}$.
See \citet{Yang2018} for further detail.
The availability of a discrete KSD enables the application of our KSD-Bayes methodology to the Ising model.

\begin{figure}[t!]
	\centering
	\hfill
	\includegraphics[height=100pt]{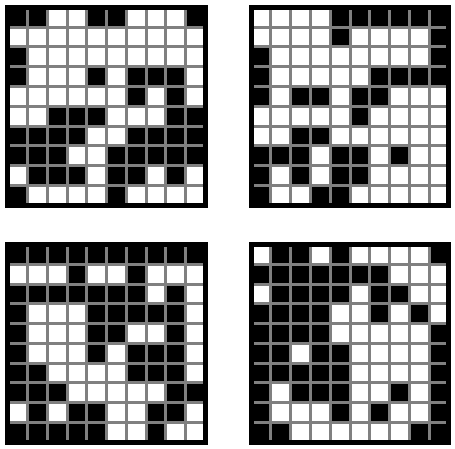}
	\hfill
	\hfill
	\includegraphics[height=100pt]{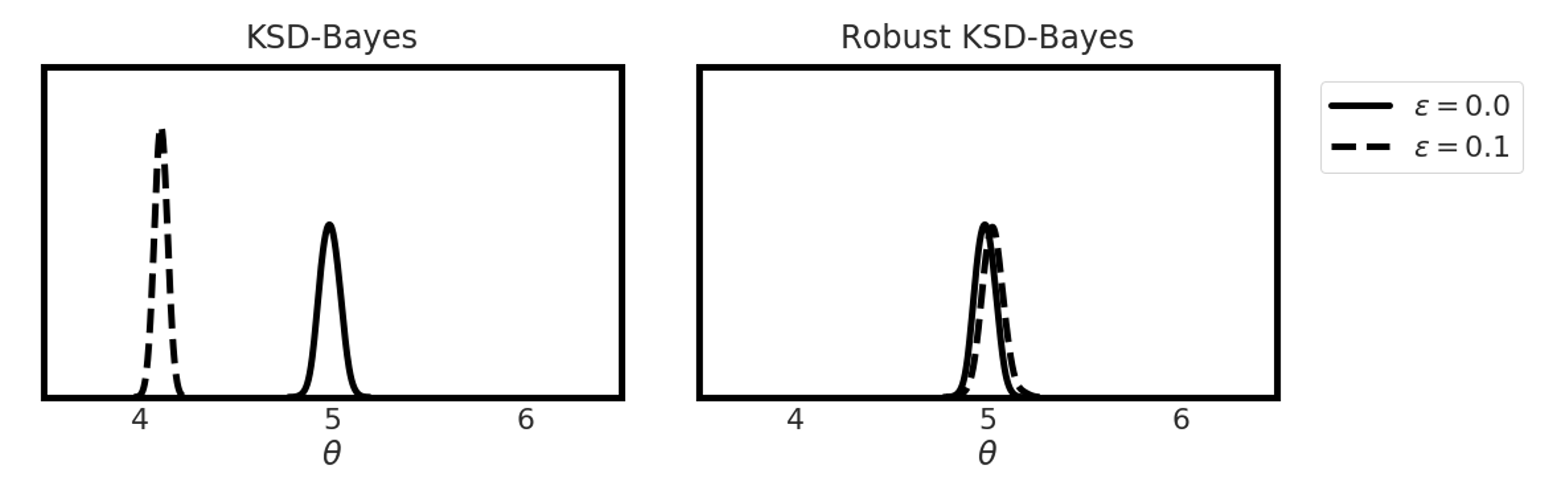}
	\caption{\textcolor{black}{Application to discrete data: 
	Left: Samples from the Ising model \eqref{eq:disc_model} at temperature $\theta=5$, defined on a $10 \times 10$ lattice, where white cells represent $+1$ and black cells represent $-1$. 
	Right: The KSD-Bayes and robust KSD-Bayes generalised posteriors obtained from an uncontaminated ($\epsilon=0$) and contaminated ($\epsilon=0.1$) dataset.}}
	\label{fig:ising_model}
\end{figure}

As an empirical demonstration, we consider the same setting as \cite{Yang2018}; we approximately draw 1000 samples $\{ x_i \}_{i=1}^{1000}$ from $\P_\theta$ with $\theta = 5$ using thinned MCMC (see \Cref{fig:ising_model}, left).
The prior $\pi$ was taken to be a half-normal distribution over $\Theta = [0, \infty)$ with the scale hyper-parameter $3.0$.
Our focus is on robustness of the generalised posterior, and for the contamination model we replaced a proportion $\epsilon$ of the data with the vector $( 1, 1, \cdots, 1 )$, corresponding to the all-white lattice (a configuration more typically observed at low values of the temperature parameter $\theta$).
For KSD-Bayes, the kernel in \cite{Yang2018} was used in combination with a weighting function $M(x)$, i.e. our kernel is
\begin{align*}
	K(x, x') = M(x) \exp\left( - \frac{1}{2d} \sum_{i=1}^d | x_{(i)} - x_{(i)}' | \right) M(x)^\top
\end{align*}
where $d = 100$.
For the weighting function $M(x)$, we examined two choices: (i) $M(x) = I_d$ and (ii) $M(x) = \mathbbm{1}\{ | \sum_i x_{(i)} | \leq 90 \} \times I_d$.
The kernel in case (i) coincides with the one used in \cite{Yang2018}.
The weighting function in case (ii) is designed to limit the influence of data whose coordinates are almost all equal.
The generalised posterior in cases (i) and (ii) will be called, respectively, the KSD-Bayes posterior and the robust KSD-Bayes posterior.
The KSD-Bayes and robust KSD-Bayes posteriors were approximated using Hamiltonian Monte Carlo.
For simplicity, the weight $\beta = 1$ was fixed in this experiment.
Results in \Cref{fig:ising_model} (right) present the generalised posteriors for a uncontaminated ($\epsilon = 0.0$) and contaminated ($\epsilon = 0.1$) dataset.
It can be observed that both the KSD-Bayes and robust KSD-Bayes posteriors place their mass near the true parameter $\theta = 5$ when there is no contamination $\epsilon = 0$.
Furthermore, when contamination is present, the robust KSD-Bayes posterior is not strongly affected.

The computational challenge associated with discrete intractable likelihoods, as exemplified by the Ising model, continues to attract attention \citep[e.g.][]{kim2021variational}.
Perhaps as a consequence, there has been little consideration of robust estimation in this context.
The nature of data contamination in discrete spaces, and the extent to which this can be mitigated by careful selection of the weighting function in KSD-Bayes, requires further careful examination and will be addressed in a sequel.
However, these preliminary results are an encouraging proof-of-concept.

\end{document}